\tikzstyle{every picture} = [>=latex]
\def\wreach#1#2#3{{\text{WReach}_{#1}[#2,\preceq_{#2},#3]}}
\def\ca#1{{\cal#1}}
\def\Val(#1){\mathop{\sf Val}(#1)}
\newtheorem{question}[theorem]{Question}
\theoremstyle{remark}
\newtheorem{repeatclaim@}{Claim}
\newtheorem{repeatcase@}{Case}
\newtheorem{repeattheorem@}{Theorem}
\newtheorem{case}{Case}
\theoremstyle{definition}
\newenvironment{subproof}{%
  \begin{proof}[\sf Subproof]%
}{%
  \end{proof}%
}
\title{Transductions of Graph Classes Admitting Product~Structure
}
\titlerunning{Transductions of Classes Admitting Product~Structure}
\author{Petr Hlin{\v e}n\'y}{Masaryk University, Brno,
  Czech republic}{hlineny@fi.muni.cz}{https://orcid.org/0000-0003-2125-1514}{}
\author{Jan Jedelsk\'y}{Masaryk University, Brno,
  Czech republic}{484988@mail.muni.cz}{https://orcid.org/0000-0001-9585-2553}{}
\authorrunning{P.\ Hlin\v{e}n\'y and J.~Jedelsk\'y}
\keywords{product structure, hereditary class, clique-width, twin-width}
\begin{document}

\maketitle

\begin{abstract}\,
In a quest to thoroughly understand the first-order transduction hierarchy of
hereditary graph classes, some questions in particular stand out;
such as, what properties hold for graph classes that are first-order
transductions of planar graphs (and of similar classes)?
When addressing this (so-far wide open) question, we turn to the concept of
a \emph{product structure} -- being a subgraph of the strong product of a
path and a graph of bounded tree-width, introduced by Dujmovi\'c et al.~[JACM~2020].
Namely, we prove that any graph class which is a first-order transduction of a
class admitting such product structure, up to perturbations also meets a
structural description generalizing the concept of a product structure in a
dense hereditary
way\,---\,the latter concept being introduced just recently by authors
under the name of \emph{$\ca H$-clique-width} [MFCS 2024].

Using this characterization, we show that the class of the 3D grids, as
well as a class of certain modifications of 2D grids,
are not first-order transducible from classes admitting a product structure,
and in particular not from the class of planar graphs.
\end{abstract}

\section{Introduction}

A recent research trend on the boundary between structural graph theory and
graph logic is to study \emph{first-order} (FO) trans\-ductions between (hereditary)
graph classes,
e.g.~\cite{DBLP:journals/corr/abs-2208-14412,DBLP:journals/corr/abs-2501.04166}.
This research is motivated by structural questions about dense
hereditary graph classes, and by the significant goal to resolve computational
complexity of the FO model checking problem on dense graph classes.

We refer to \Cref{sec:prelim} for the definitions.
Briefly, a simple first-order \emph{interpretation} is given by a binary FO formula
$\xi$ over graphs, and for a graph $G$ the result of the interpretation is
the graph $\xi(G)$ on the same vertex set and the edge relation (assumed
symmetric) determined by $uv\in E(\xi(G))$ $\iff$ $G\models\xi(u,v)$.
In a \emph{first-order transduction} $\tau$, one can additionally duplicate the vertices of
$G$ and assign arbitrary vertex colors before applying $\xi$, and then take
an induced subgraph of the result (in particular, $\tau(G)$ is actually a
set of graphs -- unlike $\xi(G)$).
These notions are naturally extended to graph classes; here $\tau(\ca C)$
denotes the class of all graphs obtained by applying $\tau$ to (colored)
graphs from a class~$\ca C$.

All transductions in this paper are first-order.
We say that a graph class $\ca D$ is \emph{transducible} from a class 
$\ca C$ if there is a transduction $\tau$ such that $\ca D\subseteq\tau(\ca C)$.
We interpret this relation as that $\ca D$ is not ``logically richer''
than~$\ca C$ (although $\ca D$ may be combinatorially a lot more complicated
class than~$\ca C$), and the aim is that if we understand and can handle the
class $\ca C$, we would be able to do the same with~$\ca D$.

There are two core and mutually related questions in the area;
for a class $\ca C$, describe structural properties satisfied by
(all) transductions of $\ca C$, and
for classes $\ca C,\ca D$, decide whether $\ca D$ is transducible from~$\ca C$.
For an illustration, let $\ca T$ be the class of trees,
then all transductions of $\ca T$ are of bounded clique-width, but not every
graph class of bounded clique-width is transducible from $\ca T$
(this would be true considering richer MSO transductions).
In fact, Ne\v{s}et\v{r}il et al.~\cite{DBLP:conf/soda/NesetrilMPRS21} 
showed that a graph class $\ca C$ is a first-order transduction of a class of
graphs of bounded tree-width, if and only if $\ca C$ is of bounded clique-width
and excludes some half-graph as a bi-induced subgraph
(we say that it has \emph{stable} edge relation).

Results of this kind can be useful also in the FO model checking problem.
Consider for example the following result of Gajarsk\'y et
al.~\cite{DBLP:journals/tocl/GajarskyHOLR20};
a graph class $\ca D$ is a transduction of a graph class of bounded maximum
degree, if and only if $\ca D$ is near-uniform.
Since, luckily, the claimed transduction can be efficiently computed, the
characterization \cite{DBLP:journals/tocl/GajarskyHOLR20} readily gives an
efficient FO model checking algorithm for each near-uniform graph class.
In general, when a graph class $\ca C$ has an efficient FO model checking
algorithm, and a class $\ca D$ is transducible from $\ca C$ (say,
via~$\tau$), there are still two obstacles for FO model checking on~$\ca D$\,:
First, for $G\in\ca D$ one has to be able to compute $G_1\in\ca C$ such that
$G\in\tau(G_1)$, and second, one also has to guess the coloring of $G_1$ used in
$\tau$ to obtain~$G$.

On the other hand, it is also interesting and useful to understand why a
graph class $\ca D$ is \emph{not} transducible from a graph class $\ca C$.
While on the positive side (`is transducible'), one can simply give a
transduction and guess a coloring, on the negative side the task is much
more difficult---one has to be able to find, for every transduction $\tau$,
a graph in $\ca D\setminus\tau(\ca C)$
and that is hard to argue.
In other words, in $\ca D$, one needs to understand the `obstacles' which are
not transducible from~$\ca C$.

For example, long paths are such obstructions for transducibility from a
class of bounded-height trees.
This is in fact a tight characterization;
Ossona de Mendez, Pilipczuk and Siebertz~\cite{DBLP:journals/ejc/MendezPS25}
proved that a graph class $\ca C$ is of bounded shrub-depth (that is, transducible
from a class of bounded-height trees), if and only if $\ca C$ does not
transduce the class of all paths.
Other, more complex examples of how understanding the obstructions for
transducibility is used to obtain strong results are contained, e.g., in
Bonnet et al.~\cite{DBLP:journals/jacm/BonnetGMSTT24} or
in Dreier, M{\"{a}}hlmann and Torunczyk~\cite{DBLP:conf/stoc/DreierMT24}.

However, except sporadic examples of success, we currently lack good tools
to prove non-transducibility.
A typical approach (to proving that a class $\ca D$ is not transducible
from~$\ca C$) suggests a structural property $\Pi$ satisfied by $\ca C$ and
preserved under transductions, and then finds graphs in $\ca D$ not
satisfying~$\Pi$.
The rather few examples of properties preserved under transductions include,
e.g.; monadic stability~\cite{SHELAHstabil}, 
bounded clique-width~\cite{DBLP:journals/mst/CourcelleMR00},
each value of shrub-depth~\cite{DBLP:journals/lmcs/GanianHNOM19},
bounded twin-width~\cite{DBLP:journals/jacm/BonnetKTW22}
and bounded flip-width~\cite{DBLP:conf/focs/Torunczyk23}.

\subsection{Our results}

Our research is inspired by the task to characterize classes which are
transducible from the class of planar graphs.
While planar graphs are usually perceived as combinatorially and
algorithmically ``nice'', their transductions (including, e.g., the
well-studied classes of $k$-planar and fan-planar graphs, or map graphs) 
can be combinatorially very complicated.
(For example, testing 1-planarity is NP-hard already for planar graphs with
one added edge~\cite{DBLP:journals/siamcomp/CabelloM13}.)
It is an open question whether the class of toroidal graphs (embeddable on the torus)
is transducible from planar graphs.

Not much was known prior to this paper about obstructions to transducibility 
from planar graphs except that the class of planar graphs is monadically stable and, more
recently, that such transductions are of bounded twin-width~\cite{DBLP:journals/jacm/BonnetKTW22}.
One can also prove, using known logical tools and the fact that planar
graphs are of bounded local tree-width, that classes transducible from
planar graphs are of bounded local clique-width up to perturbations.

On the other hand, several major breakthrough results about planar graphs in the combinatorics
domain, such as getting an upper bound on the queue number of planar graphs,
have been recently obtained using a new tool of a \emph{product
structure}~\cite{DBLP:journals/jacm/DujmovicJMMUW20}.
In the original basic setting, a class $\ca C$ \emph{admits a product structure} if, 
for some $k$, every graph in $\ca C$ is a subgraph of the strong product
$\boxtimes$ (\Cref{fig:strongprod}) of a path and a graph of tree-width~$\leq k$.
See \Cref{def:prodstruct}.
Planar and surface-embeddable graphs, for instance, do
admit product structure (\Cref{thm:admitprod}).

Our main result claims that transductions of classes admitting a
product structure essentially (up to bounded perturbations) follow a similar kind of
product structure (here briefly reformulated in view of \Cref{def:Hexpression} and \Cref{thm:hcwproduct}):
\begin{itemize}
\item (\Cref{thm:transd-admit-prod-str-paths})
  If $\ca C$ is a class admitting a product structure and $\tau$ is a
  transduction defining the class of simple graphs $\tau(\ca C)$,
  then every graph in $\tau(\ca C)$ is a
  bounded perturbation of an induced subgraph of the strong product
  of a path and a graph from a class of bounded clique-width.
\end{itemize}
We actually prove, in \Cref{thm:transd-admit-prod-str-bound-degree},
a generalized version in which paths (of the assumed product structure)
are replaced by graphs from any fixed class of bounded maximum degree.

We also give the converse direction of the main result, in a modification
consisting in strengthening the assumption of bounded clique-width to
bounded stable clique-width:
\begin{itemize}
\item (\Cref{thm:stablecw-back})
  Assume that every graph in a class $\ca C$ is a bounded perturbation of an
  induced subgraph of the strong product of a path
  and a graph from a class of bounded stable clique-width.
  Then, $\ca C$ is transducible from a class admitting a product structure.
\end{itemize}
Again, this is also generalized to any fixed class of bounded maximum degree
in place of the class of paths.

Finally, we demonstrate the strength of
\Cref{thm:transd-admit-prod-str-paths} by giving straightforward
proofs of the following consequences:
\begin{enumerate}[label={\Roman*.}]
\item\label{it:3Dclaim} (\Cref{cor:3Dgridsnot}) The class of all 3D grids, and
\item (\Cref{cor:2Dplusnot}) the class of certain non-local modifications of 2D grids
-- by adding a bunch of apex vertices adjacent to mutually distant nodes of the grid,
\end{enumerate}
are not transducible from any class admitting product structure, and in particular
not from the classes of planar graphs and graphs embeddable on a fixed surface.

At the same time and independently of us,
Gajarsk{\'{y}}, Pilipczuk and Pokr{\'{y}}vka
proved point \ref{it:3Dclaim} in \cite{DBLP:journals/corr/abs-2501-07558}.

\subsection{Proof outline}

The core of our results is in the following technical lemma:
\begin{itemize}
\item (\Cref{lem:transductions})
Let $Q$ be a graph of maximum degree~$d$, 
$Q^\circ_r$ denote the reflexive $r$-th power of~$Q$,
and $M$ be a graph of tree-width~$k$.
Every strongly $r$-local interpretation $\xi$
of any 2-colored spanning subgraph of
the product $Q\boxtimes M$ (cf.~\Cref{def:prodstruct})
has $\{Q^\circ_r\}$-clique-width (\Cref{def:Hexpression})
bounded in terms of $r,d,k$ and the rank of~$\xi$.
\end{itemize}

We note that, the requirement of $\ca Q$ having
bounded degree is necessary and best possible, since
Hlin\v{e}n\'y and Jedelsk\'y~\cite{DBLP:conf/mfcs/HlinenyJ24}
(in the extended arXiv version)
proved that already the class of strong products of two stars 
is monadically independent, that is, by suitably coloring
it, one can transduce all graphs from it.

\smallskip
We informally describe the proof idea of \Cref{lem:transductions}:

Let $G\subseteq Q \boxtimes M$, where $Q$ and $M$ are as above,
such that $V(G) = V(Q \boxtimes M)$.
Let $\xi$ be a strongly local formula.
The idea of the proof of \Cref{lem:transductions} 
is to recursively, in a bottom-up fashion, build
a $(Q, f(\ldots))$-expression (\Cref{def:Hexpression})
for the subgraphs $G'$ of $\xi(G)$ induced by $V(Q \boxtimes M')$,
where~$M'\subseteq M$ runs over subgraphs defined by subtrees in the
tree decomposition of~$M$.
The color of each vertex $v$ of $G'$ in the expression
encodes (via \Cref{def:exprcolors})
its type to all vertices outside $G'$ that
are reachable from $v$ by a path of bounded~length.

Using the information stored in the vertex colors
and $r$-locality of $\xi$, we decide the existence of 
an edge $uv$ in $\xi(G)$ depending only on the colors of $u$ and $v$
and intersection of $r$-balls around $u$ and $v$.
We prove correctness of this decision in
Claims~\ref{clm:alledges}~and~\ref{claim:no-extra-edges}.
Thus, the color of each vertex $v$ of $G'$
encodes its $r$-neighborhood outside $G'$.
Encoding the $r$-neighborhoods naively would
result in an unbounded number of colors. 
There are two core ideas enabling us to reuse colors
(as embedded in \Cref{def:exprcolors} and wrapped-up by \Cref{claim:count-colors}):
\begin{itemize}
  \item Strong locality of $\xi$ guarantees that
  we only ever create edges between vertices with intersecting
  $r$-balls, and so we can use the parameter graph $Q^\circ_r$
  to avoid creating edges at large $Q$-distance in the product~$Q\boxtimes M$.
  Consequently, we may ``remember'' the vertices of $Q$ by only their colors
  in a proper coloring of the $3r$-power of $Q$.
  \item We do not really remember all vertices of $V(G)$
  in the intersection of the $r$-neighborhoods of every pair
  of already-created vertices $u$ and $v$ of~$G'$.
  In fact, it suffices to remember the vertices
  of some $u$--$v$ separator of the union of
  the $r'$-neighborhoods of $u$ and of $v$ for some
  $r'$ depending on $r$ and the rank of $\xi$.
  We use weak colorings to bound the number of vertices 
  $m \in V(M) \setminus V(M')$ such that we have to
  encode some information concerning them in the color of a vertex $v$
  of~$V(G')$.
\end{itemize}

\subsection{Paper structure}

\Cref{sec:prelim} surveys the main related concepts and definitions.
\Cref{sec:transd-prod-str} proves \Cref{lem:transductions} and concludes
with the main results,
\Cref{thm:transd-admit-prod-str-bound-degree} and
\Cref{thm:transd-admit-prod-str-paths}.
\Cref{sec:stabilityback} shows the stable converse direction of the main
results, and \Cref{sec:3d-grids} proves the mentioned consequences on
non-transducibility of certain grids.
\Cref{sec:conclu} adds concluding remarks.

\section{Preliminaries}\label{sec:prelim}

\subsection{Graphs, decompositions, and weak colorings}
We write $[k]$ to denote the set $\{1,2,\ldots, k\}$.
In this paper, unless stated otherwise, \emph{graphs} are simple and finite,
\emph{loop graphs} additionally allow loops, and \emph{reflexive graphs}
are those with a loop incident to every vertex.
Let $V(G)$ denote the vertex set of a graph $G$ and $E(G)$ denote its edge
set, and let $G\subseteq H$ mean that $G$ is a subgraph of~$H$
and $G\subseteq_i H$ that $G$ is an induced subgraph of~$H$. 
A graph class $\ca C$ is \emph{hereditary} if $\ca C$ is closed under taking
induced subgraphs.

Let an \emph{$a\times b$-grid} (\emph{$a\times b\times c$-grid}) 
be the graph which is the Cartesian product%
\footnote{Recall that in the Cartesian product of graphs, two tuples are
adjacent iff they agree in all coordinates except one in which they form an
edge (of the respective factor).}
of two paths on $a$ and $b$
vertices (three paths on $a$ and $b$ and $c$ vertices).
We also shortly call these graphs 2D and 3D grids.

For an integer $k\geq1$, let the \emph{$k$-th power} of a graph $G$ be the
graph on the same vertex set and edges joining exactly the pairs of distinct
vertices which are at distance at most $k$ in~$G$.

A graph $G$ is a {\em half-graph} of order $n$ if $G$ is a bipartite graph with the bipartition
$\{u_1,\ldots,u_n\}$ and $\{v_1,\ldots,v_n\}$, such that $u_iv_j\in E(G)$ if and only if~$i\leq j$.
A bipartite graph $G$ with a fixed bipartition $V(G)=A\cup B$ is a {\em bi-induced subgraph} of a graph $H$, 
if $G\subseteq H$ such that every edge of $H$ with one end in $A$ and the other end in $B$ is present in~$G$.

A \emph{tree decomposition} of a graph $G$ is a pair $(T, \ca X)$,
where $T$ is a tree and $\ca X = (X_t)_{t \in T}$ is a collection
of \emph{bags} $X_t \subseteq V(G)$ indexed by the nodes of $T$ satisfying
the following properties:
\begin{itemize}
  \item $\bigcup_{t \in V(T)} X_t = V(G)$
  \item For every edge $uv \in E(G)$, there is a bag
  $X_t$ where $t \in V(T)$ containing both $u$ and $v$.
  \item For every vertex $v \in V(G)$, the set of
  the nodes $t$ of $T$ such that $v \in X_t$ induces a 
  subtree of $T$.
  This is called the \emph{interpolation property}.
\end{itemize}
\emph{Width} of a tree decomposition $(T, \ca X)$ is
$\max_{t \in V(T)} |X_t| - 1$. The tree-width of a graph $G$ is the
minimum width over all tree decompositions of $G$.
We say that a tree decomposition $(T, \ca X)$ is \emph{smooth}
if, for every edge $t_1t_2 \in E(T)$ of the decomposition
tree $T$, we have that ${|X_{t_1} \setminus X_{t_2}| \le 1}$ and
${|X_{t_2} \setminus X_{t_1}| \le 1}$. Observe that
every tree decomposition can be modified to become
smooth without increasing its width.

A graph class $\ca C$ is said to be of \emph{bounded local tree-width}
if there exists an integer function $g$ such that, for every $G\in\ca C$,
$v\in V(G)$ and every $r$, the tree-width of the subgraph induced by the
neighborhood of $v$ in $G$ up to distance $r$ is at most~$g(r)$.

Given a pair of graphs $G$ and $H$, their \emph{strong product}
$G \boxtimes H$ is a graph on the vertex set
$V(G \boxtimes H) = V(G) \times V(H) = \{[g, h] |\> g \in V(G), h \in V(H)\}$
such that $\{[g_1, h_1],[g_2, h_2]\}$ is an edge of $G \boxtimes H$
if and only if one of the following is true;
\begin{itemize}
\item $g_1=g_2$ and $h_1h_2\in E(H)$, or
\item $g_1g_2\in E(G)$ and $h_1=h_2$, or
\item $g_1g_2\in E(G)$ and $h_1h_2 \in E(H)$.
\end{itemize}
See \Cref{fig:strongprod} for an illustration.
Notice that throughout the paper we use to refer to the vertices of the product $G \boxtimes H$
as to pairs $[g,h]$ where $g \in V(G)$ and $h \in V(H)$.

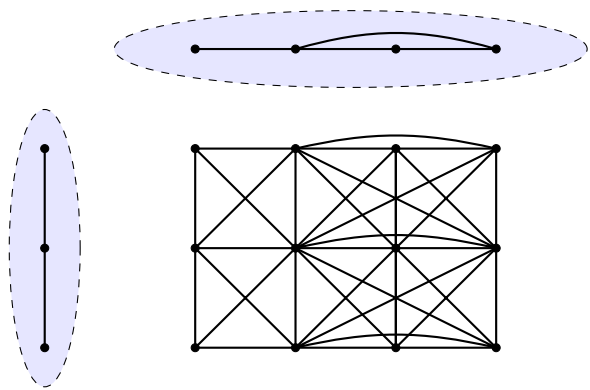
\begin{figure}[t]
$$
\begin{tikzpicture}[scale=0.66]
  \tikzstyle{every node}=[draw, black, shape=circle, minimum size=3pt,inner sep=0pt, fill=none]
  \tikzstyle{every path}=[draw, color=black!30!white]
  \draw (0,1) node[style=ellipse,dashed,fill=blue!10, inner xsep=3.3mm,inner ysep=13mm] {};
  \draw (6.1,5) node[style=ellipse,dashed,fill=blue!10,inner xsep=22mm,inner ysep=3.6mm] {};
  \foreach \x in {0,3,5,7,9}  \foreach \y in {5,3,1,-1} {
    \draw (\x,\y) node[fill=black] {};
  }
  \tikzstyle{every node}=[draw, white, fill=white, shape=circle, minimum size=3pt]
  \draw (0,5) node {};
  \tikzstyle{every path}=[draw, black, thick]
  \draw (0,-1)--(0,3);
  \draw (3,5)--(9,5) (5,5) to[bend left=16] (9,5);
  \draw (3,1)--(3,3) (5,1)--(5,3) (3,3)--(5,3) (3,1)--(5,1) (3,3)--(5,1) (3,1)--(5,3);
  \draw (3,-1)--(3,1) (5,-1)--(5,1) (3,-1)--(5,-1) (3,-1)--(5,1) (3,1)--(5,-1);
  \draw (7,1)--(7,3) (7,3)--(5,3) (7,1)--(5,1) (7,3)--(5,1) (7,1)--(5,3);
  \draw (7,-1)--(7,1) (7,-1)--(5,-1) (7,-1)--(5,1) (7,1)--(5,-1);
  \draw (7,1)--(7,3) (7,3)--(9,3) (7,1)--(9,1) (7,3)--(9,1) (7,1)--(9,3);
  \draw (7,-1)--(7,1) (7,-1)--(9,-1) (7,-1)--(9,1) (7,1)--(9,-1);
  \draw (9,1)--(9,3) (9,3) to[bend right=13] (5,3) (9,1) to[bend right=13] (5,1) (9,3)--(5,1) (9,1)--(5,3);
  \draw (9,-1)--(9,1) (9,-1) to[bend right=13] (5,-1) (9,-1)--(5,1) (9,1)--(5,-1);
\end{tikzpicture} 
\vspace*{-2ex}$$
\caption{Illustrating the strong product $\boxtimes$ of the two shaded graphs.}
\label{fig:strongprod}
\end{figure}

\begin{definition}[Product structure]\label{def:prodstruct}
For any class $\ca Q$ of bounded-degree graphs,
we say that a graph class $\ca C$ \emph{admits $\ca Q$-product structure} if
there is a constant $k$ (depending on $\ca C$), such that for every graph $G \in \ca C$
there is a graph $Q\in\ca Q$ and a graph $M$ of tree-width at most~$k$ such that
$G\subseteq Q \boxtimes M$.
Specially, we say that a graph class $\ca C$ \emph{admits product structure} if
$\ca C$ admits $\ca Q$-product structure where $\ca Q$ is the class of paths.
(In the latter case, we sometimes add an adjective \emph{classical} to
such product structure with paths.)
\end{definition}
Notably, the following hold:
\begin{theorem}[{%
\cite{DBLP:journals/jacm/DujmovicJMMUW20,DBLP:journals/combinatorics/UeckerdtWY22,%
DBLP:journals/dmtcs/DistelHHW22,DBLP:journals/jctb/DujmovicMW23,DBLP:journals/corr/abs-2001-08860%
}}]\label{thm:admitprod}
Graph classes admitting (the classical) product structure include, e.g.:
\begin{itemize}
\item[--] planar graphs, and graphs embeddable on a fixed surface,
\item[--] graphs drawable in the plane or a fixed surface with bounded
number of crossings per edge ($k$-planar graphs),
\item[--] apex-minor-free graphs.
\end{itemize}
\end{theorem}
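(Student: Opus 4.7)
The plan is to treat the four listed cases by bootstrapping from planar graphs to the richer classes. For a planar graph $G$, I would fix an arbitrary root $r$ and take a BFS layering $L_0, L_1, \ldots$, so that the layer indices naturally define a host path $P$ whose $i$-th vertex represents layer~$L_i$. The key structural step is to construct a \emph{tripod partition} of a triangulation of $G$: after fixing a BFS spanning tree, group the vertices into parts, each of which is the union of at most three vertical root-to-leaf sub-paths of the BFS tree that together bound a common face. The essential lemma to establish is that (i) each tripod meets only a bounded number of consecutive layers, and (ii) the quotient graph obtained by contracting each tripod to a single vertex has bounded tree-width. Combining these two facts directly places $G$ as a subgraph of $P \boxtimes H$ for some $H$ with tree-width $O(1)$.

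For graphs embeddable on a fixed surface $\Sigma$ of Euler genus~$g$, I would reduce to the planar case by cutting $\Sigma$ along a system of $O(g)$ non-contractible simple closed curves; this opens up the embedding into a planar one at the cost of duplicating those vertices that meet the cuts. Applying the planar case to the resulting planar graph and then identifying the duplicated vertices increases the tree-width of the second factor only by a function of $g$, so the product structure survives.

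For $k$-planar graphs I would pass to the planarization obtained by replacing each crossing by a degree-$4$ dummy vertex, which is planar (or bounded-genus), and thus carries a product structure by the previous cases. Since every original edge is crossed at most $k$ times, every pair of neighbors in $G$ is still within bounded distance in the planarization, so $G$ embeds into $P \boxtimes H'$ where the extra crossing vertices are absorbed into a slightly larger bounded-tree-width factor. For apex-minor-free classes I would invoke the Robertson--Seymour decomposition: every such graph is obtained from graphs embedded in a fixed surface, with a bounded number of apex vertices, by clique-sums of bounded size. Applying the surface case to each piece and checking that clique-sums of bounded size (and a bounded number of apex vertices) preserve product structure, with only a controlled blow-up in the tree-width of the second factor, yields the claim.

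The hardest step, and in my view the technical heart of the entire theorem, is the planar base case, and specifically the existence of a tripod partition whose quotient has bounded tree-width while each tripod spans only a bounded window of BFS layers. The remaining extensions -- surfaces, $k$-planarity, apex-minor-free -- are essentially reductions to this base case with careful bookkeeping of how the tree-width constant degrades, and the main obstacle there is simply keeping the two factors disentangled (one providing the linear path-like skeleton, the other carrying the bounded tree-width) under each reduction.
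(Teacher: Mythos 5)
This is a citation theorem: the paper does not prove \Cref{thm:admitprod} but attributes it to the listed works, so the comparison is against the arguments in the cited literature (Dujmovi\'c et al.\ and follow-ups). Your outline does follow the broad architecture of those proofs (BFS layering, tripod partition of a triangulation, cutting surfaces, planarization for $k$-planar graphs, graph-minor structure for apex-minor-free classes), but there is a genuine error in the planar base case that would make the argument break down as written.

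Your key lemma (i), that \emph{each tripod meets only a bounded number of consecutive layers}, is false. Tripods are unions of at most three upward (towards the root) segments of BFS-tree branches, and these segments continue until they hit a previously chosen tripod; they can therefore span an unbounded range of BFS layers -- indeed the very first tripod typically reaches all the way to the root. The correct structural property is \emph{bounded layered width}: each tripod consists of at most three vertical paths, and a vertical BFS-tree path meets each layer in at most one vertex, so each tripod intersects each BFS layer in at most three vertices. This distinction matters because your claimed conclusion -- that bounded layer-span plus bounded quotient tree-width ``directly places $G$ as a subgraph of $P \boxtimes H$'' -- does not yield an injective embedding: several vertices of the same tripod sit in the same layer and would be mapped to the same pair $(\text{layer}, \text{tripod})$. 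With the layered-width-$3$ bound one obtains an injective embedding into $P \boxtimes H \boxtimes K_3$ by also recording the index within the $\le 3$ vertices of a tripod in a layer, and then one absorbs the $K_3$ into the second factor (since $\mathrm{tw}(H \boxtimes K_3) \le 3(\mathrm{tw}(H)+1)-1$). Without this extra ingredient the planar base case, and hence all the reductions to it, do not go through. The rest of your plan (surface cutting, planarization with shallow-minor absorption, clique-sums plus apices) is consistent with the approach in the cited works, modulo the usual bookkeeping.
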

Moreover, a mild extension of product structure (allowing for apex vertices
and clique-sums) holds for every proper minor-closed class
\cite{DBLP:journals/jacm/DujmovicJMMUW20}.

Finally, we recall the notion of weak coloring numbers.
Let $G$ be a graph and let $\preceq_G$ be a linear order of
its vertex set $V(G)$. We say that a vertex $u$
is \emph{weakly $r$-reachable (with respect to $\preceq_G$)}
from a vertex $v$
if there is a $u$--$v$ path $P$ of length at most $r$
such that, for all vertices $w \in V(P)$, we have $u \preceq_G w$. 
We denote by $\wreach{r}{G}{v}$
the set of all vertices weakly $r$-reachable from $u$.
The \emph{weak coloring number} of a graph $G$
is the minimum over all linear orders $\preceq_G$
of $\max_{v \in V(G)} \big|\wreach{r}{G}{v}\big|$.

\subsection{$\ca H$-clique-width}

We use the following new concept of Hlin\v{e}n\'y and 
Jedelsk\'y~\cite{DBLP:conf/mfcs/HlinenyJ24} extending the traditional
clique-width structural measure~\cite{DBLP:conf/gg/CourcelleER90}.
Let $H$ be a loop graph and let $k \ge 2$ be an integer.
Consider labeled graphs, where each vertex $v$ is assigned a label
$(h, c)$ such that $h \in V(H)$ is the \emph{parameter vertex} of $v$
and $c \in [k]$ is the \emph{color} of $v$.
\begin{definition}\label{def:Hexpression}
An \emph{$(H, k)$-expression} is an algebraic expression
composed of the following operations whose values are labeled graphs
as considered above:
\begin{enumerate}[label={\alph*)}]
  \item Given a \emph{parameter vertex} $h \in V(H)$ and a \emph{color}
  $c \in [k]$, create a one-vertex graph with the vertex labeled~$(h, c)$.
  \item Take the disjoint union of two labeled graphs.
  \item Given a pair of colors $c_1, c_2 \in [k]$, \emph{recolor}
  $c_1$ to $c_2$ in a labeled graph;
  that is, replace each label $(h, c_1)$ by label $(h, c_2)$.
  Notably, this operation only modifies the colors, it
  does not modify the parameter vertices.
  \item Given a pair of colors $c_1, c_2 \in [k]$, \emph{add edges}
  between colors $c_1$ and $c_2$; that is,
  make adjacent all pairs of vertices $u$ and $v$ such
  that the label of $u$ is $(h, c_1)$, the label of $v$ is $(h', c_2)$,
  and the parameter vertices of $u$ and $v$ are adjacent in $H$;
  $hh'\in E(H)$.
\end{enumerate}
\end{definition}
We refer to \Cref{fig:hcw-grid} for an illustration of \Cref{def:Hexpression},
and to \cite{DBLP:conf/mfcs/HlinenyJ24} for combinatorial properties of this notion.
In particular, observe that the mapping from the vertices of an expression
to their parameter vertices is a graph homomorphism to~$H$.

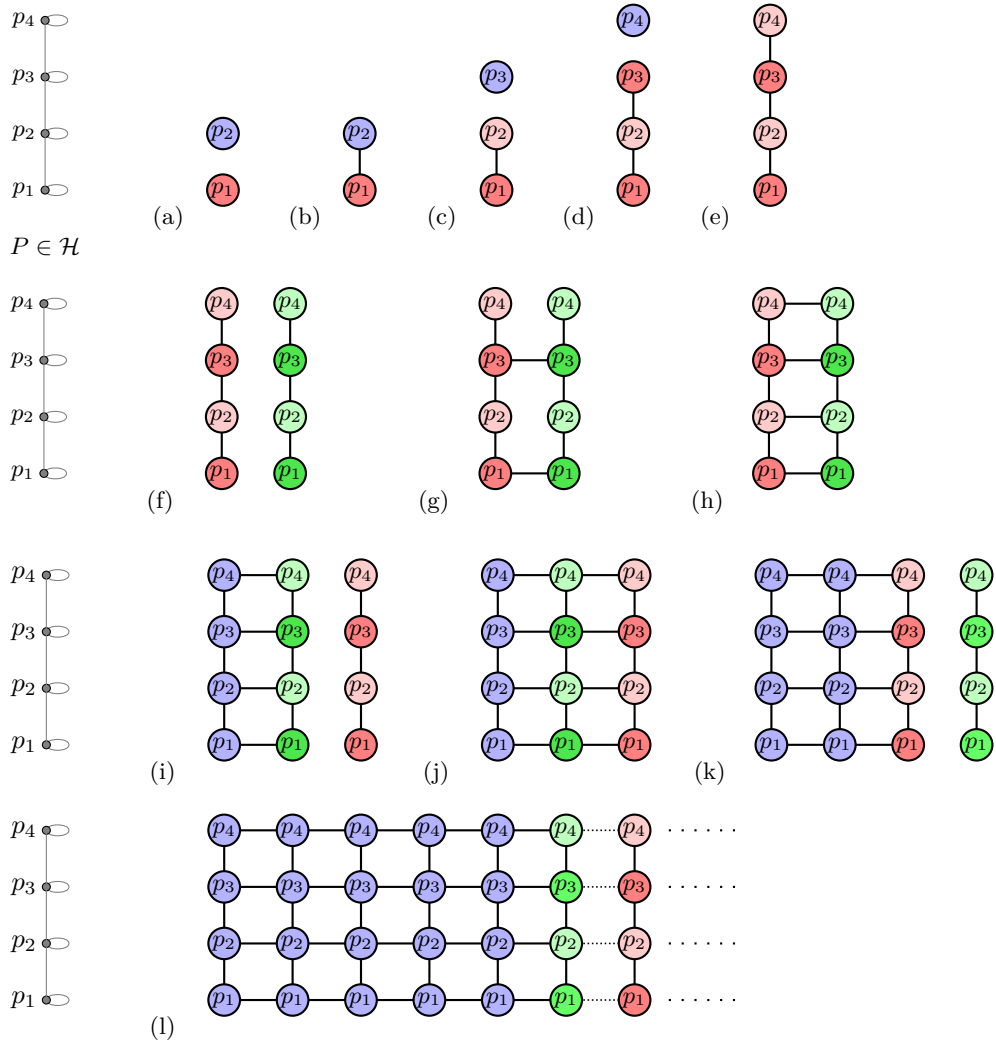
\begin{figure}[tp]
\def\xscal{0.9}%
\def\yscal{0.75}%
\def\lef{0.4}%
\begin{tikzpicture}[xscale=\xscal, yscale=\yscal]\small
  \tikzstyle{every node}=[draw, black, shape=circle, minimum size=3pt,inner sep=0.5pt, fill=gray]
  \tikzstyle{every path}=[draw, color=gray]
  \foreach \y in {4,3,2,1} {
    \draw (\lef,\y) node[label=left:$p_\y$] {} to[out=90,in=90] ++(0.33,0) to[out=270,in=270] ++(-0.33,0);
  } \draw (\lef,1)--(\lef,4);
  \node[draw=none,fill=none] at (\lef,0) {$P\in\ca H$};
  \tikzstyle{every path}=[draw, color=black, thick]\small
  \draw (3,1) node[fill=red!50!white] {$p_1$};
  \draw (3,2) node[fill=blue!30!white] {$p_2$};
  \node[draw=none,fill=none] at (2.2,0.5) {(a)};
  \draw (5,1)--(5,2);
  \draw (5,1) node[fill=red!50!white] {$p_1$};
  \draw (5,2) node[fill=blue!30!white] {$p_2$};
  \node[draw=none,fill=none] at (4.2,0.5) {(b)};
  \draw (7,1)--(7,2);
  \draw (7,1) node[fill=red!50!white] {$p_1$};
  \draw (7,2) node[fill=red!20!white] {$p_2$};
  \draw (7,3) node[fill=blue!30!white] {$p_3$};
  \node[draw=none,fill=none] at (6.2,0.5) {(c)};
  \draw (9,1)--(9,3);
  \draw (9,1) node[fill=red!50!white] {$p_1$};
  \draw (9,2) node[fill=red!20!white] {$p_2$};
  \draw (9,3) node[fill=red!50!white] {$p_3$};
  \draw (9,4) node[fill=blue!30!white] {$p_4$};
  \node[draw=none,fill=none] at (8.2,0.5) {(d)};
  \draw (11,1)--(11,4);
  \foreach \y in {4,2} \draw (11,\y) node[fill=red!20!white] {$p_\y$};
  \foreach \y in {3,1} \draw (11,\y) node[fill=red!50!white] {$p_\y$};
  \node[draw=none,fill=none] at (10.2,0.5) {(e)};
\end{tikzpicture} 

\begin{tikzpicture}[xscale=\xscal, yscale=\yscal]\small
  \tikzstyle{every node}=[draw, black, shape=circle, minimum size=3pt,inner sep=0.5pt, fill=gray]
  \tikzstyle{every path}=[draw, color=gray]
  \foreach \y in {4,3,2,1} {
    \draw (\lef,\y) node[label=left:$p_\y$] {} to[out=90,in=90] ++(0.33,0) to[out=270,in=270] ++(-0.33,0);
  } \draw (\lef,1)--(\lef,4);
  \tikzstyle{every path}=[draw, color=black, thick]\small
  \draw (3,1)--(3,4);
  \foreach \y in {4,2} \draw (3,\y) node[fill=red!20!white] {$p_\y$};
  \foreach \y in {3,1} \draw (3,\y) node[fill=red!50!white] {$p_\y$};
  \draw (4,1)--(4,4);
  \foreach \y in {4,2} \draw (4,\y) node[fill=green!25!white] {$p_\y$};
  \foreach \y in {3,1} \draw (4,\y) node[fill=green!60!lightgray] {$p_\y$};
  \node[draw=none,fill=none] at (2.1,0.5) {(f)};
  \foreach \y in {3,1} \draw (7,\y)--++(1,0);
  \draw (7,1)--(7,4);
  \foreach \y in {4,2} \draw (7,\y) node[fill=red!20!white] {$p_\y$};
  \foreach \y in {3,1} \draw (7,\y) node[fill=red!50!white] {$p_\y$};
  \draw (8,1)--(8,4);
  \foreach \y in {4,2} \draw (8,\y) node[fill=green!25!white] {$p_\y$};
  \foreach \y in {3,1} \draw (8,\y) node[fill=green!60!lightgray] {$p_\y$};
  \node[draw=none,fill=none] at (6.1,0.5) {(g)};
  \foreach \y in {4,3,2,1} \draw (11,\y)--++(1,0);
  \draw (11,1)--(11,4);
  \foreach \y in {4,2} \draw (11,\y) node[fill=red!20!white] {$p_\y$};
  \foreach \y in {3,1} \draw (11,\y) node[fill=red!50!white] {$p_\y$};
  \draw (12,1)--(12,4);
  \foreach \y in {4,2} \draw (12,\y) node[fill=green!25!white] {$p_\y$};
  \foreach \y in {3,1} \draw (12,\y) node[fill=green!60!lightgray] {$p_\y$};
  \node[draw=none,fill=none] at (10.1,0.5) {(h)};
\end{tikzpicture} 
\bigskip

\begin{tikzpicture}[xscale=\xscal, yscale=\yscal]
  \tikzstyle{every node}=[draw, black, shape=circle, minimum size=3pt,inner sep=0.5pt, fill=gray]
  \tikzstyle{every path}=[draw, color=gray]
  \foreach \y in {4,3,2,1} {
    \draw (\lef,\y) node[label=left:$p_\y$] {} to[out=90,in=90] ++(0.33,0) to[out=270,in=270] ++(-0.33,0);
  } \draw (\lef,1)--(\lef,4);
  \tikzstyle{every path}=[draw, color=black, thick]\small
  \foreach \y in {4,3,2,1} \draw (3,\y)--++(1,0);
  \draw (3,1)--(3,4);
  \foreach \y in {4,2} \draw (3,\y) node[fill=blue!30!white] {$p_\y$};
  \foreach \y in {3,1} \draw (3,\y) node[fill=blue!30!white] {$p_\y$};
  \draw (4,1)--(4,4);
  \foreach \y in {4,2} \draw (4,\y) node[fill=green!25!white] {$p_\y$};
  \foreach \y in {3,1} \draw (4,\y) node[fill=green!60!lightgray] {$p_\y$};
  \draw (5,1)--(5,4);
  \foreach \y in {4,2} \draw (5,\y) node[fill=red!20!white] {$p_\y$};
  \foreach \y in {3,1} \draw (5,\y) node[fill=red!50!white] {$p_\y$};
  \node[draw=none,fill=none] at (2.1,0.5) {(i)};
  \foreach \y in {4,3,2,1} \draw (7,\y)--++(2,0);
  \draw (7,1)--(7,4);
  \foreach \y in {4,2} \draw (7,\y) node[fill=blue!30!white] {$p_\y$};
  \foreach \y in {3,1} \draw (7,\y) node[fill=blue!30!white] {$p_\y$};
  \draw (8,1)--(8,4);
  \foreach \y in {4,2} \draw (8,\y) node[fill=green!25!white] {$p_\y$};
  \foreach \y in {3,1} \draw (8,\y) node[fill=green!60!lightgray] {$p_\y$};
  \draw (9,1)--(9,4);
  \foreach \y in {4,2} \draw (9,\y) node[fill=red!20!white] {$p_\y$};
  \foreach \y in {3,1} \draw (9,\y) node[fill=red!50!white] {$p_\y$};
  \node[draw=none,fill=none] at (6.1,0.5) {(j)};
  \foreach \y in {4,3,2,1} \draw (11,\y)--++(2,0);
  \draw (11,1)--(11,4);
  \foreach \y in {4,2} \draw (11,\y) node[fill=blue!30!white] {$p_\y$};
  \foreach \y in {3,1} \draw (11,\y) node[fill=blue!30!white] {$p_\y$};
  \draw (12,1)--(12,4);
  \foreach \y in {4,2} \draw (12,\y) node[fill=blue!30!white] {$p_\y$};
  \foreach \y in {3,1} \draw (12,\y) node[fill=blue!30!white] {$p_\y$};
  \draw (13,1)--(13,4);
  \foreach \y in {4,2} \draw (13,\y) node[fill=red!20!white] {$p_\y$};
  \foreach \y in {3,1} \draw (13,\y) node[fill=red!50!white] {$p_\y$};
  \draw (14,1)--(14,4);
  \foreach \y in {4,2} \draw (14,\y) node[fill=green!25!white] {$p_\y$};
  \foreach \y in {3,1} \draw (14,\y) node[fill=green!60!white] {$p_\y$};
  \node[draw=none,fill=none] at (10.1,0.5) {(k)};
\end{tikzpicture} 
\medskip

\begin{tikzpicture}[xscale=\xscal, yscale=\yscal]
  \tikzstyle{every node}=[draw, black, shape=circle, minimum size=3pt,inner sep=0.5pt, fill=gray]
  \tikzstyle{every path}=[draw, color=gray]
  \foreach \y in {4,3,2,1} {
    \draw (\lef,\y) node[label=left:$p_\y$] {} to[out=90,in=90] ++(0.33,0) to[out=270,in=270] ++(-0.33,0);
  } \draw (\lef,1)--(\lef,4);
  \tikzstyle{every path}=[draw, color=black, thick]\small
  \foreach \y in {4,3,2,1} \draw (3,\y)--++(5,0);
  \draw (3,1)--(3,4);
  \foreach \y in {4,2} \draw (3,\y) node[fill=blue!30!white] {$p_\y$};
  \foreach \y in {3,1} \draw (3,\y) node[fill=blue!30!white] {$p_\y$};
  \draw (4,1)--(4,4);
  \foreach \y in {4,2} \draw (4,\y) node[fill=blue!30!white] {$p_\y$};
  \foreach \y in {3,1} \draw (4,\y) node[fill=blue!30!white] {$p_\y$};
  \draw (5,1)--(5,4);
  \foreach \y in {4,2} \draw (5,\y) node[fill=blue!30!white] {$p_\y$};
  \foreach \y in {3,1} \draw (5,\y) node[fill=blue!30!white] {$p_\y$};
  \draw (6,1)--(6,4);
  \foreach \y in {4,2} \draw (6,\y) node[fill=blue!30!white] {$p_\y$};
  \foreach \y in {3,1} \draw (6,\y) node[fill=blue!30!white] {$p_\y$};
  \draw (7,1)--(7,4);
  \foreach \y in {4,2} \draw (7,\y) node[fill=blue!30!white] {$p_\y$};
  \foreach \y in {3,1} \draw (7,\y) node[fill=blue!30!white] {$p_\y$};
  \foreach \y in {4,3,2,1} \draw[semithick,densely dotted] (8,\y)--++(1,0);
  \draw (8,1)--(8,4);
  \foreach \y in {4,2} \draw (8,\y) node[fill=green!25!white] {$p_\y$};
  \foreach \y in {3,1} \draw (8,\y) node[fill=green!60!white] {$p_\y$};
  \draw (9,1)--(9,4);
  \foreach \y in {4,2} \draw (9,\y) node[fill=red!20!white] {$p_\y$};
  \foreach \y in {3,1} \draw (9,\y) node[fill=red!50!white] {$p_\y$};
  \foreach \y in {4,3,2,1} \draw[thick,loosely dotted] (9.5,\y)--++(1,0);
  \node[draw=none,fill=none] at (2.1,0.5) {(l)};
\end{tikzpicture} 
\caption{
An informal illustration of a $(P,5)$-expression making the $a\times b\,$-grid,
where $P$ is a $b$-vertex reflexive path (here $b=4$).
Each vertex in the expression is labeled $(p_i,c)$ where $p_i\in V(P)$ is
the parameter vertex inscribed in the vertex, and $c\in[5]$ is the color,
here depicted as light/dark red or green, or blue.
In the first phase, see (a)--(e), we create the vertical edges of the grid,
independently for each column (with alternating shades of red or green).
In the second phase, see (f)--(l), we union the columns and add by the
definition precisely the required edges in the rows -- between dark red 
and dark green, and separately between light red and light green colors.
}
\label{fig:hcw-grid}
\end{figure}

Given a (fixed) class of loop graphs $\ca H$, we define the \emph{\mbox{$\ca H$-clique-width}}
of a (simple) graph $G$ as the smallest integer $\ell$ such that (a coloring
of) the graph $G$ is the value of an $(H, \ell)$-expression for some $H\in\ca H$.

The concept is closely related to the aforementioned product
structure via the following rather straightforward statement:
\begin{theorem}[Hlin\v{e}n\'y and                 
Jedelsk\'y~\cite{DBLP:conf/mfcs/HlinenyJ24}]\label{thm:hcwproduct}
Let $\ca H$ be a family of reflexive graphs and $\ca H'$ be the
family obtained from $\ca H$ by removing all loops of all graphs.
For every $\ell\geq2$;
a simple graph $G$ is of $\ca H$-clique-width at most $\ell$, 
if and only if $G$ is an induced subgraph
of the strong product $H'\boxtimes M$ where $H'\in\ca H'$ and $M$ is a simple graph of clique-width at most~$\ell$.
\end{theorem}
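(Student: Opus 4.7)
The plan is to prove the two implications by converting between $(H,\ell)$-expressions producing $G$ and ordinary $\ell$-expressions producing a simple graph $M$, in a way that matches the two coordinates of the strong-product embedding $g\mapsto[h_g,m_g]\in V(H')\times V(M)$.

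For the direction ($\Rightarrow$), I start from an $(H,\ell)$-expression $\Phi$ whose value is $G$ (with $H\in\ca H$), and build an $\ell$-expression $\Psi$ on the same vertex set $V(G)$ by dropping the parameter-vertex coordinate of every label and replacing each add-edges operation (d) by the ordinary clique-width add-edges operation, which joins \emph{all} pairs of the chosen color classes. The resulting graph $M$ is simple and has clique-width at most~$\ell$. A straightforward induction on $\Phi$ then shows that, for distinct $u,v\in V(G)$, the edge $uv$ is created by $\Phi$ iff some add-edges step joins the current colors of $u$ and $v$ and $h_u h_v\in E(H)$; rephrased, $uv\in E(G)$ iff $uv\in E(M)$ and $h_u h_v\in E(H)$. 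By reflexivity of $H$ and the definition of the strong product, this is exactly the adjacency of $[h_u,u]$ and $[h_v,v]$ in $H'\boxtimes M$, so $g\mapsto[h_g,g]$ is the desired induced embedding.

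For the direction ($\Leftarrow$), fix the embedding witnessing $G\subseteq_i H'\boxtimes M$, written $g\mapsto[h_g,m_g]$, and let $H\in\ca H$ be the reflexive closure of~$H'$. I transform an $\ell$-expression $\Psi$ for $M$ into an $(H,\ell)$-expression $\Phi$ for $G$ bottom-up. At each leaf of $\Psi$ creating $m$ with color $c$, I replace the single ``create $m$'' step by a sub-expression that creates every vertex of the \emph{stack} $S_m=\{g\in V(G):m_g=m\}$ by operation (a) with label $(h_g,c)$, combines them by (b), and finishes with a single add-edges (d) for the color pair $(c,c)$. Since this sub-expression contains only the vertices of $S_m$, that last step inserts precisely the edges $gg'$ with $h_g h_{g'}\in E(H)$, which (by injectivity of the embedding on the stack) coincides with the $H'$-induced subgraph on the parameter vertices of $S_m$\,---\,exactly the strong-product adjacencies inside a single fiber. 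Leaves with $S_m=\emptyset$ are pruned, and all disjoint-union, recolor, and add-edges operations at internal nodes of $\Psi$ are mirrored verbatim in $\Phi$ without any increase in the color budget.

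It then remains to verify, for distinct $g_1,g_2\in V(G)$, that $\Phi$ creates the edge $g_1g_2$ iff $[h_{g_1},m_{g_1}]$ is adjacent to $[h_{g_2},m_{g_2}]$ in $H'\boxtimes M$. The case $m_{g_1}=m_{g_2}$ is discharged by the leaf-local add-edges step together with injectivity of the embedding; the case $m_{g_1}\neq m_{g_2}$ follows because every internal add-edges step of $\Psi$ creating $m_{g_1}m_{g_2}\in E(M)$ is mirrored in $\Phi$, and its $(H,\ell)$-counterpart additionally demands $h_{g_1}h_{g_2}\in E(H)$, exactly the remaining condition in the strong-product criterion. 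I expect the main delicate point to lie in the reverse direction: the leaf-local add-edges $(c,c)$ must be confined to the current stack, which is why it is placed \emph{inside} the leaf sub-expression (before any disjoint union with other parts of the expression), and leaves whose stack is empty must be pruned so that the value of $\Phi$ is $G$ itself rather than some vertex-extension of~$G$.
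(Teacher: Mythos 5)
Your proof is correct and takes the same route the paper relies on: the theorem is cited to \cite{DBLP:conf/mfcs/HlinenyJ24} rather than reproved here, but the discussion inside \Cref{claim:equiv-stability} and the final paragraph of the proof of \Cref{thm:stablecw-back} confirm the identical translation---for ($\Rightarrow$) take $M$ to be the deparameterization of the $(H,\ell)$-expression and embed $g\mapsto[h_g,g]$, and for ($\Leftarrow$) expand each leaf of the $\ell$-expression into a monochromatic fiber sub-expression (a clique in the deparameterization) with one internal same-color add-edges step, then mirror all remaining operations verbatim. The two subtleties you single out---confining the $(c,c)$ add-edges step to the leaf so that only fiber-internal edges are created, and pruning leaves with empty fiber---are precisely the right points to be careful about.
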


Considering an $(H, \ell)$-expression $\phi$, the \emph{deparameterization of $\phi$} 
is an expression $\phi_1$ obtained as follows:
Let $K_1^\circ$ denote a single vertex graph formed by a vertex $v$ with a loop.
Then $\phi_1$ is the $(K_1^\circ, \ell)$-expression constructed from $\phi$ by replacing
every parameter vertex with~$v$.
(Note that ``deparameterized $\ca H$-clique-width'' is exactly the usual \emph{clique-width}.)
Furthermore, we say that an $(H, \ell)$-expression $\phi$ is
\emph{$k$-stable} if the value of the deparameterization $\phi_1$ of $\phi$ does
not contain a half-graph of order $k$ as a bi-induced subgraph.

\subsection{Relational structures and logic}
A \emph{relational structure} is an $\ell$-tuple $(V, R_1, \ldots, R_\ell)$,
where~$V$ is the universe and $R_i \subseteq V^{a_i}$ is a relation
of arity~$a_i$. The \emph{signature} of such relational structure
is the tuple~$(R_1, \ldots, R_\ell)$ of symbols together with their arities.
We view $c$-colored graphs $G=(V, E, C_1, \ldots, C_c)$
as relational structures over a signature with the binary edge relation
and unary relations for each~color. 

In this article, unless we say otherwise,
we only consider first-order logic, thus
all our formulas are first-order.
In order to simplify our notation, we assume that
free variables of each formula are in the set
$\{x_i | i \in \mathbb{N}\}$ and bound variables
are in the set $\{y_i | i \in \mathbb{N}\}$.
We write $\phi(x_1, x_2, \ldots, x_k)$ to denote that
the set of free variables of $\phi$ is a subset of
$\{x_1, x_2, \ldots, x_k\}$. Then, given a relational
structure $G$ and a $k$-tuple
$\bar{v}=(v_1, v_2, \ldots, v_k)$
of elements of $V(G)$, we write
$G \models \phi(\bar{v})$ to denote
$G, (x_1 \mapsto v_1), (x_2 \mapsto v_2), \ldots,
(x_k \mapsto v_k) \models \phi$,
where $x_k \mapsto v_k$ denotes that
the value of $x_k$ is $v_k$.

We often simply say that a formula $\phi$
has \emph{arity} $k$ to denote
$\phi(x_1, x_2, \ldots, x_k)$.
Given a formula $\phi(x_1, \ldots, x_k)$ and indices
$i_1, i_2, \ldots, i_k$, we denote by
$\phi[x_{1} \to x_{i_1}, x_{2} \to x_{i_2}, \ldots, x_{k} \to x_{i_k}]$
the formula obtained from $\phi$ by renaming $x_{j}$ to $x_{i_j}$
for all $1 \le j \le k$. We note that, due to
the assumption from the beginning of this paragraph,
the renaming only affects free variables, it does not
affect any bound variable.

Given a signature $\sigma=(R_1, \ldots, R_\ell)$
and formulas $\xi_1, \ldots, \xi_{\ell'}$ together with their arities
$a_1, \ldots, a_{\ell'}$ and a formula $\omega$ of arity
$1$, using predicates from $\sigma$,
the \emph{$(\xi_1, \ldots, \xi_{\ell'},\> \omega)$-interpretation} of a
relational structure $G$ over $\sigma$ is the relational
structure $H=(V^H, R_1^H, \ldots, R_{\ell'}^H)$, where
$v \in V^H \subseteq V^G$ iff $G \models \omega(v)$, and each
$R_i^H$ has the same arity $a_i$ as $\xi_i$ and
$(v_1, \ldots, v_{a_i}) \in R_i^H$ iff $G \models \xi_i$. 
An interpretation is \emph{first-order} if all
involved formulas are first-order.

Given a formula $\xi$ and a relational structure
$G$, we denote by $\xi(G)$ the result of the $(\xi, \top)$-interpretation
of $G$, where $\top$ is just tautology (of arity 1).
A $k$-\emph{copy} of a relational
structure $G=(V^G, R_1^G, \ldots, R_\ell^G)$ is the relational structure
$H=(V^G\times[k], R_1^H, \ldots, R_\ell^H,\> SC, SV)$
obtained from $G$ by taking the disjoint union of $k$ copies of $G$
and adding binary relations
$SV = \{([u, i], [u, j]) |\> u \in V^G, i, j \in [k]\}$
and $SC = \{([u, i], [v, i]) |\> u, v \in V^G, i \in [k]\}$.
\mbox{A \emph{$k$-coloring}} of a relational structure $G=(V^G, R_1^G, \ldots, R_\ell^G)$
is the operation resulting in the class of relational structures
$H=(V^G,$ $R_1^G, \ldots, R_\ell^G,$ $C_1, \ldots, C_k)$
where $C_i$ are interpreted as arbitrary unary relations over $V^G$.

The operations of interpretation, copying, and coloring naturally extend
to classes of relational structures by applying them to each member
of the class.
A \emph{(first-order) transduction} is any composition of
copying, coloring, and a (first-order) interpretation.
Given a class $\ca C$ and a transduction
$\tau$, we denote by $\tau(\ca C)$ the class obtained
from $\ca C$ by applying $\tau$.
In this paper, unless stated otherwise, we consider transductions of
classes of colored graphs producing classes of graphs.

A transduction is \emph{non-copying} if the copying operation is not used.
We say that a class $\ca C$ \emph{does not need copying}
if, for each $k$, there is a non-copying transduction $\tau_k$
such that $k\ca C \subseteq \tau_k(\ca C)$, where $k\ca C$
denotes the $k$-copy of $\ca C$. 
Similarly, we say that $\ca C$ \emph{does not need coloring}%
\footnote{See \Cref{claim:leaf-closed-no-color-copy} for an example of such class.}
if, for each $k$, there is an interpretation $\xi_k$
such that $\ca C^c \subseteq \xi_k(\ca C)$
for every $k$-coloring $\ca C^c$~of~$\ca C$.

Given an integer $r$, a graph $G$, and a tuple of its vertices~$\bar{v}$,
the \emph{$r$-ball} $B_r(G, \bar{v})$ around $\bar{v}$ in $G$ is
the subgraph of $G$ induced by the vertices at distance at most $r$
from at least one of the vertices of $\bar{v}$.
A formula $\xi$ of arity $a$ is \emph{$r$-local} if,
for every graph $G$ and every $a$-tuple of its vertices $\bar{v}$,
we have that $G \models \xi(\bar{v})$ iff $B_r(G, \bar{v}) \models \xi(\bar{v})$.
An $r$-local formula $\xi$ of arity $a$ is \emph{strongly $r$-local} if,
for every graph $G$ and every $a$-tuple of its vertices $\bar{v}$,~
$G \models \xi(\bar{v})$ implies that the maximum pairwise distance between
vertices of $\bar{v}$ is at most $r$. 

We say that a non-copying transduction is \emph{immersive} if the formulas 
defining the involved interpretations are strongly local.
We say that a transduction $\tau$ is \emph{subsumed} by
a transduction $\upsilon$ if for every class $\ca C$
we get $\tau(\ca C) \subseteq \upsilon(\ca C)$. 

We say that a graph $G$ is a $k$-\emph{perturbation} of a graph $H$
if $V(G) = V(H)$ and there is a partition $\ca P=(P_1, P_2, \ldots P_k)$
of $V(G) = V(H)$ into $k$ parts and a symmetric relation $\sim$ on $[k]$
such that the following holds:
For every pair of distinct vertices $u\in P_i$ and $v\in P_j$ 
where $i,j\in[k]$ (possibly $i=j$),
$uv \in E(G)$ if and only if $(uv \in E(H) \iff i\sim j)$.
The notion naturally extends to graph classes; a class $\ca C$
is a $k$-perturbation of a class $\ca D$ if each graph of $\ca C$
is a $k$-perturbation of some graph of $\ca D$.
Perturbations are sometimes also called flips in the literature.

\begin{theorem}[Ne\v{s}et\v{r}il, Ossona de Mendez, and Siebertz
\cite{DBLP:conf/csl/NesetrilMS22}]\label{thm:pertimmersive}
Any non-copying first-order transduction is subsumed by the 
composition of an immersive transduction and a perturbation.
\end{theorem}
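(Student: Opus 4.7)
My plan is to apply Gaifman's locality theorem together with a Feferman-Vaught style decomposition of local formulas on disjoint substructures, in order to split the binary interpretation formula of the given transduction $\tau$ into a strongly local part, realised inside the immersive transduction, and a part depending only on unary and global colouring information, realised by the perturbation. Without loss of generality I assume $\tau$ consists of a colouring step followed by an interpretation defined by a single binary formula $\xi(x,y)$; output unary predicates and multiple output relations are handled by straightforward copies of the same argument.

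First I would fix, via Gaifman's theorem, a radius $r\ge 1$, finitely many $r$-local binary formulas $\psi_1(x,y),\ldots,\psi_n(x,y)$, finitely many basic local sentences $\sigma_1,\ldots,\sigma_m$, and a Boolean combination $f$ such that $\xi(x,y) \equiv f(\psi_1,\ldots,\psi_n,\sigma_1,\ldots,\sigma_m)$ on every coloured graph. For pairs at distance greater than $2r$, the ball $B_r(\{x,y\})$ splits as the disjoint union of $B_r(x)$ and $B_r(y)$, so a standard Feferman-Vaught argument rewrites each $\psi_i(x,y)$ on such pairs as $\bigvee_k \bigl(\alpha_{i,k}(x)\wedge\beta_{i,k}(y)\bigr)$ for finitely many $r$-local \emph{unary} formulas $\alpha_{i,k},\beta_{i,k}$. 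Substituting back into $f$ produces a formula $\zeta(x,y)$, equivalent to $\xi$ on all pairs with $d(x,y)>2r$, whose truth value depends only on the values of the $\alpha_{i,k},\beta_{i,k}$ at $x$ and $y$ and on the truth values of the sentences $\sigma_j$.

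Next I would build the transduction $\upsilon$. Its colouring step extends the $\tau$-colouring with additional unary predicates recording, at each vertex $v$, which of the $\alpha_{i,k},\beta_{i,k}$ hold at $v$, together with uniform global colours recording the truth values of $\sigma_1,\ldots,\sigma_m$; the existence of the correct colouring among the nondeterministic choices of the colouring operation is immediate. Let $\mathrm{cl}(v)$ denote the resulting class of $v$; this partitions $V(G)$ into finitely many parts $\ca P=\{P_1,\ldots,P_N\}$, on whose index set I define a symmetric relation $\sim$ by $i\sim j$ iff $\zeta$ evaluates to \emph{false} on a representative pair from classes $i,j$ (well-defined since $\zeta$ is determined by the recorded colours). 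The interpretation part of $\upsilon$ is then the strongly $2r$-local formula
\[
\xi_{\textup{loc}}(x,y) \,\equiv\, d(x,y)\le 2r \,\wedge\, \bigl((\mathrm{cl}(x)\sim\mathrm{cl}(y)) \leftrightarrow \xi^{\star}(x,y)\bigr),
\]
where $\xi^{\star}(x,y):=f(\psi_1(x,y),\ldots,\psi_n(x,y),\sigma_1,\ldots,\sigma_m)$ evaluates the sentences using the uniform colours, so that $\xi^{\star}(x,y)=\xi(x,y)$ whenever $d(x,y)\le 2r$. Finally, $\upsilon$ applies the perturbation with partition $\ca P$ and relation $\sim$.

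Verification is a straightforward case split. If $d(u,v)\le 2r$, then $\xi_{\textup{loc}}(u,v)$ either matches $\xi(u,v)$ (when $\mathrm{cl}(u)\sim\mathrm{cl}(v)$) or its negation (when $\mathrm{cl}(u)\not\sim\mathrm{cl}(v)$), and the subsequent perturbation restores $\xi(u,v)$ in either case; if $d(u,v)>2r$, then $\xi_{\textup{loc}}(u,v)$ is false and the perturbation yields an edge iff $\mathrm{cl}(u)\not\sim\mathrm{cl}(v)$, which by the definition of $\sim$ equals $\zeta(u,v)=\xi(u,v)$. The main conceptual obstacle is that a single perturbation indiscriminately flips \emph{all} edges between two non-related parts, so one cannot naively push the short-range edges into $\xi_{\textup{loc}}$ and rely on the perturbation only for long-range behaviour. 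The crucial fix is the compensating equivalence with $(\mathrm{cl}(x)\sim\mathrm{cl}(y))$ inside $\xi_{\textup{loc}}$, which pre-inverts the short-range edges precisely at those pairs where the perturbation would otherwise flip them incorrectly.
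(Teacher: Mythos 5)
This theorem is a cited result \cite{DBLP:conf/csl/NesetrilMS22} and the present paper does not give its own proof, so there is no in-paper proof to compare against. Your argument --- Gaifman normal form, Feferman--Vaught splitting of the $r$-local binary formulas over the disjoint union $B_r(x)\sqcup B_r(y)$ for far pairs, recording the resulting unary local types plus the global sentences as colours, and then the compensating equivalence $(\mathrm{cl}(x)\sim\mathrm{cl}(y))\leftrightarrow\xi^\star(x,y)$ inside the strongly $2r$-local formula to pre-invert short-range edges against the indiscriminate flip performed by the perturbation --- is correct and verifies cleanly in both distance cases, and it is essentially the argument used in the cited source.
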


In the coming proof, we employ the 
notion of a \emph{$q$-type} of a $k$-tuple $\bar{v}$ of vertices
of a (colored) graph $G$, that is, the set $tp_{q}(G, \bar{u})$
of all first-order formulas $\xi(x_1, \ldots, x_k)$ of quantifier rank
at most $q$ satisfying $G \models \xi(\bar{v})$.
We note that the number of distinct $q$-type of $k$-tuple over a fixed
signature is bounded by a function of $q$ and $k$.
We refer the reader to Chapter~I.3 of Achim Blumensath's
open access book \cite{achim-book} for more details about
finiteness of types.

We often need to work with types of sets rather that tuples.
To that end, we say that
a $k$-tuple $\bar{u}=(u_1, \ldots, u_k)$ is \emph{$\preceq$-ordered},
if $u_1\preceq u_2\preceq\ldots\preceq u_k$ where $\preceq$ is a linear order.
Observe that $\preceq$-ordered tuple $\bar{u}$ is uniquely determined by
its set of vertices $\{u_1, \ldots, u_k\}$ and their multiplicity.

Consider tuples of vertices of a graph $G$. We say that a tuple
$\bar{u}$ $d$-separates tuples $\bar{v}_1, \ldots, \bar{v}_k$
if, for every pair $i \neq j \in [k]$,
every path $P$ of length at most $d$ between a vertex of
$\bar{v}_i$ and a vertex of $\bar{v}_j$ contains a vertex of $\bar{u}$.
Dreier \cite{DBLP:conf/lics/Dreier21} proved a useful strengthening
of Feferman-Vaught theorem showing how to recover the $q$-type
of a tuple from $g(q)$-types of its subtuples:

\begin{theorem}[Dreier \cite{DBLP:conf/lics/Dreier21}]\label{thm:merge-types}
  There exists a function $g(q, \ell)$ such that for all
  labeled graphs $G$, every $q, \ell \in\mathbb N$, and all
  tuples $\bar{u}, \bar{v}_{1}, \ldots,
  \bar{v}_{k}$ of vertices from $G$ such that 
  $\bar{u}$ $4^q$-se\-parates $\bar{v}_{1},
  \ldots, \bar{v}_{k}$ and
  $|\bar{u}|+|\bar{v}_{1}|+\ldots+|\bar{v}_{k}| \le \ell$,
  the type $tp_q(G, \bar{u} \bar{v}_{1} \ldots \bar{v}_{k})$
  depends only on the types
  $tp_{g(q,\ell)}(G, \bar{u} \bar{v}_{1}),$ 
  $\ldots, tp_{g(q,\ell)}(G, \bar{u} \bar{v}_{k})$.
  Furthermore, $tp_q(G, \bar{u} \bar{v}_{1} \ldots
  \bar{v}_{k})$ can be computed from $tp_{g(q,\ell)}(G, \bar{u} \bar{v}_{1}), 
  \ldots,$ $tp_{g(q,\ell)}(G, \bar{u} \bar{v}_{k})$.
\end{theorem}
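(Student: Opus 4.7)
The plan is to follow the classical Feferman--Vaught strategy enhanced by Gaifman's locality theorem, which is the standard toolkit for such ``merge'' results. I would begin by invoking Gaifman's locality theorem to reduce the problem to controlling local data: every first-order formula $\phi(\bar x)$ of quantifier rank at most $q$ is equivalent to a Boolean combination of (i) $r$-local formulas $\psi(\bar x)$ about the free variables, and (ii) basic local sentences of the form $\exists y_1\ldots y_m\bigl(\bigwedge_{i\neq j} d(y_i,y_j)>2r \wedge \bigwedge_i\chi(y_i)\bigr)$, where $\chi$ is $r$-local and $r$ is bounded by a known function of $q$ (classically $(7^q-1)/2$, which is why taking the separation radius to be $4^q$ is comfortably large enough so that $2r$ fits inside).

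Next I would handle the two kinds of formulas separately. The basic local sentences do not depend on the tuple $\bar u\bar v_1\ldots\bar v_k$ at all; their truth depends only on $G$, and hence can be read off from any single $tp_{g(q,\ell)}(G,\bar u\bar v_1)$, provided $g(q,\ell)$ is chosen large enough to encode them. For the $r$-local formulas, the crucial structural observation is that, because $\bar u$ $4^q$-separates the $\bar v_i$'s and $2r\le 4^q$, the $r$-ball $B_r(G,\bar u\bar v_1\ldots\bar v_k)$ decomposes as the union $\bigcup_i B_r(G,\bar u\bar v_i)$ whose pairwise intersections lie inside $B_r(G,\bar u)$. A straightforward Ehrenfeucht--Fra\"{\i}ss\'e argument, played simultaneously on $G$ and on a ``glued'' structure obtained by amalgamating the local neighborhoods $B_r(G,\bar u\bar v_i)$ over their common interface $B_r(G,\bar u)$, then shows that the $q$-theory of the local ball around the merged tuple is fully determined by the $q$-theories of the local balls around each $\bar u\bar v_i$.

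Setting $g(q,\ell)$ to be $q$ plus an additive slack depending on $r=r(q)$ and $\ell$ (enough to encode the witness counts from the basic local sentences and the interface $B_r(G,\bar u)$ via any single subtype) yields a function of the desired shape. The second, effective clause is then immediate: Gaifman's normal form is computable, the decomposition of a local formula across the separator $\bar u$ is a finite syntactic manipulation, and one can enumerate representatives of all quantifier-rank-$q$ formulas up to logical equivalence and tabulate, for each tuple of $g(q,\ell)$-subtypes, whether a given formula ends up in the merged $q$-type.

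The main obstacle I expect is carefully aligning the Gaifman radius with the separation parameter $4^q$ and verifying that the $r$-local formulas genuinely factor through the subtuples despite the shared interface $B_r(G,\bar u)$; unlike the classical disjoint-union Feferman--Vaught setting, here the summand structures genuinely overlap, and this overlap must be absorbed into the type carried by the separator $\bar u$. This is exactly the point where an amalgamation-style EF argument over the common interface replaces the product argument of the classical Feferman--Vaught theorem, and it is the technical heart of the statement.
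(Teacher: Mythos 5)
The statement is a theorem cited from Dreier~\cite{DBLP:conf/lics/Dreier21}; the paper you read does not reprove it, so I can only assess your proposal on its own terms. Your Gaifman-based route has a concrete arithmetic gap that kills the argument. You invoke Gaifman's theorem for quantifier-rank-$q$ formulas, obtaining $r$-local formulas with $r\le(7^q-1)/2$, and then assert that ``taking the separation radius to be $4^q$ is comfortably large enough so that $2r$ fits inside.'' This is false: $2r=7^q-1$, and for every $q\ge1$ we have $7^q-1>4^q$ (already $7-1=6>4$ at $q=1$, and the gap only grows). Consequently the pairwise intersections $B_r(G,\bar v_i)\cap B_r(G,\bar v_j)$ need \emph{not} lie inside $B_r(G,\bar u)$, and an edge between $B_r(G,\bar v_i)$ and $B_r(G,\bar v_j)$ need not pass through $B_r(G,\bar u)$, so your decomposition of $B_r(G,\bar u\bar v_1\cdots\bar v_k)$ as an amalgam of the pieces $B_r(G,\bar u\bar v_i)$ over the interface $B_r(G,\bar u)$ simply does not hold. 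Note that increasing $g(q,\ell)$ cannot rescue this: the separation parameter $4^q$ is fixed by the hypothesis, while the Gaifman radius is fixed by the rank $q$ of the formulas you want to evaluate on the merged tuple.

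The deeper issue is that $4^q$ is not the Gaifman radius for rank $q$; it is the radius that comes out of a direct Ehrenfeucht--Fra\"iss\'e game composition over a separator, where the ``relevant radius'' is multiplied by a small constant (here $4$) per round, while the quantifier rank of the piece-types is inflated to $g(q,\ell)$ to let the Duplicator re-account for the up-to-$\ell$ interface vertices in each piece game. That is the mechanism Dreier uses, and it keeps radius and rank decoupled, which your argument conflates. If you insist on going through Gaifman's normal form, you would need separation on the order of $7^q$ (or whatever the normal-form radius is), and even then the ``straightforward EF argument on the glued structure'' hides exactly the composition-over-an-interface lemma that is the technical heart of the claim -- it needs to be carried out, not asserted.
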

\smallskip

\section{Transductions of classes admitting product structure}
\label{sec:transd-prod-str}

The following technical lemma constitutes the core of the proofs of our main
results. We formulate it for 2-colored graphs (e.g., having black and white
vertices) with the further intention to smoothly handle induced subgraphs.

\begin{lemma}\label{lem:transductions}
  There is a function $f: \mathbb{N}^4 \to \mathbb{N}$
  such that the following holds:

  Let $\xi$ be a strongly $r$-local first-order binary formula of
  quantifier rank $q$ over the signature of 2-colored graphs.
  Let $\ca Q$ be a class of graphs of degree at most $d$.
  For any $Q \in \ca Q$
  let $G$ be a 2-colored graph, such that $V(G) = V(Q \boxtimes M)$
  and $E(G) \subseteq E(Q \boxtimes M)$, where 
  $M$ has tree-width at most $k$.

  Then, $\xi(G)$ has
  $\{Q^\circ_r\}$-clique-width at~most~$f(r, q, d, k)$,
  where $Q^\circ_r$ denotes the reflexive closure of the $r$-th power of~$Q$.
\end{lemma}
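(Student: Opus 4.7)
The plan is to build a $(Q^\circ_r,\ell)$-expression for $\xi(G)$ by a bottom-up recursion on a nice (or smooth) tree decomposition $(T,\ca X)$ of $M$ of width at most~$k$, rooted arbitrarily. For each node $t\in V(T)$ I write $M_t$ for the subgraph of $M$ on the union of the bags in the subtree of $t$, and $G_t=\xi(G)[V(Q)\times V(M_t)]$; the recursion maintains an expression $\phi_t$ whose decolored value is $G_t$. A vertex $v=[q,m]$ is assigned the parameter vertex $q\in V(Q^\circ_r)$. By the definition of an $(H,\ell)$-expression this automatically confines every later edge-addition step to pairs of parameters at $Q$-distance at most~$r$, exactly matching the restriction imposed by strong $r$-locality of~$\xi$; so far-away $V(Q)$-positions never need to be remembered.

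The color of $v\in V(G_t)$ is composed of three components drawn from bounded-size domains. First, the $\gamma$-color of $q$, where $\gamma$ is a fixed proper coloring of some power $Q^\circ_{cr}$, whose palette is bounded in $d$ and $cr$ because $Q$ has maximum degree~$d$. The role of $\gamma$ is to let the color refer symbolically to other $V(Q)$-vertices inside a bounded $Q$-neighborhood of $q$, and to guarantee that any two vertices of the expression sharing a color lie at $Q$-distance more than $cr$, so the blanket add-edges rule never attaches a future vertex to two same-color mates at once. Second, an ordered ``interface tuple'' $\bar s_v$ whose entries are the elements of $V(Q)\times X_t$ lying in the ball $B_R(G,v)$ for $R=4^q r$; by the strong-product metric and $|X_t|\le k+1$ its length is at most $(d^R+1)(k+1)$, and each entry is encoded symbolically by the $\gamma$-color of its $Q$-coordinate and the index of its $X_t$-coordinate. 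Third, the type $tp_{q'}(G,(v,\bar s_v))$ for $q'=g(q,|\bar s_v|+2)$ supplied by \Cref{thm:merge-types}. Since all three pieces range over finite sets whose sizes depend only on $r,q,d,k$, the total color alphabet is bounded by some $f(r,q,d,k)$.

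The induction proceeds by the standard cases. At a leaf $V(G_t)=\emptyset$ and the expression is empty. At an introduce node bringing $m\in V(M)$ into the bag I create the vertices $V(Q)\times\{m\}$ via the one-vertex operation with the just-defined colors, then install every $\xi$-edge of $G_t$ incident to the new vertices by a bounded number of recolor and add-edges steps; whether two colors yield an adjacency is read from a finite lookup table, which is justified by \Cref{thm:merge-types} applied to $\bar s_v\cup\bar s_u$, a tuple that $4^q$-separates $v$ from $u$ in $G$ because any short $v$--$u$ path must cross $V(Q)\times X_t$ inside $B_R(G,\{v,u\})$. At a forget node the interface contribution of the departing bag vertex is absorbed into the neighboring colors by a further invocation of \Cref{thm:merge-types}, and any vertex whose $M$-distance to $X_t$ has now grown past $r$ is recolored to a designated sink value---strong $r$-locality guarantees such vertices take part in no later edge. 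At a join node the two sibling expressions are combined by disjoint union followed by the same color-driven cross-edge insertion. Correctness at the root follows because every edge $\xi(v,u)$ is processed at the least common ancestor of $v$ and $u$ with the correct adjacency rule, while no spurious edge is created thanks to parameter confinement to $Q^\circ_r$ and to the $\gamma$-coloring trick.

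The hardest part will be controlling the quantifier rank $q'$ and the separator size uniformly throughout the recursion so that a single bounded type alphabet suffices on every root-to-leaf path, in spite of the repeated applications of \Cref{thm:merge-types} at forget and join nodes. The delicate point is that each forget step must be realizable as a bounded composition of the four operations of \Cref{def:Hexpression}: the new color of each vertex must be a function only of its old color and the colors of boundedly many companions in the bag, otherwise the recoloring cannot be implemented inside the expression. Making the two bookkeepings, the $\gamma$-indexing along $Q$ and the bag-indexing along $M$, compose into one clean finite table is the heart of the technical work, and the choice of the constants $c$, $R$, $q'$ together with the sink-color convention for far vertices is where the proof's main payload lies.
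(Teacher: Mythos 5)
Your overall architecture matches the paper's: bottom-up recursion on a tree decomposition of $M$, parameter vertices taken from $Q$ so that the $(Q^\circ_r,\ell)$-edge-addition rule enforces strong $r$-locality for free, a proper coloring of a bounded power of $Q$ to refer symbolically to nearby $Q$-coordinates, and Dreier's theorem (\Cref{thm:merge-types}) to read off adjacency from types over a separating interface. However, the step you yourself flag as ``the heart of the technical work'' --- making the recoloring at each node a function of the old color alone --- is a genuine gap, and your choice of interface makes it unfixable as stated. You anchor the interface tuple $\bar s_v$ to the \emph{current bag} $X_t$, but $X_t$ is not monotone along a root-to-leaf path: at an introduce node the bag gains a vertex $m$, and the new color of an already-created vertex $v$ must then record the type of a tuple \emph{extended} by the copies of $m$ near $v$. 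The type of a larger tuple is not determined by the type of a smaller one, so two vertices sharing the old color may require different new colors, and the recolor operation of \Cref{def:Hexpression} (which replaces \emph{all} occurrences of a color uniformly) cannot implement this update. The paper's way out is the one ingredient missing from your proposal: it anchors the interface to the weak $r$-reachability set $\wreach{r}{M}{h_v}$ with respect to an order $\preceq_M$ compatible with the decomposition. This set is fixed once and for all per vertex, has size at most $\binom{r+k}{k}$ by the Grohe et al.\ bound, and still contains a separator between $v$ and everything outside $Y_t$ at distance at most $r$ (\Cref{claim:r-separate-Y_t-from-outside}). The ``initial color'' is defined globally as a partial function from symbolic subsets of this set to types, and the ``running color'' at node $t$ is merely its \emph{restriction} to the not-yet-forgotten part; since the domain only shrinks going up the tree, every recoloring is a restriction and hence a well-defined function of the old color.

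A second, more mechanical defect: you take $G_t$ over the union $X^+_t$ of all bags in the subtree and create the column $V(Q)\times\{m\}$ at introduce nodes. In a nice decomposition a vertex is introduced once in each branch below a join (and the two children of a join share their bag), so your ``disjoint union'' at join nodes is not disjoint and the same vertices would be created twice. The paper instead works with $Y_t=X^+_t\setminus X_{t^\uparrow}$ and creates each column at the unique forget node of its $M$-coordinate, which makes the subtrees genuinely disjoint. Both issues are repaired by the paper's weak-coloring device, but neither is resolved by what you have written.
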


\begin{proof}
  We postpone the definition of $f$ until the end
  of this proof.
  In the coming proof we will only consider the case that $r \ge 4^q$.
  This assumption is possible thanks to the following simple claim:
  
  \begin{claim}\label{claim:r-th-power-only}
    Let $\phi$ be a 
    $(Q^\circ_{\max\{r, 4^q\}}, f(r, q, d, k))$-expression
    valued $\xi(G)$. Then, the value of $\phi$ as a
    $(Q^\circ_{r}, f(r, q, d, k))$-expression equals $\xi(G)$.
  \end{claim}
    \begin{subproof}
      We denote by $G_r$ the value of $\phi$
      as a $(Q^\circ_{r}, f(r, q, d, k))$-expression;
      this is sound since $Q^\circ_{\max\{r, 4^q\}}$ and $Q^\circ_{r}$ have equal vertex set.
      Since $Q^\circ_{r}$ is a subgraph of $Q^\circ_{\max\{r, 4^q\}}$,
      we get that $G_r$ is a subgraph of $\xi(G)$.
      Suppose for a contradiction that there is a pair of
      vertices $u=[p_u, h_u], v=[p_v, h_v] \in V(G_r) = V(\xi(G)) = V(G)$
      such that $uv \in E(\xi(G)) \setminus E(G_r)$.
      Then, the distance in $Q$ between $p_u$ and $p_v$
      is strictly more than $r$ (and at most $\max\{r, 4^q\}$).
      Hence, the distance between $u$ and $v$ in $G$ is strictly
      greater than $r$, too, and so the edge $uv$ contradicts
      the assumption of strong $r$-locality of $\xi$.
    \end{subproof}

  Let $\ca T = (T, \ca X=(X_t)_{t \in V(T)})$ be
  a smooth tree decomposition of $M$ of width $k$ rooted in
  $r_0 \in V(T)$ such that the root bag $X_{r_0}=\emptyset$ is empty,
  every node of $T$ has at most two descendants,
  and for each leaf $t \in V(T) \setminus \{r_0\}$
  the bag $X_t$ contains exactly one vertex.
  We observe that we can modify any tree decomposition
  to satisfy our requirements.

  We denote by $t^\uparrow$ the parent node of $t$ in $T$.
  It follows from $X_{r_0}=\emptyset$ and from the interpolation
  property that for every vertex $h \in V(M)$
  there is a unique node $t_h \in V(T)$ such that
  $h \in X_{t_h} \setminus X_{t_h^\uparrow}$.
  We say that $t_h$ is the \textit{forget node} of $h$.
  Let $\preceq_T$ denote the leaves-to-root preorder of $T$.
  That is, $r_0 \preceq_T t$ for all $t \in V(T)$.
  We fix an arbitrary linear order $\preceq_M$
  satisfying that, for all $h_1, h_2 \in V(M)$,
  if $t_{h_1} \preceq_T t_{h_2}$ then $h_1 \preceq_M h_2$.

  We proceed in a way similar to the proof \cite[Theorem~5.2]{DBLP:conf/mfcs/HlinenyJ24}.
  However, roughly, instead of remembering adjacency to the not-yet-forgotten
  neighbors, we remember the $q'$-type w.r. to the not-yet-forgotten
  vertices of the weak reachability sets for some constant $q'$ depending
  only on $r$, $q$, $d$, and $k$.

  For a node $t \in V(T)$, let $X^+_t\subseteq V(M)$ denote the union
  of $X_s$ where $s$ ranges over $t$ and all descendants of~$t$.
  Let $Y_t=X^+_t \setminus X_{t^\uparrow}$ denote the vertices of
  $M$ which occur {\em only} in the bags of $t$ and its descendants.
  For the root $r_0$ of $T$, let specially $Y_{r_0}=X^+_{r_0}=V(M)$.
  Observe that all neighbors of a vertex $h \in Y_t$ in
  $V(M) \setminus Y_t$ must belong to the set $X_t \setminus Y_t$,
  by the interpolation property of a tree decomposition.

  \begin{claim}\label{claim:r-separate-Y_t-from-outside}
    Any path $P\subseteq M$ of length at most $r$ between
    vertices $h_1 \in Y_t$ and $h_2 \in V(M) \setminus Y_t$
    intersects the set
    $\big(\wreach{r}{M}{h_1} \cap X_t\big) \setminus Y_t$.
  \end{claim}
  \begin{subproof}
    It follows from the interpolation property and
    the previous observation that there is
    a vertex $h \in (X_t \setminus Y_t) \cap V(P)$.
    Let $h$ be such vertex on the path $P$ closest to~$h_1$,
    and let $P'$ be the subpath of $P$ between $h_1$ and $h$.
    By the choices of $h$, we have $V(P') \setminus \{h\} \subseteq Y_t$,
    and so $h \preceq_M h'$ for all vertices $h' \in V(P')$.
    Hence, $h \in \wreach{r}{M}{h_1}$.
  \end{subproof}

  Grohe et al. \cite{DBLP:conf/wg/GroheKRSS15} proved that the
  weak $r$-coloring number of any graph $J$ is at most
  $wcol_r(J) \le {r + tw(J) \choose tw(J)}$.
  Notably, our $\preceq_M$ is the order used in their proof
  to witness that $M$ has its weak $r$-coloring number at most
  ${r + k \choose k}$. That is, for every $h \in V(M)$,
  we get that $\big|\wreach{r}{M}{h}\big| \le {r + k \choose k}$.
  \medskip

  We are approaching the core of the proof of \Cref{lem:transductions}.
  For an integer $\ell \ge 1$, we denote by $Q_\ell$ the
  $\ell$-th power of~$Q$, and
  by $Q^\circ_\ell$ the reflexive closure of $Q_\ell$.
  We fix a proper coloring
  $s_Q: V(Q) \to [d^{3r} + 1]$ of $Q_{3r}$.
  Such coloring exists since $Q$ has degree at most $d$,
  and so $Q_{3r}$ has degree at most $d^{3r}$.
  
  {\em Invariant.}
  We are going to build a $(Q^\circ_r, f(r, q, d, k))$-expression
  (cf.~\Cref{def:Hexpression})
  $\phi_t$ for each node $t \in V(T)$ using structural induction on $T$;
  maintaining the properties that
  \begin{enumerate}[label={(\roman*)}]
  \item the value of $\phi_t$ is the 
  subgraph $\xi(G)[V(Q) \times Y_t]$ of $\xi(G)$ induced by
  $V(Q) \times Y_t$, where
  \item the parameter vertex of each vertex $v=[p, h] \in V(Q) \times Y_t$
  is $p \in V(Q^\circ_r) = V(Q)$, and that
  \item the color $\hat c_{v,t}$ of $v$ is as prescribed by \Cref{def:exprcolors}.
  \end{enumerate}

  Before giving the definition of the color $\hat c_{v,t}$, we recall
  the statement of \Cref{thm:merge-types}, and set
  \begin{equation}
    g^{\max{}} := \max\{q, r, \max_{0 \le \,i\, \le 2\left(d^{3r}+1\right)\cdot{r + k \choose k}} g(q, i)\}
  ,\end{equation}
  where $g$ is the function from \Cref{thm:merge-types}.

  \begin{definition}[color~$\hat c_{v,t}$]\label{def:exprcolors}
    Fix an arbitrary linear order $\preceq_Q$ on $V(Q)$
    satisfying that, for all $p_1,p_2 \in V(Q)$,
    $s_Q(p_1) \le s_Q(p_2)$ implies $p_1 \preceq_Q p_2$.
    Let $\preceq_G$ be the linear order on $V(G) = V(Q)\times V(M)$
    obtained by ordering the vertices lexicographically 
    according to $\preceq_M$ and $\preceq_Q$, that is,
    $[p_1, h_1] \preceq_G [p_2, h_2]$ if and only if
    $(h_1 \preceq_M h_2 \land h_1 \neq h_2) \lor (p_1 \preceq_Q p_2 \land h_1 = h_2)$.
    Denote by $G^s$ the $2\!\left(d^{3r}+1\right)$-colored
    graph obtained from $G$ by coloring each vertex
    $v=[p, h]$ by the color $(c^G_v, s_Q(p))$,
    where $c^G_v$ is the given color of $v$ in 2-colored~$G$.

    We define the \emph{initial color} $\hat c_v$ of a vertex
    $v=[p_v, h_v] \in V(Q) \times V(M)$ as a {partial}(!) function
    \begin{equation}
      \hat c_v: 2^{[d^{3r}+1] \times \wreach{r}{M}{h_v}} \to \mbox{\it Types}
    ,\end{equation}
    where $\hat c_v(W)$ for $W\subseteq[d^{3r}+1] \times \wreach{r}{M}{h_v}$
    is partially defined as follows:
    \begin{itemize}\item
    Let $Z_{v, W} := \big\{u\!=\![p_u, h_u] \in V(G)|\> p_u \in V(B_r(Q, p_v))$
    $\wedge (s_Q(p_u), h_u) \in W\big\}$.
    Let $\bar{w}_{v, W}$ be the unique $\preceq_G$-ordered tuple
    containing each vertex of $Z_{v, W}$ exactly once, and let
    $\bar{w}^+_{v, W}:=(\bar{w}_{v, W}, v)$ denote this tuple after
    appending $v$ itself at the end.
    If $\{(s_Q(p), h) |$ $[p, h] \in Z_{v, W}\}=W$,%
\footnote{As a brief explanation;
  each member of $W$ determines (in the set $Z_{v, W}$)
  at most one vertex of $G$ at close distance 
  to $v$ (since $s_Q$ is a proper coloring of $Q_{3r}$).
  However, such vertex $u$ as determined by $(s_Q(p_u), h_u)\in W$
  may not exist in the graph~$G$.
  In order to properly capture the relevant vertices $Z_{v, W}$ of $G$ with respect to~$W$,
  we define $\hat c_v(W)$ only when the ``completeness'' condition
  $\{(s_Q(p), h) |$ $[p, h] \in Z_{v, W}\}=W$ is satisfied.
}
    then we let $\hat c_v(W)$ be the $g^{\max{}}$-type of $\bar{w}^+_{v,W}$ in $G^s$,
    and otherwise we leave $\hat c_v(W)$ undefined.
    \end{itemize}
    Subsequently, we define the \emph{running color} $\hat c_{v,t}$
    of a vertex $v=[p_v, h_v] \in V(Q) \times V(M)$ at a node $t\in V(T)$
    as the restriction of the initial color $\hat c_v$ of $v$ (above) to the sets
    ${W \subseteq \left[d^{3r}+1\right] \times \big(\wreach{r}{M}{h_v} \setminus Y_t\big)}$.
  \end{definition}

  Having defined the colors $\hat c_{c,t}$ (to be maintained in our construction),
  we will gradually describe the construction of the sought expressions $\phi_t$
  on the vertex set $V(Q) \times Y_t$ for $t\in V(T)$.
  We begin by independently constructing their subexpressions $\psi_{h}$
  valued the subgraph $\xi(G)[V(Q) \times \{h\}]$ for each vertex $h \in V(M)$
  in \Cref{claim:copy-of-Q}.
  \Cref{fig:hcw-grid} can be used as a ``lightweight'' (very simplified)
  illustration of this construction -- steps (a)--(e) in \Cref{fig:hcw-grid}
  illustrate building the expressions $\psi_h$, 
  while steps (f)--(l) in the picture correspond to the subsequent
  construction of the expressions $\phi_t$ below.

  \begin{claim}\label{claim:copy-of-Q}
    There is a function $f': \mathbb{N}^4 \to \mathbb{N}$
    such that, for each vertex $h \in V(M)$, there is a 
    $(Q^\circ_r, f'(r, q, d, k))$-expression $\psi_{h}$
    valued $\xi(G)[V(Q) \times \{h\}]$,
    such that each vertex $v = [p_v, h] \in V(Q) \times \{h\}$
    has its initial color $\hat c_v$ and parameter vertex $p_v$.
  \end{claim}
  \begin{subproof}
    For each nonempty subset $Z \subseteq V(Q)$, by induction on $|Z|$,
    we build a $(Q^\circ_r, f(r, q, d, k))$-expression $\psi_{h}^Z$
    valued $\xi(G)[Z \times \{h\}]$, such that
    each vertex $v = [p_v, h] \in Z\times \{h\}$
    has color $(s_Q(p_v),\hat c_v)$ and parameter vertex $p_v$.

    In the base case, that is $|Z| = 1$, the expression
    $\psi_{h}^Z$ is trivial since we only create a single labeled vertex without edges.
    
    Assume $|Z| \ge 2$ and choose a vertex $v=[p_v, h]$ where $p_v\in Z$.
    We apply our induction assumption to obtain the
    expression $\psi_{h}^{Z\setminus\{p_v\}}$ and denote by $K$ its value.
    We make~$K+v$ by taking a disjoint union of
    $\psi_{h}^{Z\setminus\{p_v\}}$ and an expression creating
    $v$ with parameter vertex $p_v$ and a special color~\textsf{NewVertex}.
    Then, for every $u=[p_u, h] \in ({Z\setminus\{p_v\}})\times\{h\}$
    such that $uv \in E(\xi(G))$,
    we apply the operation of adding edges (cf.~\Cref{def:Hexpression}.d) between colors
    \textsf{NewVertex} and~$(s_Q(p_u),\hat c_u)$.

    A crucial observation is that, by doing so, we create only one edge for
    each such choice of~$u$, that is, the edge $uv$.
    Indeed, first note that since $\xi$ is strongly $r$-local, the distance
    between $u$ and $v$ in $G$, as well as the distance
    between $p_u$ and $p_v$ in $Q$, is at most $r$, and so $p_up_v\in E(Q_r^\circ)$.
    Let $x=[p_x, h]$ be any vertex of color equal to
    $(s_Q(p_u),\hat c_u)$ whose parameter vertex $p_x$ is
    adjacent to $p_v$ in $Q^\circ_{r}$.
    Then the distance in $Q$ between $p_x$ and $p_v$ is also at most $r$,
    and hence between $p_x$ and $p_u$ at most $2r$ by the triangle inequality.
    Since $s_Q$ is a proper coloring of $Q_{3r}$ and $s_Q(p_x)=s_Q(p_u)$,
    we conclude $p_x=p_u$, and consequently, $x=u$.

    Therefore, we have created an expression valued $\xi(G)[Z \times \{h\}]$,
    and now we recolor \textsf{NewVertex} to $(s_Q(p_v),\hat c_v)$
    to finish the desired expression $\psi_{h}^{Z}$.
    Finally, we obtain $\psi_{h}$ from $\psi_{h}^{V(Q)}$
    by recoloring each vertex from any color $(a, c)$ to color $c$.
    
    As the last task, we confirm that the number of colors used in the expression
    $\psi_{h}$ is bounded by a function ($f'$) of
    $r$, $q$, $d$, and $k$, because the size of the codomain of $s_Q$
    is at most $d^{3r}+1$, and unions of both domains and codomains
    of the functions $\hat c_u$ over all vertices
    $u \in V(Q) \times \{h\}$ have bounded size. 
    Indeed, all the domains are
    subsets of $\left[d^{3r}+1\right]\times\wreach{r}{M}{h}$,
    and so their union is of bounded size.
    The number of possible types in the codomains of functions
    $\hat c_u$ is bounded, since the number of colors is bounded,
    the quantifier rank is bounded, and the sizes of the tuples are bounded.
  \end{subproof}

  We now build the sought expression $\phi_t$. Assume
  that, for each child $t'$ of $t$, we have inductively built expressions
  $\phi_{t'}$ satisfying the aforementioned invariants (i), (ii), and (iii).
  We also shortly denote by $V(\phi)$ the vertex set of the value of~$\phi$.
  Then, we proceed with the following steps:
  \begin{enumerate}[label={\arabic*.}]\parskip1pt
    \item\label{step:leaf}
    If $t$ is a leaf, then $X_t=\{h\}$ for some node $h \in V(M)$.
    So, we create $\phi_t$ from $\psi_{h}$
    by recoloring each vertex $w$ from its initial color $\hat c_w$ to
    its running color $\hat c_{w,t}$. 

    \item We further assume that $t$ is not a leaf,
    and denote by $t_i$ for $i=1$ or $i\in\{1,2\}$ the children of $t$ in $T$.
    Let $\phi_{t_i}^\uparrow$ be the expression obtained
    from $\phi_{t_i}$ by recoloring each color $c$ to
    $(i, c)$ in $\phi_{t_i}^\uparrow$.
    If $t$ has only one child, we set $\phi_t^1:=\phi_{t_1}^\uparrow$.
    Otherwise, $t$ has exactly two children $t_1$ and $t_2$,
    and we create $\phi_t^1$ by taking the
    disjoint union of $\phi_{t_1}^\uparrow$~and~$\phi_{t_2}^\uparrow$.

    \item If $Y_t \setminus (Y_{t_1} \cup Y_{t_2}) = \emptyset$,
    then we set $\phi_t^2 := \phi_t^1$.
    Otherwise, it follows from smoothness of the decomposition of $M$
    that $Y_t \setminus (Y_{t_1} \cup Y_{t_2}) = \{h\}$ for some $h \in V(M)$.
    We create an expression $\psi_{h}^\uparrow$
    from $\psi_{h}$ by recoloring each color $c$ to $(3, c)$,
    and $\phi_t^2$ by taking the disjoint union of $\phi_t^1$
    and~$\psi_{h}^\uparrow$.

    \item\label{step:add-edges}
    We create $\phi_t^3$ from $\phi_t^2$ by adding
    edges between the following pairs of colors:
    \begin{enumerate}[label={(\Alph*)}]
      \item\label{step:add-edges-between-subtrees}
      If $t$ has two children, then we do the following:
      For every vertex pair $u \in V(Q)\times Y_{t_1}=V(\phi_{t_1})$ and
      $v \in V(Q)\times Y_{t_2}=V(\phi_{t_2})$ such that $uv \in E(\xi(G))$,
      we apply the operation of adding edges between
      colors $(1,\hat c_{u,t_1})$ and $(2,\hat c_{v,t_2})$
      (cf.~\Cref{def:Hexpression}.d).
      \item\label{step:add-edges-between-new-and-old}
      If $Y_t \setminus (Y_{t_1} \cup Y_{t_2}) \neq \emptyset$,
      that is, $Y_t \setminus (Y_{t_1} \cup Y_{t_2})= \{h\}$,
      then we do the following:
      For every $i\in\{1,2\}$ and every pair $u \in V(\phi_{t_i})$
      and $w \in V(Q)\times\{h\}=V(\psi_{h})$ such that $uw \in E(\xi(G))$,
      we apply the operation of adding edges between
      colors $(i,\hat c_{u,t_i})$ and $(3,\hat c_w)$.
    \end{enumerate}

    \item\label{step:final-recoloring}
    Finally, we make $\phi_t$ from $\phi_t^3$
    by performing the following recoloring operations to satisfy invariant (iii):
    \begin{itemize}
      \item For $i=1,2$, replace each color $(i,\hat c_{u,t_i})$
      by $\hat c_{u, t}$.
      \item Replace each color $(3,\hat c_{w})$ by $\hat c_{w, t}$.
    \end{itemize}
  \end{enumerate}

  We note that the recoloring operations of
  the steps~\ref{step:leaf}~and~\ref{step:final-recoloring} are well-defined,
  because the way we restrict the function $\hat c_{u,t_i}$
  to obtain $\hat c_{u, t}$ (resp. $\hat c_{w}$ to obtain $\hat c_{w, t}$)
  depends only on $t$.
  
  We denote by $G^t$ the value of $\phi_t$.
  It is clear that $V(G^t) = V(Q) \times Y_t$,
  and that the colors and the parameter vertices of $\phi_t$
  satisfy our invariants (ii) and (iii).
  We continue with:

  \begin{claim}\label{clm:alledges}
    $E\big(\xi(G)[V(Q) \times Y_t]\big) \subseteq E(G^t)$
  \end{claim}
  \begin{subproof}
    Suppose otherwise, and let
    $uv \in E(\xi(G)[V(Q) \times Y_t]) \setminus E(G^t)$
    be an edge we have not created.
    If $u,v \in V(\phi_{t_1})$ or $u,v \in V(\phi_{t_2})$,
    then we obtain a contradiction with the induction
    assumption. Similarly, if $u, v \in V(\psi_h)$,
    then we contradict \Cref{claim:copy-of-Q}.
    Hence, $uv$ is between two of the three subgraphs in the previous construction.
    It follows from strong $r$-locality of $\xi$
    that the distance between $u$ and $v$ in $G$
    is at most $r$, and so is the distance between their
    parameter vertices in $Q$, and
    the parameter vertices are adjacent in $Q^\circ_r$.
    Hence, we have created the edge $uv$ in
    step~\ref{step:add-edges} above. 
  \end{subproof}

  \begin{claim}\label{claim:no-extra-edges}
    $E(G^t) \subseteq E\big(\xi(G)[V(Q) \times Y_t]\big)$
  \end{claim}
  \begin{subproof}
    Suppose otherwise for a contradiction.
    Let $xy \in E(G^t) \setminus E(\xi(G)[V(Q) \times Y_t])$
    be any unwanted edge of $G^t$.
    As in \Cref{clm:alledges}, we must have added
    the edge $xy$ in step~\ref{step:add-edges}.
    Then, there is an edge $uv \in E(\xi(G)[V(Q) \times Y_t])$
    such that $x$ and $u$ as well as $y$ and $v$ have the same
    color. Let $[p_x, h_x]=x$, $[p_y, h_y]=y$,
    $[p_u, h_u]=u$, and $[p_v, h_v]=v$.

    It follows from strong $r$-locality of $\xi$
    that the distance between $u$ and $v$
    in $G$ it at most $r$, and so the distance
    between $h_u$ and $h_v$ in $M$ is at most $r$.
    Furthermore, we have created both edges $xy$ and $uv$, 
    and so $p_up_v\in E(Q_r^\circ)$ and $p_xp_y\in E(Q_r^\circ)$
    and, consequently, the $Q$-distance within each of the pairs
    $p_u,p_v$ and $p_x,p_y$ is at most $r$.

    In a nutshell,
    we are going to find an $r$-separator and use \Cref{thm:merge-types} to
    derive a contradiction. We denote by $\tilde c_x$, $\tilde c_y$,
    $\tilde c_u$, and $\tilde c_v$
\smallskip
    the underlying (initial or running) color of vertices $x$, $y$, $u$, and $v$
    when creating the edge~$xy$, and recall that
    these colors are partial functions
    $2^{[d^{3r}+1] \times \wreach{r}{M}{h_v}}$ $\to \mbox{\it Types}$.
    For both pairs $(a,b)\in\{(x,u),(y,v)\}$, we will find a set
    $W_{a,b}$ satisfying the following three conditions:
    \begin{itemize}
      \item[(I)] $W_{a,b}$ is the \emph{unique} inclusion-maximal subset of 
      $[d^{3r}+1] \times V(M)$ such that both values
      $\tilde c_{a}(W_{a,b})$ and
      $\tilde c_{b}(W_{a,b})$ are defined.
      \item[(II)] Both $\preceq_G$-ordered tuples $\bar{w}_{a, W_{a,b}}$ and
      $\bar{w}_{b, W_{a,b}}$ contain the same set of vertices $Z'_{a,b}$.
      \item[(III)] The set $Z'_{a,b}$ is an $r$-separator of $a$ and $b$.
    \end{itemize}

    We distinguish two cases, depending on whenever the edge $xy$ was
    created in step~\ref{step:add-edges}\ref{step:add-edges-between-subtrees} or
    step~\ref{step:add-edges}\ref{step:add-edges-between-new-and-old}.
    \begin{case}\label{xcase1}
      The edge $xy$ has been added in step~\ref{step:add-edges}\ref{step:add-edges-between-new-and-old}.
    \end{case}
    Without loss of generality, we assume that $x, u \in V(\phi_{t_i})$ where $i\in\{1,2\}$,
    and $y, v \in V(\psi_h)$ where $h=h_y=h_v$.
    Then $\tilde c_{x} = \hat c_{x, t_i}$, $\tilde c_{y} = \hat c_{y}$,
    $\tilde c_{u} = \hat c_{u, t_i}$, and $\tilde c_{v} = \hat c_{v}$.

    Recalling \Cref{def:exprcolors}, we observe that $x$ and $u$ (resp., $y$ and $v$)
    have the same type in the colored graph $G^s$,
    because $\hat c_{x, t_i}(\emptyset) =\hat c_{u, t_i}(\emptyset)$
    (resp., $\hat c_y(\emptyset) =\hat c_v(\emptyset)$)
    is always defined, and the value of $\hat c_z(\emptyset)=\hat c_{z, t'}(\emptyset)$
    is the type of $z$ for all vertices $z \in V(G)$ and all nodes $t' \in V(T)$.
    In particular, since each vertex color in $G^s$ contains the color of
    its parameter vertex, we conclude that $s_Q(p_x) = s_Q(p_u)$ and $s_Q(p_y)=s_Q(p_v)$.

    \medskip
    We continue the proof with three technical subclaims.
    For any $z=[p_z, h_z] \in V(\phi_{t_i})$,
    let us shortly denote
    $Z^1_{z}:=V(B_r(Q,p_z))\times\big(\wreach{r}{M}{h_z}\setminus Y_{t_i}\big)$.
    Similarly, for any $z=[p_z, h_z] \in V(\psi_{h})$, we shortly
    denote $Z^2_{z}:=V(B_r(Q,p_z))\times\wreach{r}{M}{h_z}$.

    Given any vertices $a=[p_a, h_a] \in V(\phi_{t_i})$ and
    $b=[p_b, h_b] \in V(\psi_{h})$,
    we denote by $Z'_{a,b}$ the set $Z'_{a,b}:=Z^1_a\cap Z^2_b$.
    Let $W_{a,b} := \{(s_Q(p),h) |\> (p,h) \in Z'_{a,b}\}$, and
    let $D_{a,b} \subseteq \mbox{$[d^{3r}+1]$} \times V(M)$ be an arbitrary
    inclusion-maximal set such that both values $\hat c_{a, t_i}(D_{a,b})$ and
    $\hat c_{b}(D_{a,b})$ are defined.

    In this case, conditions (I), (II), and (III) are fulfilled in order by the following three claims:
    \begin{claim}\label{xclm:D=W}
      Assume that distance between $p_a$ and $p_b$ is at most $r$ in $Q$.
      Then $D_{a,b} = W_{a,b}$, and so $D_{a,b}$ is uniquely determined.
    \end{claim}
    \begin{subproof}
        Let $(c, h') \in D_{a,b}$ be arbitrary.
        Then, the set $Z_{a, D_{a,b}}$ (cf.~\Cref{def:exprcolors})
        contains a vertex $[p^a, h']$ such that
        $c = s_Q(p^a)$ and $p^a \in V(B_r(Q, p_a))$.
        Furthermore, \mbox{$h' \in \wreach{r}{M}{h_a}\setminus Y_{t_i}$},
        because $\hat c_{a, t_i}$ is the restriction of $\hat c_a$
        to subsets of $[d^{3r}+1]\times(\wreach{r}{M}{h_a} \setminus Y_{t_i})$.
        So, $[p^a, h'] \in Z^1_a$.
  
        Similarly, the set $Z_{b, D_{a,b}}$ 
        contains a vertex $[p^b, h']$ such that
        $c = s_Q(p^b)$ and $p^b \in V(B_r(Q, p_b))$.
        Furthermore, $h' \in \wreach{r}{M}{h_b}$ since the domain
        of $\hat c_{b}$ contains only subsets of $[d^{3r}+1]\times\wreach{r}{M}{h_b}$.
        So, $[p^b, h'] \in Z^2_b$.

        The distance in $Q$ between $p_a$ and $p_b$ is at most $r$,
        and since $s_Q$ is a proper coloring of $Q_{3r}$,
        we get that $p^a = p^b$. Hence, $[p^a, h]=[p^b, h] \in Z'_{a,b}$,
        so $(c, h) = (s_Q(p^a), h) \in W_{a, b}$.

        Altogether, $D_{a,b} \subseteq W_{a, b}$.
      We observe that both $\hat c_a(W_{a,b})$ and
      $\hat c_b(W_{a,b})$ are defined as witnessed by the vertices in $Z'_{a,b}$.
      Hence, $W_{a,b} \subseteq D_{a,b}$.
    \end{subproof}
    \begin{claim}\label{xclm:w=Z}
        Both tuples $\bar{w}_{a, W_{a,b}}$ and $\bar{w}_{b, W_{a,b}}$ contain
        exactly the vertices of $Z'_{a,b}$.
    \end{claim}
    \begin{subproof}
      This follows from \Cref{def:exprcolors}
      because for any two vertices $c=[p_c, h_c]$ and $d=[p_d, h_d]$
      satisfying $(s_Q(p_c), h_c) = (s_Q(p_d), h_d)$ and
      $p_c, p_d \in V(B_r(Q, p_a))$, we directly get $h_c=h_d$, and $p_c=p_d$
        (so, $c=d$) since $s_Q$ is a proper coloring of $Q_{3r}$ and
        $p_cp_d \in E(Q_{3r})$ if not equal.
    \end{subproof}
    \begin{claim}\label{xclm:Zsepar}
      The set $Z'_{a,b}$ $r$-separates $a$ and $b$ in~$G$.
    \end{claim}
    \begin{subproof}
        Recall that $a \in V(\phi_{t_i})$ and $b \in V(\psi_h)$.
      We assume that the distance between $a$ and $b$ in $G$
      is at most~$r$; otherwise, the claim is trivial.
        Consider any path $P \subseteq G$ of length at most $r$ between $a$ and $b$.
      Recall that the mapping $[q, h] \mapsto h$ from vertices of $G$ to
      their parameter vertices in the expression is a graph homomorphism,
      that is, it preserves the edge relation.
      We denote by $P_M$ the image of $P$ in $M$ under this homomorphism.

      Let $h_0$ be the $\preceq_M$-minimal vertex of $P_M$.
      Then, $P_M$ certifies that $h_0 \in \wreach{r}{M}{h_a} \cap \wreach{r}{M}{h_b}$.
      It follows from the interpolation property
      that $P_M$ contains a vertex $h'\in X_t$.
        Notably, $h' \not \in Y_{t_i}$
        and $h_0 \preceq_M h'$, so $h_0 \not \in Y_{t_i}$.
      Let $z_0=[p_0, h_0]$ be an arbitrary preimage of $h_0$ in~$G$.
      Furthermore, since the distance in $G$ between
      each of $a$, $b$ and $z_0$ is at most $r$,
      we get that the distance between each of $p_a$, $p_b$
      and $p_0$ in $Q$ is at most $r$, and so
      $p_0 \in V(B_r(Q, p_a)) \cap V(B_r(Q, p_b))$.
        Therefore, by the definition, $z_0 \in Z'_{a,b}$.
    \end{subproof}

  
    \begin{case}\label{xcase2}
      The edge $xy$ has been added in step~\ref{step:add-edges}\ref{step:add-edges-between-subtrees}.
    \end{case}

    Without loss of generality, we now assume
    that $x, u \in V(\phi_{t_1})$ and $y, v \in V(\phi_{t_2})$.
    Then $\tilde c_{x} = \hat c_{x, t_1}$, $\tilde c_{y} = \hat c_{y, t_2}$,
    $\tilde c_{u} = \hat c_{u, t_1}$, and $\tilde c_{v} = \hat c_{v, t_2}$.

    We observe, as in \Cref{xcase1} again, that $x$ and $u$ (resp. $y$ and $v$)
    have the same type in $G^s$, because
    $\hat c_{x, t_1}(\emptyset) =\hat c_{u, t_1}(\emptyset)$
    (resp. $\hat c_{y, t_2}(\emptyset) =\hat c_{v, t_2}(\emptyset)$)
    is always defined, and the value of $\hat c_{z, t'}(\emptyset)$
    is the type of $z$ for all vertices $z \in V(G)$ and all
    nodes $t' \in V(T)$.
    So, again, $s_Q(p_x) = s_Q(p_u)$ and $s_Q(p_y)=s_Q(p_v)$.

    \medskip
    We again continue the proof with three technical subclaims
    corresponding in order to conditions (I), (II), and (III), as stated before \Cref{xcase1}.

    As in \Cref{xcase1}, this time for both $i\in\{1,2\}$ and any $z=[p_z, h_z] \in V(\phi_{t_i})$, we shortly write
    $Z^i_{z}:=V(B_r(Q,p_z))\times\big(\wreach{r}{M}{h_z}\setminus Y_{t_i}\big)$.
    Given any vertices $a=[p_a, h_a] \in V(\phi_{t_1})$ and
    $b=[p_b, h_b] \in V(\phi_{t_2})$,
    we denote by $Z'_{a,b}$ the set $Z'_{a,b}:=Z^1_a\cap Z^2_b$.
    Let $W_{a,b} := \{(s_Q(p),h) |\> (p,h) \in Z'_{a,b}\}$, and
    let $D_{a,b} \subseteq \mbox{$[d^{3r}+1]$} \times V(M)$ be an arbitrary
    inclusion-maximal set such that both values $\hat c_{a, t_1}(D_{a,b})$ and
    $\hat c_{b, t_2}(D_{a,b})$ are defined.

    \begin{claim}\label{clm:D=W}
      Assume that distance between $p_a$ and $p_b$ is at most $r$ in $Q$.
      Then $D_{a,b} = W_{a,b}$, and so $D_{a,b}$ is uniquely determined.
    \end{claim}
    \begin{subproof}
      Let $(c, h) \in D_{a,b}$ be arbitrary.
      Then, the set $Z_{a, D_{a,b}}$ (cf.~\Cref{def:exprcolors})
      contains a vertex $[p^a, h]$ such that
      $c = s_Q(p^a)$ and $p^a \in V(B_r(Q, p_a))$.
      Furthermore, $h \in \wreach{r}{M}{h_a}\setminus Y_{t_1}$,
      because $\hat c_{a, t_1}$ is the restriction of $\hat c_a$
      to subsets of $[d^{3r}+1]\times(\wreach{r}{M}{h_a} \setminus Y_{t_1})$.
      So, $[p^a, h] \in Z^1_a$.

      We analogously obtain that the set $Z_{b, D_{a,b}}$ 
      contains a vertex $[p^b, h]$ such that
      $c = s_Q(p^b)$, $p^b \in V(B_r(Q, p_b))$, and $[p^b, h] \in Z^2_b$.
      The distance in $Q$ between $p_a$ and $p_b$ is at most $r$,
      and since $s_Q$ is a proper coloring of $Q_{3r}$,
      we get that $p^a = p^b$. Hence, $[p^a, h]=[p^b, h] \in Z'_{a,b}$,
      and so $(c, h) = (s_Q(p^a), h) \in W_{a, b}$.

      Altogether, $D_{a,b} \subseteq W_{a, b}$.
      We observe that both $\hat c_a(W_{a,b})$ and
      $\hat c_b(W_{a,b})$ are defined as witnessed by the vertices in $Z'_{a,b}$.
      Hence, $W_{a,b} \subseteq D_{a,b}$.
    \end{subproof}

    \begin{claim}\label{clm:w=Z}
      Both tuples $\bar{w}_{a, W_{a,b}}$ and $\bar{w}_{b, W_{a,b}}$ contain
      exactly the vertices of $Z'_{a,b}$.
    \end{claim}
    \begin{subproof}
      This follows from \Cref{def:exprcolors}
      because for any two vertices $c=[p_c, h_c]$ and $d=[p_d, h_d]$
      satisfying $(s_Q(p_c), h_c) = (s_Q(p_d), h_d)$ and
      $p_c, p_d \in V(B_r(Q, p_a))$, we directly get $h_c=h_d$, and $p_c=p_d$
      (so, $c=d$) since $s_Q$ is a proper coloring of $Q_{3r}$ and
      $p_cp_d \in E(Q_{3r})$ if not equal.
    \end{subproof}

    \begin{claim}\label{clm:Zsepar}
      The set $Z'_{a,b}$ $r$-separates $a$ and $b$ in~$G$.
    \end{claim}
    \begin{subproof}
      Recall that $a \in V(\phi_{t_1})$ and $b \in V(\phi_{t_2})$.
      We assume that the distance between $a$ and $b$ in $G$
      is at most~$r$; otherwise, the claim is trivial.
      Consider any path $P\subseteq G$ of length at most $r$ between $a$ and $b$.
      Recall that the mapping $[q, h] \mapsto h$ from vertices of $G$ to
      their parameter vertices in the expression is a graph homomorphism,
      that is, it preserves the edge relation.
      We denote by $P_M$ the image of $P$ in $M$ under this homomorphism.

      Let $h_0$ be the $\preceq_M$-minimal vertex of $P_M$.
      Then, $P_M$ certifies that $h_0 \in \wreach{r}{M}{h_a} \cap \wreach{r}{M}{h_b}$.
      It follows from the interpolation property
      that $P_M$ contains a vertex $h'\in X_t$.
      Notably, $h' \not \in Y_{t_1} \cup Y_{t_2}$
      and $h_0 \preceq_M h'$, so $h_0 \not \in Y_{t_1} \cup Y_{t_2}$.
      Let $z_0=[p_0, h_0]$ be an arbitrary preimage of $h_0$ in~$G$.
      Furthermore, since the distance in $G$ between
      each of $a$, $b$ and $z_0$ is at most $r$,
      we get that the distance between each of $p_a$, $p_b$
      and $p_0$ in $Q$ is at most $r$, and so
      $p_0 \in V(B_r(Q, p_a)) \cap V(B_r(Q, p_b))$.
      Therefore, by the definition, $z_0 \in Z'_{a,b}$.
    \end{subproof}

    Back in \Cref{claim:no-extra-edges}, since
    $\tilde c_{u}(W_{u,v})$ and $\tilde c_{v}(W_{u,v})$
    are defined, we then get that $\tilde c_{x}(W_{u,v})=\tilde c_{u}(W_{u,v})$
    and $\tilde c_{y}(W_{u,v})=\tilde c_{v}(W_{u,v})$ are defined as well.
    Hence, using (I), we get that $W_{u,v} \subseteq W_{x,y}$.
    Similarly, since $\tilde c_{x}(W_{x,y})$ and $\tilde c_{y}(W_{x,y})$
    are defined, we get that $\tilde c_{u}(W_{x,y})=\tilde c_{x}(W_{x,y})$
    and $\tilde c_{v}(W_{x,y})=\tilde c_{y}(W_{x,y})$ are defined as well.
    Hence, $W_{x,y} \subseteq W_{u,v}$.
    Altogether, $W_{x,y} = W_{u,v}$.

    Hence, $\bar{w}_{u, W_{u,v}} = \bar{w}_{v, W_{u,v}}$ (since they are
    the same $\preceq_G$-ordering of the same set $Z'_{u,v}$ by (II)),
    and similarly $\bar{w}_{x, W_{u, v}} = \bar{w}_{x, W_{x, y}} = 
    \bar{w}_{y, W_{x, y}} = \bar{w}_{y, W_{u, v}}$.
    Notably, the tuples $\bar{w}_{u, W_{u, v}} = \bar{w}_{v, W_{u, v}}$
    and $\bar{w}_{x, W_{x, y}} = \bar{w}_{y, W_{x, y}}$
    have the same size equal to $|W_{u, v}|=|W_{x, y}|$
    (see~\Cref{def:exprcolors}).

    Recall that $r \ge 4^q$.
    From (II) and (III), we know that 
    $\bar{w}_{u, W_{u,v}}=\bar{w}_{v, W_{u,v}}$
    $4^q$-separates $u$ and~$v$, and likewise
    $\bar{w}_{x, W_{x,y}}=\bar{w}_{y, W_{x,y}}$
    $4^q$-separates $x$ and~$y$.
    So, we apply \Cref{thm:merge-types} to obtain that
    the type $tp_{q}(G^s, uv)$ of the pair $uv$ (resp.~the type
    $tp_{q}(G^s, xy)$ of $xy$) depends only on the types
    $tp_{g^{\max}}(G^s, \bar{w}_{u, W_{uv}}u) =\tilde c_{u}(W_{uv}) =
    \tilde c_{x}(W_{xy}) = tp_{g^{\max}}(G^s, \bar{w}_{x, W_{xy}}x)$
    and $tp_{g^{\max}}(G^s, \bar{w}_{v, W_{uv}}v) = \tilde c_{v}(W_{uv})
    $ $= \tilde c_{y}(W_{xy}) = tp_{g^{\max}}(G^s, \bar{w}_{x, W_{xy}}x)$;
    hence $tp_{q}(G^s, uv) = tp_{q}(G^s, xy)$.
    In particular, $\xi \in tp_{q}(G^s, uv) = tp_{q}(G^s, xy)$,
    and so $xy \in E(\xi(G))$, a contradiction.

    This finishes the proof of \Cref{claim:no-extra-edges}.
  \end{subproof}

  We have shown that $\phi_t$ satisfies all three invariants.
  However, it remains to show that the number of colors
  is bounded by a function $f$ of $r$, $q$, $d$, and $k$.
  We remark that the whole set $\{\hat c_{v, t} | v \in V(G), t \in V(T)\}$
  has unbounded size. We show that, for each subexpression
  $\phi$ of $\phi_t$, the value of $\phi$
  contains only a bounded number of colors. By doing so,
  we implicitly define the function $f$. Once we have done so,
  we show how to replace the functions $\hat c_{v, t}$
  by numbers from the set $[f(r,q,d,k)]$ so that
  we obtain an expression with the same value
  and bounded total number of colors.

  \begin{claim}\label{claim:count-colors}
    For each subexpression $\phi$ of $\phi_{t}$,
    the value of~$\phi$ features a number of colors at most
    $f(r, q, d, k)$, where the function $f$
    depending only on $r$, $q$, $d$, $k$ is defined
    in~the~proof.
  \end{claim}

  \begin{subproof}
    By the induction assumption, as expressed in \Cref{claim:copy-of-Q},
    the subexpressions $\psi_h$, $\phi_{t_1}$, and $\phi_{t_2}$
    contain at most $f(r, q, d, k)$ (so-far implicit number) of colors. 

    We now bound the number of colors involved when
    creating $\phi_t$ from $\psi_h$, $\phi_{t_1}$, and $\phi_{t_2}$.
    The worst case, over all possible values of these subexpressions,
    defines the function $f(r, q, d, k)$.

    We observe that it suffices to bound the size of
    both unions of domains and codomains of functions $\hat c_a$,
    $\hat c_{a, t}$, $\hat c_{a, t_1}$, and $\hat c_{a, t_2}$.
    Size of the union of the codomains is clearly bounded, since
    there is only bounded number of $g^{\max}$-types of tuples
    of size at most $2\left(d^{3r}+1\right)\cdot{r + k \choose k}$.
    We have already shown that the union of the domains
    of $\hat c_{a, t}$ has bounded size in the proof of \Cref{claim:copy-of-Q},
    and we skip repetition of the argument here.
    We observe that, in order to bound the size of unions
    of the domains of the remaining functions, it suffices
    to bound the size of the set
    $\bigcup_{h \in Y_{t}} \wreach{r}{M}{h} \setminus Y_{t'}$
    for all $t' \in \{t, t_1, t_2\}$.
    Let $t' \in \{t, t_1, t_2\}$ be arbitrary. 
    We show that $\bigcup_{h \in Y_{t'}} \wreach{r}{M}{h} \setminus Y_{t'}
    \subseteq \bigcup_{h \in X_{t'}} \wreach{r}{M}{h}$.
    Let $h_1 \in Y_{t'}$ be arbitrary. Observe that, if
    there is a vertex $h_2 \in \wreach{r}{M}{h} \setminus Y_{t'}$,
    then it follows from \Cref{claim:r-separate-Y_t-from-outside}
    that there is a vertex $p \in X_{t'}$ such that
    $h_2 \in \wreach{r}{M}{p}$.
    Hence,\\
    $\left|\bigcup_{h \in Y_{t'}} \wreach{r}{M}{h} \setminus Y_{t'}\right|
    \le$ $\left|\bigcup_{h \in X_{t'}} \wreach{r}{M}{h}\right| \le
    (k+1) \cdot {r + k \choose k}$. 

    We have shown that each subexpression uses only a bounded
    number of colors. Our upper bounds depend only
    on $r$, $q$, $d$, and $k$, so we have properly defined
    the sought function~$f$.
  \end{subproof}

  We show how to replace the colors $\hat c_{v, t}$
  by numbers from the set $[f(r, q, d, k)]$.
  We do so recursively, top-down, starting at the root $r_0$ of $T$
  and the expression $\phi_{r_0}$.
  Let $\phi$ be arbitrary subexpression of $\phi_{r_0}$ valued $K$. 
  If $\phi = \phi_{r_0}$, then
  we choose an arbitrary injection from the colors used by $K$ to 
  the set $[f(r, q, d, k)]$. It follows
  from \Cref{claim:count-colors} that such injective
  function exists.
  If $\phi$ has a parent subexpression
  $\phi^\uparrow$ valued $K^\uparrow$, then we assume
  that we have already defined an injective function
  $o^\uparrow$ mapping colors used by the
  vertices of $K^\uparrow$ to the numbers from the set $[f(r, q, d, k)]$.
  We define analogous function $o$ for $\phi$, such that
  $o$ and $o^\uparrow$ agree on all common colors.
  That is, let $c$ be arbitrary color used by at least
  one vertex of $K$. If $c$ is also used in $K^\uparrow$,
  then we set $o(c) := o^\uparrow(c)$. Otherwise, we
  have obtained $\phi^\uparrow$ from $\phi$ by recoloring
  $c$ to another color $c^\uparrow$.
  If $c^\uparrow$ is not used in the coloring of $K$,
  we set $o(c) := o^\uparrow(c^\uparrow)$.
  Otherwise, $c^\uparrow$ is used in $K$,
  so $K^\uparrow$ uses at most
  $f(r, q, d, k)-1$ colors (that is, at most one color less
  than~$K$). Since there can be only one such color $c$
  (because \Cref{def:Hexpression} allows only
  one recoloring at a time), we set $o(c)$ to the number
  from the set $[f(r, q, d, k)]$ unused by $o^\uparrow$.

  Finally, in each subexpression, we replace the colors
  by the assigned numbers from the set $[f(r, q, d, k)]$ according
  to the above constructed assignment. By doing so,
  we obtain an expression with the same value
  (except for the color names).

  Since $X_{r_0}=\emptyset$, we get that $Y_{r_0} = V(M)$,
  so $V(Q) \times Y_{r_0} = V(G)$, and it follows
  from the invariant (iii) that the value of
  $\phi_{r_0}$ is $\xi(G)$. Hence, we obtain a
  $(Q_r^\circ, f(r, q, d, k))$-expression for $\xi(G)$, thus finishing the
  proof of \Cref{lem:transductions}.
\end{proof}

Lastly, we derive the two main theorems of this article.

\begin{theorem}\label{thm:transd-admit-prod-str-bound-degree}
  Assume a class $\ca Q$ of bounded-degree graphs, and let
  $\ca C$ be a class of graphs admitting
  $\ca Q$-product structure.
  Let $\tau$ be a first-order transduction
  such that $\tau(\ca C)$ is a class of simple graphs.
  Then, for some integer $r$ depending only on $\tau$, the following holds.
  Denoting by $\ca Q^\circ_r$ the class containing the reflexive closures
  of $r$-th powers of graphs from $\ca Q$,
  the class $\tau(\ca C)$ is a perturbation
  of a class of bounded $\ca Q^\circ_r$-clique-width.

  Furthermore, if $\tau$ is immersive, then $\tau(\ca C)$ itself
  is of bounded $\ca Q^\circ_r$-clique-width.
\end{theorem}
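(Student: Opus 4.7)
The plan is to reduce the general transduction to the immersive case, where Lemma~\ref{lem:transductions} applies directly. To eliminate copying, one uses that classes admitting $\ca Q$-product structure do not need copying: the $k$-copy of $G \subseteq Q \boxtimes M$ can be realized by a non-copying transduction into a related graph sitting inside a bounded-degree product (employing $k$ disjoint copies of $Q$ together with a blow-up of $M$ of bounded tree-width and unary colors encoding the auxiliary $SV$ and $SC$ relations). Then Theorem~\ref{thm:pertimmersive} decomposes such a non-copying $\tau$ as subsumed by $\pi \circ \upsilon$, with $\upsilon$ immersive and $\pi$ a bounded perturbation, reducing both parts of the theorem to the immersive case.

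So let $\upsilon$ consist of a $c$-coloring followed by a $(\xi,\omega)$-interpretation, with $\xi$ strongly $r$-local of quantifier rank~$q$. For each $G \in \ca C$, fix a witness $G \subseteq Q \boxtimes M$ to the product structure, and extend the $c$-colored $G$ to a spanning subgraph $\widehat G$ of $Q \boxtimes M$ by marking the missing vertices with an extra ``phantom'' color. Lemma~\ref{lem:transductions} is stated for 2-colored graphs only for convenience; its proof goes through verbatim for any fixed number of unary colors, since the colors enter only through the types tracked in Definition~\ref{def:exprcolors}. Replace $\xi$ by $\widehat\xi := \xi(x_1,x_2) \wedge \omega(x_1) \wedge \omega(x_2)$, still strongly $r$-local of bounded quantifier rank, and apply the lemma to $\widehat\xi$ and $\widehat G$ to obtain a $(Q^\circ_r, \ell)$-expression valued $\widehat\xi(\widehat G)$ with $\ell$ bounded only in terms of $\tau$ and the product-structure constant~$k$.

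The output $\upsilon(G)$ is the induced subgraph of $\widehat\xi(\widehat G)$ on $\{v : \widehat G \models \omega(v)\}$. This induced subgraph inherits a $(Q^\circ_r, \ell)$-expression obtained by omitting the vertex-creation steps of the excluded vertices, since the subsequent recoloring and edge-addition operations act vacuously on colors no longer carried by any vertex. This yields bounded $\ca Q^\circ_r$-clique-width for $\upsilon(\ca C)$ and so the immersive ``furthermore'' part of the theorem; composing with the perturbation $\pi$ then delivers the general statement that $\tau(\ca C)$ is a bounded perturbation of a class of bounded $\ca Q^\circ_r$-clique-width. The chief obstacle is the copying step, which requires a delicate encoding to ensure that $SV$ and $SC$ fit inside a bounded-degree product while preserving strong locality of the downstream formulas; the remaining ingredients are routine assemblies of the preceding technical machinery.
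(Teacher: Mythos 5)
Your overall skeleton -- decompose $\tau$ via \Cref{thm:pertimmersive}, feed the resulting strongly local formula to \Cref{lem:transductions}, handle the non-spanning issue by padding with isolated ``phantom'' vertices, and recover an induced subgraph -- matches the paper's proof. The phantom-padding plus strong $r$-locality argument is sound (isolated phantoms lie outside every $r$-ball $B_r(\widehat G, (u,v))$ for $u,v\in V(G)$, so $\xi$ agrees on $G$ and $\widehat G$), and your claim that \Cref{lem:transductions} extends verbatim to any fixed number of unary colors is correct: colors only enter via the finitely many $g^{\max}$-types in \Cref{def:exprcolors}.

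The genuine gap is exactly where you flag uncertainty: the elimination of copying. Your sketch proposes placing $kG$ inside a bounded-degree product ``employing $k$ disjoint copies of $Q$ together with a blow-up of $M$ ... and unary colors encoding the auxiliary $SV$ and $SC$ relations.'' This does not work as stated. In $kQ \boxtimes M$ the $k$ copies $[(1,q),m],\ldots,[(k,q),m]$ of a single vertex $[q,m]$ lie in distinct components of $kQ$, hence at infinite product-distance, so the clique demanded by $SV$ cannot appear as a set of edges. Unary colors can express $SC$ (record the copy index $i$ on each vertex) but cannot express the \emph{binary} relation $SV$ on their own, and once you try to repair this by moving to $Q\boxtimes M[k]$ (blowing up $M$ rather than duplicating $Q$), you must also introduce proper colorings of $Q$ and $M$ to disambiguate $SV$-edges from ordinary product edges, rewrite $\xi$ over the new signature, and re-verify strong locality of the rewritten formula against the Gaifman graph of the new structure -- none of which is routine. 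You correctly identify this as the ``chief obstacle,'' but leaving it unresolved is leaving the proof incomplete.

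The paper sidesteps all of this with a single idea you missed: without loss of generality close $\ca C$ under the operation of adding a leaf to an arbitrary vertex (this preserves $\ca Q$-product structure, since one can add leaves to the factor $M$). Claim~\ref{claim:leaf-closed-no-color-copy} then shows that a leaf-closed class needs \emph{neither} copying \emph{nor} coloring: $k$-copying is simulated by attaching $k$ colored leaves to each vertex, and a $k$-coloring is simulated by attaching a number of leaves that encodes the color, after which both $SV$ and $SC$ are definable by short-path formulas. The coloring of the extended formula stays strongly local up to a constant additive shift in $r$, so one reaches \Cref{lem:transductions} with just the $2$-colored signature the lemma is stated for. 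This single trick replaces both your unfinished copying reduction and your multi-color extension of the lemma, and is the piece of machinery your proposal is missing.
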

\begin{proof}
  Recall \Cref{thm:pertimmersive} showing
  that every first-order transduction
  $\tau$ is subsumed by a composition of a copying operation $C$,
  an immersive transduction $\tau'$ and a perturbation $P$.
  Since $\tau(\ca C)$ is a class of simple graphs,
  we get that $\tau'$ contains only a single formula $\xi'$.
  Furthermore, it follows from immersiveness that
  $\xi'$ is strongly $r$-local for some $r$
  depending only on $\tau$.

  Without loss of generality, we assume that $\ca C$
  is closed under the operation of adding
  a leaf to an arbitrary vertex. The class $\ca C$
  still admits product structure, since we can simply
  add leaves to the factor $M$ of \Cref{def:prodstruct}.
  \begin{claim}\label{claim:leaf-closed-no-color-copy}
    A class $\ca C$ closed under the operation of adding
    a leaf to an arbitrary vertex does not need coloring
    nor copying.
  \end{claim}
   We include a folklore proof for the sake of completeness.
    \begin{subproof}
      First, we show how to ``simulate'' $k$-copying by
      adding leaves and colors.

      Let $G \in \ca C$ be arbitrary. We denote by $kG$
      its $k$-copy on vertex set $V(kG) = [k] \times V(G)$.
      Let $H$ be a graph obtained from $G$ by adding $k$
      leaves $w_1^v, w_2^v, \ldots, w_k^v$ to each vertex $v \in V(G)$.
      Consider a coloring $c$ in which
      all vertices of $V(G) \subseteq V(H)$ have color
      $0$, and each vertex $w_i^v$ has color $i$.
      Using $c$-colored $H$, we interpret a graph $H^+$
      such that $H^+\left[\bigcup_{v \in V(G)}\{w_1^v, \ldots, w_k^v\}\right]$
      is isomorphic to $kG$ as witnessed by an isomorphism
      mapping vertex $w_i^v \in V(H^+)$ to vertex $[i, v] \in V(kG)$.

      It suffices to observe that $[i, v]$ and $[j, u]$ are adjacent
      in $kG$ if and only if $i=j$ and $uv \in E(G)$, that is,
      if and only if $w_i^v$ and $w_j^u$ have the same color
      $c(w_i^v)=c(w_j^u) > 0$ and there is a path
      $(w_i^v, a, b, w_j^u)$ of length $3$ between $w_i^v$ and $w_j^u$
      (because the only neighbor $a$ of $w_i^v$ is $v$, resp. 
      $b$ of $w_j^u$ is $u$). This is clearly FO expressible.

      Two vertices $w_i^v$ and $w_j^u$ are the
      copy of the same vertex $v=u$ if and only if
      $w_i^v$ and $w_j^u$ share their unique neighbor,
      so we can FO-define the relation $SV$ (cf.
      the definition of the $k$-copy operation).
      Two vertices $w_i^v$ and $w_j^u$ belong to the same copy
      of $G$ if and only if they have the same color
      $c(w_i^v) = c(w_j^u) \ge 1$, so we can FO-define
      the relation $SC$ (cf. definition of $k$-copy operation).

      Thus far, we have shown how to transduce (without copying)
      a $k$-copy of any graph $G \in \ca C$ in another graph
      $H \in \ca C$.

      As the second and final step of this proof, we show how to
      ``simulate'' coloring.

      Let $G \in \ca C$ be an arbitrary graph and let $c: V(G) \to [k]$
      be its arbitrary $k$-coloring (where $k \ge 1$).
      Let $H$ be a graph obtained from $G$ by adding $c(v)+1$
      leaves $w_0^v, w_1^v, \ldots, w_k^v$ to each vertex $v \in V(G)$.
      Observe that, $v \in V(H)$ is a vertex of $V(G) \subseteq V(H)$
      if and only if the degree of $v$ in $H$ is at least 2.
      Furthermore, $v \in V(G) \subseteq V(G)$ has color $c(v)$
      if and only if $v$ has exactly $c(v)+1$ neighbors
      of degree one in $H$. Both of these properties
      are FO-expressible, so we can interpret
      the $c$-colored graph $G$ is the graph $H \in \ca C$.

      Altogether, we get that for any transduction $\tau$
      there is an interpretation $\iota$ such that
      $\tau(\ca C) \subseteq \iota(\ca C)$.
    \end{subproof}

  Using \Cref{claim:leaf-closed-no-color-copy},
  we observe that we can modify $\xi'(x_1, x_2)$
  to obtain a formula $\xi(x_1, x_2)$ such that,
  for every $G' \in C(\ca C)$ and every coloring
  $c$ of $G'$ using the colors referenced in $\xi'$,
  there is a graph $G'' \in \ca C$ such that,
  for every $u, v \in V(G')$, we have
  $G'' \models \xi'(u, v)$ iff $G', c \models \xi(u, v)$.
  
  Hence, it suffices to 
  show, for every strongly local first-order formula $\xi$,
  that the $\xi$-interpretation of a graph $G \in \ca C$
  has bounded $\ca Q^\circ_r$-clique-width.
  Since $G\subseteq Q \boxtimes M$ (\Cref{def:prodstruct})
  for some $Q\in\ca Q$ and some graph $M$ of bounded
  tree-width, we apply \Cref{lem:transductions}
  onto the following objects:
  \begin{itemize}
  \item the 2-coloring $H$ of the graph $Q \boxtimes M$, where the color of
  $v\in V(H)$ is 1 if $v\in V(G)$ and 2 otherwise,
  \item and the formula $\xi_1$ created from $\xi$
  by restricting each quantifier to vertices of color~1 (i.e., belonging to~$G$).
  \end{itemize}

  Doing so, we obtain a 
  $(Q^\circ_r, \ell)$-expression $\phi$ for a graph~containing
  $\xi(G)$ as an induced subgraph,
  where $\ell$ depends on~$\tau$, the degree bound of $\ca Q$ and the
  tree-width of $M$, and~$Q^\circ_r\in\ca Q^\circ_r$.

  Thus, we have shown that $\tau(\ca C)$ is a perturbation
  of the class $\ca D = \{\xi(G) |\> G \in \ca C\}$
  of bounded $\ca Q^\circ_r$-clique-width.
\end{proof}

In the special case of paths, we can strengthen
\Cref{thm:transd-admit-prod-str-bound-degree}.
Let $\ca P^\circ$ denote the class of reflexive closures of all paths.

\begin{theorem}\label{thm:transd-admit-prod-str-paths}
  Let $\ca C$ be a class of graphs admitting product
  structure. Let $\tau$ be a first-order transduction
  such that $\tau(\ca C)$ is a class of simple graphs.
  Then, the class $\tau(\ca C)$ is a perturbation
  of a class of bounded $\ca P^\circ$-clique-width.
\end{theorem}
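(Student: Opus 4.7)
The plan is to derive the theorem from \Cref{thm:transd-admit-prod-str-bound-degree}, applied with $\ca Q$ being the class of paths (which has maximum degree $2$). This yields an integer $r$ depending on $\tau$ such that $\tau(\ca C)$ is a perturbation of a class $\ca D$ of bounded $\ca P^\circ_r$-clique-width. It then remains to eliminate the $r$-th power from the parameter graph: show that, for any fixed $r$, bounded $\ca P^\circ_r$-clique-width implies bounded $\ca P^\circ$-clique-width.

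By \Cref{thm:hcwproduct}, any $G \in \ca D$ is an induced subgraph of $P_r \boxtimes M$ for some path $P$ and some graph $M$ of bounded clique-width. The central step is to construct, for each fixed $r$, a finite graph $H_r$ of size $O(r)$ such that the $r$-th power $P_r$ of any path $P$ is an induced subgraph of $P' \boxtimes H_r$, where $P'$ is a shorter path with one vertex per length-$r$ block of $P$. Given such $H_r$, the containment $P_r \subseteq_i P' \boxtimes H_r$ lifts to $P_r \boxtimes M \subseteq_i (P' \boxtimes H_r) \boxtimes M = P' \boxtimes (H_r \boxtimes M)$, and $H_r \boxtimes M$ has clique-width bounded in $r$ and $\mathrm{cw}(M)$ by a standard fact about clique-width under strong product with a finite graph. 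Applying \Cref{thm:hcwproduct} in the reverse direction, $G$ then has bounded $\ca P^\circ$-clique-width, and since $\tau(\ca C)$ is a perturbation of $\ca D$, the conclusion follows.

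The main obstacle is the construction of $H_r$. I would take $V(H_r) := \{A, B, C\} \times [r]$, where the first coordinate is a ``type'' assigned cyclically to successive blocks of $P$ as $A, B, C, A, B, C, \ldots$, and the second coordinate encodes the within-block position. Edges of $H_r$ within each type would be complete (to handle within-block $P_r$-adjacencies), while between consecutive types $T \to T'$ in the cycle $A \to B \to C \to A$, the edge $\{(T, p), (T', p')\}$ would be defined iff $p \ge p'$ (matching the cross-block $P_r$-distance identity $i' - i = r + p' - p \le r$). The subtle point is the use of three cyclic types: cross-block $P_r$-adjacency is direction-dependent (which of the two adjacent blocks comes first matters), but edges of $H_r$ are symmetric. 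With only two alternating types, a single undirected $H_r$-edge would be forced to satisfy two conflicting conditions in opposite directions, whereas three cyclic types ensure that each consecutive type pair $(A, B), (B, C), (C, A)$ arises in only one direction in the embedding.
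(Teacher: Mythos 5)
Your proposal is correct, and its first step coincides with the paper's: both proofs invoke \Cref{thm:transd-admit-prod-str-bound-degree} with $\ca Q$ the class of paths and then must eliminate the $r$-th power from the parameter graph. Where you diverge is in how that elimination is carried out. The paper stays at the level of expressions: it contracts each run of $r$ consecutive parameter vertices of $P_r^\circ$ to a single vertex of a shorter reflexive path and refines every color by the position index modulo $3r$, then re-emulates each edge-addition operation so that exactly the original edges are created; this directly turns a $(P_r^\circ,\ell)$-expression into a $(Q^\circ,3r\ell)$-expression. You instead pass through \Cref{thm:hcwproduct} in both directions and isolate a clean combinatorial lemma, namely that $P_r$ embeds as an induced subgraph into $P'\boxtimes H_r$ for a fixed $3r$-vertex gadget $H_r$, absorbing $H_r$ into the clique-width factor via the (true, and provable by exactly the sort of expression surgery the paper performs elsewhere) fact that $\cw(H_r\boxtimes M)$ is bounded in terms of $|V(H_r)|$ and $\cw(M)$ --- this last fact deserves at least a sketch rather than being labelled standard, but it is not a gap. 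Your three-cyclic-types device and the paper's ``$\bmod\ 3r$'' color refinement are the same underlying insight (a period-two scheme cannot encode the direction-dependent cross-block adjacency condition $p\ge p'$ symmetrically, while period three can), and your verification of the cross-block distance identity $i'-i=r+p'-p$ is correct. Your route is more modular and reusable (the lemma $P_r\subseteq_i P'\boxtimes H_r$ stands on its own and immediately generalizes the reduction to any situation where \Cref{thm:hcwproduct} applies), whereas the paper's route avoids the detour through the product characterization and yields an explicit $3r\ell$ bound on the number of colors.
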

\begin{proof}
  From \Cref{thm:transd-admit-prod-str-bound-degree}
  we get that $\tau(\ca C)$ is a perturbation
  of a class of bounded $\ca P_r^\circ$-clique-width
  for some constant $r$, where $\ca P_r^\circ$
  is the class of reflexive closures of $r$-th powers~of~paths.

  We show that, given any $(P_r^\circ, \ell)$-expression
  $\psi$ valued graph~$G$, where $P_r^\circ \in \ca P_r^\circ$,
  one can build a $(Q^\circ, 3r\cdot\ell )$-expression
  $\phi$ valued $G$, where $Q^\circ \in \ca P^\circ$.
  
  Let $(p_0, p_1, \ldots, p_n)$
  be the vertices of $P_r^\circ$ in the natural
  order along the underlying path of $P_r^\circ$.
  Observe that contracting each vertex $p_i$ to
  $p_{\left\lfloor \frac{i}{r} \right\rfloor r}$
  creates a reflexive path $Q^\circ$ from $P_r^\circ$.
  We obtain a $(Q^\circ, 3r\cdot\ell)$-expression
  $\phi$ valued $G$ from $\psi$ by
  replacing each parameter vertex $p_i$ by
  $p_{\left\lfloor \frac{i}{r} \right\rfloor r}$,
  and using $3r$-times more colors to emulate
  the behavior of edge addition~in~$\phi$.
  
  More specifically, in each subexpression $\alpha$
  of $\phi$ corresponding to a
  subexpression $\beta$ of $\psi$,
  the color of a vertex $v$ is
  $(c_v, i_v \mod 3r)$, where $i_v$ is the
  index of the parameter vertex $p_{i_v}$ of $v$
  and $c_v$ is the color of $v$ in $\beta$.
  Then, each recoloring operation in $\psi$ corresponds
  to a sequence of recoloring operations in $\phi$,
  disjoint union operations remain unchanged,
  each operation creating a new vertex uses the modified color,
  and each edge addition operation between colors $c^1$
  and $c^2$ is emulated as follows.
  For each pair $i, j$ such that
  $|i - j| \equiv 1 \mod 3r$ or $|i - j| \equiv 0 \mod 3r$,
  we create edges between vertices of colors $(c^1, i)$
  and $(c^2, j)$ provided their parameter vertices
  are adjacent. We observe that, by doing so, 
  we create in $\phi$ exactly the same edges as in $\psi$.
\end{proof}

\begin{remark}
  The proof of \Cref{lem:transductions} is constructive. 
  We can interpret this proof as an FPT-time algorithm for computing
  a $(Q_r^\circ, \ell)$-expression for graphs created by
  a $\xi$-inter\-pretation of a 2-coloring of $Q \boxtimes M$,
  parameterized by $\xi$,
  the tree-width of $M$, and the maximum degree of $Q$;
  provided that $\xi$, $Q$, $M$, and the 2-coloring of $Q \boxtimes M$
  are a part of the input.

  Note, on the other hand, that knowing the 2-coloring of $Q \boxtimes M$ is
  crucial for an efficient computation of our $(Q_r^\circ, \ell)$-expression since,
  in particular, already testing embeddability of a graph in the strong
  product of a path and a graph of bounded tree-width (tree-depth) is
  NP-hard~\cite{DBLP:conf/cccg/BiedlEU23}.

  Consequently, for Theorems~\ref{thm:transd-admit-prod-str-bound-degree}~and~\ref{thm:transd-admit-prod-str-paths},
  we also get analogous FPT-time algorithms for computing
  $(Q_r^\circ,\ell)$- or $(P^\circ,\ell)$-expressions
  of the members of $\tau(\ca C)$, but, importantly, for computing an
  expression for a graph $G'\in\tau(G)$ where $G\in\ca C$,
  the algorithms based on \Cref{lem:transductions} require $G$ and the particular coloring
  used by the transduction $\tau$ on the input.
  
  On the other hand, by analyzing the proof of \Cref{thm:pertimmersive} in
  \cite{DBLP:conf/csl/NesetrilMS22},
  we observe that the ``splitting'' of a transduction $\tau$ into
  a composition of a perturbation, an immersive transduction,
  and copying can be computed
  in FPT time on classes admitting FPT FO model
  checking, such as on considered $\ca C$, provided that the input
  contains $\tau$ and $G \in \ca C$ together with its coloring.
\end{remark}

\section{Backwards translation}
\label{sec:stabilityback}

In regard of \Cref{thm:transd-admit-prod-str-bound-degree}
it is natural to ask to which extent the converse direction may hold;
that is, whether each class being a perturbation of a class of
bounded $\ca Q^\circ$-clique-width is a transduction of a class admitting
a $\ca Q$-product structure.
This, however, cannot be true in full generality due to the following
case (with $\ca Q^\circ$ containing only the single-vertex loop):
  \begin{theorem}[Ne\v{s}et\v{r}il, Ossona de Mendez, Pilipczuk,
      Rabinovich, and Siebertz \cite{DBLP:conf/soda/NesetrilMPRS21}]
      \label{thm:stablecwtransd}
    If a class of graphs $\ca C$ has bounded clique-width,%
\footnote{\cite{DBLP:conf/soda/NesetrilMPRS21} formulated the statement with
functionally equivalent rank-width.}
    then the following conditions are equivalent:
    \begin{itemize}
      \item $\ca C$ has a stable edge relation,
      \item $\ca C$ is a transduction of a class of bounded tree-width.
    \end{itemize}
  \end{theorem}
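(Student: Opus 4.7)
The plan is to handle the two implications separately, with the backward direction being far more difficult.

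For the forward implication (that any transduction of a bounded tree-width class satisfies both conditions), I would combine two classical preservation results. Bounded tree-width implies bounded clique-width by Courcelle--Olariu, and bounded clique-width is preserved under FO transductions by Courcelle, Makowsky and Rotics \cite{DBLP:journals/mst/CourcelleMR00}. For the stable edge relation, observe that the half-graph of order $n$ has average degree $\Theta(n)$, so a bounded tree-width class (which has bounded degeneracy) excludes every sufficiently large half-graph as a subgraph, and hence as a bi-induced subgraph. Since Shelah's theorem~\cite{SHELAHstabil} yields that monadic stability, and consequently the exclusion of a fixed half-graph as a bi-induced subgraph, is preserved under transductions, the image class inherits the stable edge relation.

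For the converse implication, suppose $\ca C$ has bounded clique-width $k$ with stable edge relation. Every $G \in \ca C$ admits a $k$-expression, which I would represent as a relational structure $S_G$ on a rooted binary tree $T_G$: the leaves correspond to the vertices of $G$ (each labeled by its initial color), and the internal nodes carry labels encoding either disjoint union, a recoloring $c_1 \to c_2$, or an add-edges operation between a pair of colors. The underlying tree $T_G$ has tree-width one, and equipping it with unary predicates for the operation labels keeps the tree-width bounded; thus $\{S_G \mid G \in \ca C\}$ lies in a class of bounded tree-width. It would then suffice to exhibit a first-order interpretation recovering $G$ from $S_G$, where the vertex set of $G$ is FO-definable as the leaves and the natural edge formula states that there is a common ancestor $z$ labeled add-edges between colors $(c_1, c_2)$ such that the ``current color'' of $x$ at $z$ is $c_1$ and the current color of $y$ at $z$ is $c_2$.

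The main obstacle is that the current color of a leaf at an ancestor depends on the entire sequence of recoloring nodes along the path between them, which can be arbitrarily long and hence is not directly computable by a first-order formula on the tree. This is precisely where stability must be used. The plan is to normalize the $k$-expressions, using stability, so that along every root-to-leaf path the recoloring pattern has bounded effective complexity (of SC-depth or shrub-depth type), rendering the current-color function FO-definable in bounded quantifier rank. The normalization proceeds by a contradiction argument: an unbounded alternation of recolorings producing distinct adjacency patterns at higher add-edges nodes would witness a half-graph of unbounded order as a bi-induced subgraph of some $G \in \ca C$, contradicting the assumed stability. An iterative compression scheme along the expression tree, using the stability of $\ca C$ as the finiteness ingredient at each level, then yields the desired bounded-alternation normal form. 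Once such a normalized expression is fixed, the current-color predicate becomes FO-definable, the edge interpretation goes through, and the class of normalized expression trees (with their unary decorations) serves as the sought bounded-tree-width source.
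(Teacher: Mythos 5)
This theorem is cited by the paper, not proved in it---it is imported from Ne\v{s}et\v{r}il, Ossona de Mendez, Pilipczuk, Rabinovich, and Siebertz \cite{DBLP:conf/soda/NesetrilMPRS21}---so there is no in-paper proof to compare against and you are effectively attempting a from-scratch proof of a deep result. Your forward implication is sound: bounded tree-width implies bounded degeneracy, which excludes all sufficiently large half-graphs as (bi-induced) subgraphs, and monadic stability is preserved under transductions, so any transduction of a bounded tree-width class inherits a stable edge relation; together with preservation of bounded clique-width under FO transductions, this direction is complete.

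The backward implication, however, has a genuine gap precisely where you locate the ``main obstacle.'' You correctly observe that the current colour of a leaf at an ancestor is not FO-definable over the raw expression tree because it is the composition of an unbounded sequence of recolouring functions along the path, and you propose to repair this by a ``normalization via stability'' that bounds the recolouring alternations. But the link you assert---that unbounded alternation of recolourings along a root-to-leaf path would itself witness an unbounded bi-induced half-graph in $G$---does not hold as stated. Recolourings are a purely syntactic phenomenon of the chosen $k$-expression, and a long recolouring sequence need not produce any half-graph at all: the composed recolouring functions take at most $k^k$ distinct values, yet the sequence of compositions can oscillate arbitrarily without any corresponding adjacency alternation visible in $G$, because recolourings can be semantically idle (affecting no vertex at that point) or can cancel. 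Extracting a bi-induced half-graph requires a careful Ramsey-type argument that ties the syntactic alternation to a coherent family of vertex pairs realizing distinct types, coordinated over the two-dimensional branching of the expression tree; this is the real technical content of \cite{DBLP:conf/soda/NesetrilMPRS21}, which works with rank decompositions and transition data over $\mathbb{F}_2$ and invokes a factorization/Ramsey machinery (in the spirit of Simon's factorization forest theorem) to obtain the normal form. Your ``iterative compression scheme'' is not an argument, and as written the step ``stability $\Rightarrow$ bounded-alternation normal form'' is an assertion of exactly the theorem's hard part. Until that lemma is supplied, the FO interpretation over $T_G$ cannot be carried out.
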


On the other hand, adding a suitable stability assumption already
allows to formulate the converse in full strength.
\begin{theorem}\label{thm:stablecw-back}
  Let $\ca Q$ be a class of bounded-degree graphs, let
  $\ca Q^\circ$ be the reflexive closure of $\ca Q$,
  and let $k$ be an integer.
  Let $\ca C$ be a graph class such that, for each $G \in \ca C$,
  there is a $k$-stable $(Q^\circ, k)$-expression $\phi$ for some
  $Q^\circ\in\ca Q^\circ$,
  such that $G$ is a $k$-perturbation of the value of $\phi$.

  Then, there is a class $\ca D$ admitting $\ca Q$-product
  structure and a first-order transduction $\tau$
  such that $\ca C \subseteq \tau(\ca D)$.
\end{theorem}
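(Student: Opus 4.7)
The proof chains several structural reductions and then lifts a transduction to a product. First, I observe that a $k$-perturbation is itself a first-order transduction: one encodes the partition into $k$ parts by a $k$-coloring and defines the modified edge relation by a quantifier-free formula in these colors and the original edges. Hence it suffices to transduce the unperturbed value $G'$ of $\phi$ from a class admitting $\ca Q$-product structure, and then compose with the perturbation.

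By \Cref{thm:hcwproduct}, $G'$ is an induced subgraph of $Q \boxtimes M$ for some $Q \in \ca Q$ and some graph $M$ of clique-width at most $k$. Crucially, the clique-width expression for $M$ arises as the deparameterization of $\phi$, so $k$-stability of $\phi$ implies that $M$ has a stable edge relation. Then \Cref{thm:stablecwtransd} yields a class $\ca T$ of graphs of bounded tree-width and a transduction $\sigma$ such that every relevant $M$ is obtainable as $\sigma(T)$ for some $T \in \ca T$. Define $\ca D := \{Q \boxtimes T : Q \in \ca Q,\, T \in \ca T\}$; this admits $\ca Q$-product structure by construction, and the remaining task is a first-order transduction $\tau$ with $G' \in \tau(Q \boxtimes T)$.

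The core step is to lift $\sigma : T \to M$ to $Q \boxtimes T \to Q \boxtimes M$. I equip $Q \boxtimes T$ with the coloring used by $\sigma$ plus two auxiliary colorings: a proper coloring $s_Q$ of a sufficiently high power of $Q$ (bounded since $Q$ has bounded degree), applied via $s_Q(q)$ to each vertex $[q,t]$; and a proper coloring $c_T$ of $T$ (bounded by $k+1$, since $T$ has tree-width at most $k$), applied via $c_T(t)$. Under these colorings, FO can classify each edge of $Q \boxtimes T$ as vertical, horizontal, or diagonal by the agreement pattern of the $s_Q$- and $c_T$-colors at its endpoints. Putting $\sigma$ into Gaifman normal form and relativizing its local parts to vertices of matching $s_Q$-color, while evaluating the basic local sentences inside a distinguished reference column (marked by an additional unary color), yields a formula $\sigma^{*}$ on $Q \boxtimes T$ that correctly captures $t_u t_w \in E(M)$ whenever $q_u = q_w$: in any bounded-radius window, matching $s_Q$-colors forces matching $Q$-coordinates, so the relativized neighborhood is isomorphic to the corresponding neighborhood in $T$, and each column as a whole is isomorphic to $T$.

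Combining these ingredients, the edges of $Q \boxtimes M$ are expressible in FO over $(Q \boxtimes T, s_Q, c_T, c_\sigma, \text{column marker})$: vertical edges use $\sigma^{*}$ directly; horizontal edges coincide with the horizontal edges of $Q \boxtimes T$; and a diagonal edge $(u,w)$ reduces to the vertical case through a short witness $w' = [q_u, t_w]$, uniquely identified among neighbors of $w$ as the one whose $s_Q$-color matches $u$ and whose $c_T$-color matches $w$, after which $\sigma^{*}(u, w')$ evaluates $t_u t_w \in E(M)$. The most delicate point is verifying that relativization of $\sigma$'s local parts by $s_Q$-color equality faithfully reproduces the $T$-evaluation even when $u$ and $w$ lie far apart in $Q \boxtimes T$; this follows because within every $r$-ball of the product, matching $s_Q$-colors uniquely determines the $Q$-coordinate. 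Finally, restricting to $V(G')$ via a unary marker coloring and composing with the perturbation produces the desired transduction $\tau$ with $G \in \tau(D)$ for $D = Q \boxtimes T \in \ca D$.
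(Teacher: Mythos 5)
Your high-level strategy is the same as the paper's (deparameterize $\phi$, use \Cref{thm:stablecwtransd} to replace the clique-width-$k$ factor $M$ by a transduction of a bounded-tree-width graph $T$, then lift that transduction to the strong product with $Q$, using a proper coloring of a power of $Q$ to identify $Q$-coordinates locally, and finally compose with the perturbation and induced-subgraph restriction). The route you take to lift $\sigma$ from $T$ to $Q\boxtimes T$, via Gaifman normal form with relativization, is genuinely different from the paper's, but it has a real gap.

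The gap concerns false positives for pairs $u,w$ whose $Q$-coordinates are far apart. Your edge formula must output \emph{false} whenever $q_u\ne q_w$ and $q_uq_w\notin E(Q)$, but such $u$ and $w$ can be arbitrarily far apart in $Q\boxtimes T$ (in a strong product the distance is $\max(d_Q,d_T)$), and within a bounded radius around $w$ the relativization ``matching $s_Q$-color'' can capture a different row $q'\ne q_u$ with $s_Q(q')=s_Q(q_u)$ (the proper coloring of a power of $Q$ only separates $q$'s within that power's radius; $s_Q$ repeats globally). So $\sigma^*(u,w)$ can evaluate on an unintended structure and return \emph{true} even though there is no edge, and you have no guard expressing ``$q_u=q_w$ or $q_uq_w\in E(Q)$'' for distant pairs --- neither the $s_Q$-colors, nor $c_T$, nor the single marked copy of $T$ supply a bounded-length witness for that predicate. (The marked copy lets you evaluate basic local sentences, but it does not help identify, from $w$, the vertex $[q_u,t_w]$ when $t_u,t_w$ are far apart in $T$.) This is exactly the difficulty the paper circumvents by forming $M^{uni}_G$ = $M^{tw}_G$ plus a \emph{universal} vertex $u^{uni}_G$: the hub $[q,u^{uni}_G]$ makes every slice $\{q\}\times V(M^{uni}_G)$ have diameter $2$, so ``same row'' and ``adjacent rows'' become distance-$2$/distance-$3$ queries (\Cref{clm:gettofactors}), and then $M^{tw}_G$ (and hence $M_G$ via $\tau'$) can be ``seen'' inside the product. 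If you add a universal vertex to $T$ before taking the product, your construction becomes repairable, and you could then in fact drop the Gaifman-normal-form machinery and use the paper's direct formulas.

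A secondary omission: \Cref{thm:stablecwtransd} gives a \emph{transduction}, which a priori may use copying as well as coloring; your lifting treats $\sigma$ as a colored interpretation only. The paper addresses this by closing the bounded-tree-width class under leaf addition (\Cref{claim:leaf-closed-no-color-copy}), making the transduction a pure interpretation before the lifting step; you would need the same (or an equivalent) normalization.

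Finally, a small point on your last paragraph: you need the value $G'$ of $\phi$ to be an induced subgraph of $Q\boxtimes M$ where $M$ is precisely the deparameterization's value --- this is not automatic from \Cref{thm:hcwproduct} as stated and requires (as the paper notes) examining the proof of that theorem to see that the clique-width factor there is exactly the deparameterization.
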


Before we prove \Cref{thm:stablecw-back}, we remark that
its view of stability is equivalent to the more traditional view as follows:

\begin{claim}\label{claim:equiv-stability}
  Let $\ca C$ be a class of graphs and let $\ca Q$ be
  a class of bounded degree graphs and let $\ca Q^\circ$
  be the reflexive closure of~$\ca Q$.
  Then, the following are equivalent:
  \begin{itemize}
    \item[(A)] There are constants $k$ and $\ell$ such that, for
    each graph $G \in \ca C$, there is a graph
    $Q^\circ \in \ca Q^\circ$ and a $k$-stable
    $(Q^\circ, \ell)$-expression valued $G$.
    \item[(B)] There is a stable bounded clique-width class
    $\ca D$ such that, for each graph $G \in \ca C$,
    there are graphs $H \in \ca D$ and $Q \in \ca Q$
    such that $G$ is an induced subgraph of $Q \boxtimes H$.
  \end{itemize}
\end{claim}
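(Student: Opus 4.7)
The plan is to apply \Cref{thm:hcwproduct} in each direction, tracking how the stability condition transfers between the $k$-stability of the expression (which is defined via its deparameterization) and the stability of the simple-graph factor of the product decomposition.

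For the direction (A) $\Rightarrow$ (B), I would start with $G \in \ca C$ and its $k$-stable $(Q^\circ, \ell)$-expression $\phi$, apply \Cref{thm:hcwproduct} to obtain an induced embedding $G \subseteq_i Q \boxtimes M$ with $M$ of clique-width at most $\ell$, and argue (by inspecting the construction in \cite{DBLP:conf/mfcs/HlinenyJ24}) that $M$ can be chosen on the vertex set $V(G)$ with its clique-width expression being exactly the deparameterization of $\phi$. The $k$-stability hypothesis on $\phi$ then translates verbatim to the statement that $M$ avoids a bi-induced half-graph of order $k$. Collecting these witnesses $M$ over all $G\in\ca C$ produces the required class $\ca D$ of bounded clique-width that is simultaneously stable.

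For the direction (B) $\Rightarrow$ (A), given $G \subseteq_i Q \boxtimes H$ with $H\in \ca D$ of clique-width at most $\ell$ and $\ca D$ avoiding bi-induced half-graphs of some fixed order $k_0$, I would apply the converse direction of \Cref{thm:hcwproduct} to produce a $(Q^\circ, \ell')$-expression $\phi$ for $G$, where $\ell'$ depends only on $\ell$. The plan is then to verify that the natural construction of $\phi$ makes the value of its deparameterization an induced subgraph of $K_{|V(Q)|} \boxtimes H$ on the vertex set $V(G)$: once the parameter constraint is dropped, every add-edges operation involving the ``fiber-building'' temporary color turns each fiber over a vertex of $H$ into a clique, and every add-edges operation inherited from the clique-width expression of $H$ connects all pairs of vertices whose $H$-coordinates have the respective colors. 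A short projection argument then shows that a bi-induced half-graph of large order in $K_n \boxtimes H$ forces a bi-induced half-graph of order at least $k_0$ in $H$ (after handling fiber collisions by pigeonhole on the $H$-coordinates), contradicting $k_0$-stability of $\ca D$. Taking $k := \max\{k(k_0), \ell'\}$ then witnesses (A).

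The first direction is essentially a direct unpacking of the construction behind \Cref{thm:hcwproduct}. The main obstacle lies in the second direction and has two parts: pinning down the precise value of the deparameterization of the expression produced by \Cref{thm:hcwproduct} (which we expect to be an induced subgraph of $K_{|V(Q)|} \boxtimes H$), and then verifying that stability passes through the strong product of a complete graph and a stable, bounded-clique-width graph with only a controlled inflation of the half-graph order.
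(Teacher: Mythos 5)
Your proposal follows essentially the same route as the paper in both directions: for (A)\,$\Rightarrow$\,(B) you take $\ca D$ to consist of the values of the deparameterizations of the $k$-stable expressions, which is exactly what the paper does; for (B)\,$\Rightarrow$\,(A) you build the $(Q^\circ,\ell)$-expression of \Cref{thm:hcwproduct}, note that in its deparameterization each $Q$-fiber becomes a clique of twins (your ``induced subgraph of $K_{|V(Q)|}\boxtimes H$'' is the same observation as the paper's ``can obtain $H$ from $G_{depar}$ by identifying twins''), and then extract a half-graph in $H$ from a half-graph in $G_{depar}$ by avoiding same-fiber collisions. The paper makes the collision-avoidance step explicit (twin pairs necessarily form a matching between the two sides, and a block/interval argument yields order $k'+1$ from order $2(k'+2)^2$), while you gesture at a pigeonhole on $H$-coordinates; this is a difference in detail rather than in approach, and your sketch would need to be fleshed out with the observation that each side of the half-graph is already fiber-injective (no twins within a side), so only a linear-size cross-matching of collisions can occur.
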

\begin{subproof}
    We show that (A) implies (B).
    Let $Q \in \ca Q$ be a graph and let $Q^\circ \in \ca Q$
    be its reflexive closure.
    Let $\phi$ be a $k$-stable
    $(Q^\circ, \ell)$-expression valued $G$,
    and let $\phi_{depar}$ be its deparameterization valued $G_{depar}$.
    Then, $G_{depar}$ has clique-width at most $\ell$,
    and $G_{depar}$ excludes bi-induced half-graph of order $k$.
    Using the construction from the proof of \Cref{thm:hcwproduct},
    we observe that $G$ is an induced subgraph of $Q \boxtimes G_{depar}$.
    
    Let $\ca D:=\{G_{depar} |\> G \in \ca C\}$ be the class obtained
    from $\ca C$ by the above described process.
    Then, it follows from \Cref{thm:stablecwtransd}
    that $\ca D$ is stable. Furthermore, its 
    clique-width is at most $\ell$, and so $\ca D$ witnesses that
    (B) holds.

    We show that (B) implies (A).
    Let $G \in \ca C$ be arbitrary. Let $H \in \ca D$ and $Q \in \ca Q$
    be graphs satisfying $G \subseteq_i Q \boxtimes H$.
    Let $Q^\circ$ be the reflexive closure of $Q$.
    Let $\ell$ be the clique-width of $\ca D$,
    and let $k'$ be the order of the largest bi-induced
    half-graph in~$\ca D$.

    Let $\phi$ be the $(Q^\circ, \ell)$-expression valued $G$
    constructed in the proof of \Cref{thm:hcwproduct}.
    We observe that, for each $h \in V(H)$,
    there is a subexpression $\phi_h$ valued the copy
    $G[V(Q) \boxtimes \{h\}]$ of $Q$. Furthermore,
    all vertices in $\phi_h$ have the same color and
    form a clique in the value $G_{depar}$ of
    the deparameterization $\phi_{depar}$ of $\phi$,
    so they are twins in $G_{depar}$.
    That is, one can obtain $H$ from $G_{depar}$
    by identifying twins, and so any induced subgraph
    without twins in $G_{depar}$ is also an induced
    subgraph of $H$.

    Suppose that there is a bi-induced half-graph $J$
    in $G_{depar}$ of order $k^+:=2(k'+2)^2$. We obtain a contradiction
    by finding a bi-induced half-graph of order $k'+1$ in $H$:

    Let $A=(a_1, a_2, \ldots, a_{k^+}), B=(b_1, b_2, \ldots, b_{k^+})$
    be the natural bipartition of $J$ with the orderings on $A$ and $B$.
    That is, $a_ib_j$ is an edge if and only if $i \le j$.
    Observe that, within each part $A$ or $B$, there are no twins,
    and so the possible twin pairs within $J$ necessarily form a matching between $A$ and $B$ (not perfect in general).
    Our aim thus is to find a bi-induced half-graph $J'\subseteq J$ of order $k'+1$ which avoids all these twin pairs.

    For $i\in[2k'+4]$, we denote by $A_i:=\{a_j|$ $(i-1)(k'+2)+1\leq j\leq i(k'+2)\}\subseteq A$
    and $B_i:=\{b_j|$ $(i-1)(k'+2)+1\leq j\leq i(k'+2)\}\subseteq B$.
    Next, for $j=1,2,\ldots,k'+1$ in this order, we choose (arbitrarily)
    $a_{i_j}\in A_{2j-1}$ such that $a_{i_j}$ has no twin among the previously chosen vertices of $B$ -- this is possible since $|A_{2j-1}|=k'+2>j$,
    and we analogously choose $b_{i'_j}\in B_{2j+2}$ such that $b_{i'_j}$ has no twin among the previously chosen vertices of $A$.
    Then $J':=J[\bigcup_{j\in[k'+1]} \{a_{i_j}, b_{i'_j}\}]$
    is a bi-induced half-graph of order $k'+1$ in $G_{depar}$.
    Furthermore, $J'$ has no twins, so it is isomorphic to
    an induced subgraph of $H\in\ca D$, a contradiction.
  \end{subproof}

\begin{proof}[Proof of \Cref{thm:stablecw-back}]
  Let $G \in \ca C$ be arbitrary.
  Let $Q_G^\circ \in \ca Q^\circ$, and
  let $\phi_G$ be a $k$-stable $(Q_G^\circ, k)$-expression
  such that $G$ is a $k$-perturbation of the value $G'$
  of $\phi_G$. We denote by $Q_G'\in\ca Q$ the graph obtained from
  $Q_G^\circ$ by removing all loops.
  Let $\psi_G$ be the deparameterization of $\phi_G$.
  We denote by $M_G$ the value of $\psi_G$,
  which is of clique-width at most $k$.
  Since $\phi_G$ is assumed $k$-stable, $M_G$ does not contain
  bi-induced half-graph of order~$k$.

  Consider the class $\ca M = \{M_G |\> G \in \ca C\}$ which,
  by the previous, has stable edge relation.
  So, by \Cref{thm:stablecwtransd},
  there is a first-order transductions $\hat\tau$
  and a class $\hat{\ca M}$ of graphs of bounded
  tree-width such that $\ca M \subseteq \hat\tau(\hat{\ca M})$.

  Let $\ca M^{tw}$ be the closure of $\hat{\ca M}$
  under the operation of adding a leaf to an arbitrary
  vertex. The tree-width of $\ca M^{tw}$ is still bounded,
  and the class $\ca M^{tw}$ does not need copying
  nor coloring. So, there is an interpretation $\tau'$
  such that $\ca M \subseteq \tau'(\ca M^{tw})$.
  Let $\xi$ be the binary formula used in $\tau'$ for the edge relation.

  Notably, there is a graph $M_G^{tw} \in \ca M^{tw}$
  such that $M_G$ is an induced subgraph of $\tau'(M_G^{tw})$.
  In order to simplify our notation,
  we assume that $V(M_G) \subseteq V(M_G^{tw})$ with identity map of vertices,
  and so that for any pair of vertices
  $u, v \in V(M_G)$ we have $uv \in E(M_G)$ if and only if
  $M_G^{tw} \models \xi(u, v)$.

  We denote by $M_G^{uni}$ the graph obtained from $M_G^{tw}$
  by adding a universal vertex $u_G^{uni}$,
  which increases the tree-width by at most one.
  Our core step is to transduce, with a suitable coloring,
  $Q'_G \boxtimes M_G$ from $Q'_G \boxtimes M_G^{uni}$
  using a transduction formula independent of a particular choice of~$G$.

  \begin{claim}\label{clm:gettofactors}
    Let $X=V\big(Q'_G \boxtimes M_G^{uni}\big)$.
    There exist first-order formulas (constructed independently of~$G$) which,
    for a suitable coloring of~$X$, express the following properties
    of any vertex pair $x,y\in X$ over $Q'_G \boxtimes M_G^{uni}$:
    \begin{enumerate}[label={(\roman*)}]
      \item that $x=[q,m]$ and $y=[q',m']$ for $q=q'$ (``same row''),
      \item that $x=[q,m]$ and $y=[q',m']$ for $qq'\in E(Q'_G)$,
      \item assuming that $x=[q,m]$ and $y=[q',m']$ satisfy $q=q'\vee qq'\in
      E(Q'_G)$, the properties that $m=m'$ (``same column'') and that $mm'\in E(M_G^{tw})$.
    \end{enumerate}
  \end{claim}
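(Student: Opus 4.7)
The plan is to equip $X$ with a coloring of bounded size (depending only on the tree-width bound of $M_G^{tw}$ and the degree bound of $\ca Q$) from which all three properties become FO-expressible from the product adjacency. Three unary color components will suffice: a predicate $P$ marking the pivot row $\{[q, u_G^{uni}] : q \in V(Q'_G)\}$; a color $s(q)$ attached to each $[q, m]$, where $s$ is a proper coloring of $Q'_G$ with at most $d+1$ colors (which exists because $Q'_G$ has maximum degree at most $d$); and a color $t(m)$ attached to each $[q, m]$, where $t$ is a proper coloring of $M_G^{uni}$ using a bounded number of colors (which exists because the tree-width of $M_G^{uni}$ is bounded). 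All three color ranges are bounded independently of $G$, so the coloring is valid for a transduction.

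The centerpiece of the argument will be a first-order definable \emph{home pivot} operation $\pi$, sending each vertex to the pivot of its own row. I would define $\pi(x)$ as the unique pivot $p$ with $s(p) = s(x)$ satisfying $p = x$ or $p$ adjacent to $x$ in the product. Uniqueness is enforced by two ingredients: universality of $u_G^{uni}$ in $M_G^{uni}$ implies that, for a non-pivot $[q, m]$, a pivot $p_{q''}$ is adjacent to it in the product exactly when $q'' = q$ or $q''q \in E(Q'_G)$; the requirement $s(q'') = s(q)$ then forces $q'' = q$ by properness of $s$. Consequently $\pi(x) = p_q$ for every $x = [q, m]$ (pivots themselves being handled by the clause $p = x$). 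Properties (i) and (ii) then become $\pi(x) = \pi(y)$ and ``$\pi(x)$ is adjacent to $\pi(y)$ in the product'', respectively, using that two pivots $p_q, p_{q'}$ are adjacent in $Q'_G \boxtimes M_G^{uni}$ if and only if $qq' \in E(Q'_G)$.

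For (iii), under the precondition $q = q'$ or $qq' \in E(Q'_G)$, I would express ``$m = m'$'' (same column) by the disjunction
\[
x = y \;\lor\; \bigl(P(x) \land P(y)\bigr) \;\lor\; \bigl(\lnot P(x) \land \lnot P(y) \land t(x) = t(y) \land xy \in E(Q'_G \boxtimes M_G^{uni})\bigr),
\]
and ``$mm' \in E(M_G^{tw})$'' by $\lnot P(x) \land \lnot P(y) \land \lnot(\text{same column}) \land xy \in E(Q'_G \boxtimes M_G^{uni})$. The design principle is that properness of $t$ makes $t(x) = t(y)$ incompatible with the existence of an $M_G^{uni}$-edge between distinct $m, m'$, so the compound condition in the third disjunct isolates precisely the case $m = m'$. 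Correctness of both formulas will proceed by a case split on whether each of $x, y$ lies in $P$ and on whether $q = q'$ or $qq' \in E(Q'_G)$.

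The main delicate step will be the same-row, distinct-column subcase, where one must verify that the third disjunct does not fire spuriously: $t(m) = t(m')$ forces $mm' \notin E(M_G^{uni})$ by properness, and within a single row the product adjacency $xy \in E(Q'_G \boxtimes M_G^{uni})$ holds only if $mm' \in E(M_G^{uni})$, so the two conjuncts are mutually exclusive. Once the remaining routine subcases are ticked off, the formula for $mm' \in E(M_G^{tw})$ follows: for non-pivots in distinct columns, product adjacency captures exactly $mm' \in E(M_G^{uni})$, which, because $m, m' \neq u_G^{uni}$, coincides with $mm' \in E(M_G^{tw})$.
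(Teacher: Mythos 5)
Your proposal is correct and follows essentially the same route as the paper: a proper coloring of $Q'_G$ (from bounded degree), a proper coloring of $M_G^{uni}$ (from bounded tree-width, hence degeneracy), and the universal-vertex row as an anchor, with properness of the factor colorings used to separate ``same coordinate'' from ``adjacent coordinate.'' The only differences are cosmetic---you mark the pivot row by an explicit unary predicate and package the anchor as a definable ``home pivot'' map, where the paper reaches the row $\{[q,u_G^{uni}]\}$ through an existentially quantified witness $z$ with matching first-factor color---and all your case checks (in particular the non-spurious-firing argument for the same-row, distinct-column subcase) go through.
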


  \begin{subproof}
    A coloring of a graph $G$ is \emph{proper} if every edge of $G$ receives
    distinct colors for its ends.
    Since the factor $Q'_G$ is of bounded degree and the factor $M_G^{uni}$
    is degenerate thanks to being of bounded tree-width, there exist proper
    colorings $c_1$ and $c_2$ of $Q'_G$ and $M_G^{uni}$, respectively, each
    using a bounded number of colors.
    We color $X$ by the Cartesian product $c=c_1\times c_2$ of the two colorings.
    We proceed to define the formulas expressing (i), (ii), and (iii):

  (i) The formula $\sigma_1$ reads that $c_1(x)=c_1(y)$ and $xy$ is an edge, or that
    there exists a vertex $z$ such that $c_1(x)=c_1(y)=c_1(z)$ and $xz,zy$ are edges.

    Assume $Q'_G \boxtimes M_G^{uni}\models\sigma_1(x,y)$.
    If $xy$ is an edge, but $q\not=q'$, then $qq'\in E(Q'_G)$ by the
    definition of a strong product.
    However, $c_1(x)=c_1(y)$ is in a contradiction with $c_1$ being a proper
    coloring of~$Q'_G$.
    On the other hand, if $xz,zy$ are edges where $z=[q'',m'']$, and $q\not=q'$,
    then, up to symmetry, we get $q\not=q''$.
    As in the previous subcase, we have $c_1(x)=c_1(z)$ and $qq''\in
    E(Q'_G)$, which is again a contradiction to~$c_1$.

    Now assume $q=q'$. Therefore, trivially $c_1(x)=c_1(y)$.
    We choose $z$ as $z=[q,u_G^{uni}]$ where $u_G^{uni}$ is the universal vertex.
    If $z=x$ (up to symmetry), we have $m'u_G^{uni}\in E(M_G^{uni})$, and so
    $zy=xy$ is an edge by the definition of a strong product.
    If $z\not\in\{x,y\}$, then we likewise get that both $xz$ and $zy$ are edges.
    In both cases, $Q'_G \boxtimes M_G^{uni}\models\sigma_1(x,y)$.

  (ii) Using the formula of (i), this formula $\sigma_2$ reads that $x$ and $y$ are not
    from the same row, and there exists $z$ such that $z$ is in the same row
    with $x$ and $yz$ is an edge.

    Assume $Q'_G \boxtimes M_G^{uni}\models\sigma_2(x,y)$.
    Then $q\not=q'$ and $z=[q,m'']$, and $qq'\in E(Q'_G)$ by the
    definition of a strong product since $yz$ is an edge.
    Conversely, assume $qq'\in E(Q'_G)$, which also means $q\not=q'$.
    We choose $z=[q,m']$ (same column as $y$ and same row as~$x$),
    and hence $yz$ is an edge by the definition of a strong product.
    Hence, $Q'_G \boxtimes M_G^{uni}\models\sigma_2(x,y)$.

  (iii) We first, under the assumption that $x=[q,m]$ and $y=[q',m']$
    satisfy $q=q'\vee qq'\in E(Q'_G)$, express that $m=m'$.
    We give a formula $\sigma_3$ reading that
    $x=y$, or $xy$ is an edge and $c_2(x)=c_2(y)$.
    We claim that, under the stated assumptions on $x$ and $y$,~
    $Q'_G \boxtimes M_G^{uni}\models\sigma_3(x,y)$ if and only if $m=m'$.
    Indeed, as in part (i), the third possibility of $mm'\in E(M_G^{tw})$ is
    now excluded by $c_2$ being a proper coloring of $M_G^{uni}$.

    The last formula $\sigma_4$ then reads that $x$ and $y$ are not
    in the same column, $c_2(x),c_2(y)\not=c_2(u_G^{uni})$,
    and there exists $z$ such that $z$ is in the same
    column with $x$ and $yz$ is an edge.
    The conclusion that $Q'_G \boxtimes M_G^{uni}\models\sigma_4(x,y)$
    $\iff$ $mm'\in E(M_G^{tw})$ under the stated assumptions then follows in
    the same way as in part (ii).
  \end{subproof}

  Observe that the formulas of (iii), with a bit of technical arguments,
  allows us to ``see'' the factor $M_G^{tw}$ within the product
  $Q'_G \boxtimes M_G^{uni}$ and, consequently, to express the result of the
  interpretation $\tau'(M_G^{tw})$.
  The latter, in turn, together with (i) and (ii) define the edge set of 
  $Q'_G \boxtimes M_G$, as desired.
  Finally, we use \Cref{thm:hcwproduct} to argue that
  an induced subgraph of $Q'_G \boxtimes M_G$
  is the value of the $(Q_G^\circ, k)$-expression $\phi_G$ we started with.

    Formally, our proof is finished as follows.
    In view of our assumptions, namely $V(M_G) \subseteq V(M_G^{uni})$,
    it suffices to construct a binary first-order formula $\varrho$ such
    that $Q'_G\boxtimes M_G^{uni}\models \varrho(x,y)$ $\iff$
    $xy\in E\big(Q'_G \boxtimes M_G\big)$.
    Let $x=[q,m]$ and $y=[q',m']$.

    If $q\not=q'$ and $qq'\not\in E(Q'_G)$, which is expressed using
    \Cref{clm:gettofactors} -- formulas (i) and (ii), then $\varrho$ states
    that $xy$ is not an edge. This is correct by the definition of a strong
    product. Hence, for the remaining parts of the formula $\varrho$, we may
    assume that $q=q'$ or $qq'\in E(Q'_G)$ as in
    \Cref{clm:gettofactors}\,(iii).

    Under the latter assumption, we first show how to interpret from the graph
    $Q'_G\boxtimes M_G^{uni}$ a graph $M_{x,y}$ isomorphic to $M_G^{tw}$ such
    that, moreover, $x,y\in V(M_{x,y})$. This is done as follows.
    Let $Y_0=\{q\}\times V(M_G^{tw})$ and $Y=(Y_0\setminus\{[q,m']\})\cup\{y\}$.
    We choose $Y$ as the domain (vertex set) of $M_{x,y}$, which can be
    expressed using the formulas constructed in \Cref{clm:gettofactors} and the unique
    color $c_2(u_G^{uni})$ of the universal vertex $u_G^{uni}$ of $M_G^{uni}$.
    We likewise interpret the edge relation of $M_G^{tw}$ using specifically
    \Cref{clm:gettofactors}\,(iii), which finishes our interpretation~$\iota$.

    Now, we recall the interpretation $\tau'$ such that $M_G$ is an induced
    subgraph of $\tau'(M_G^{tw})$, and the fact that $\tau'$ does not use coloring.
    Hence, we can compose $\tau'$ after $\iota$ (since $\tau'(M_G^{tw})$ is
    isomorphic to $\tau'(M_{x,y})$) to interpret the edge relation of~$M_G$.
    The latter, together with the formulas of \Cref{clm:gettofactors},
    define the sought formula~$\varrho$.

  Finally, since the class $\{Q'_G\boxtimes M_G^{uni} |\> G\in\ca C\}$ admits
  $\ca Q$-product structure by the definition, and the hereditary closure can
  be expressed as a transduction, too, the only remaining
  task is to argue that the value of the $(Q_G^\circ, k)$-expression $\phi_G$ we
  started with is indeed an induced subgraph of $Q'_G \boxtimes M_G$.
  Examining the (short) proof of \Cref{thm:hcwproduct} in
  \cite{DBLP:conf/mfcs/HlinenyJ24}, one may find out that the
  bounded-clique-width factor used there is actually obtained as the value
  of the deparameterization of~$\phi_G$, as we do here with~$M_G$.
  The whole proof is finished.
  \end{proof}

\section{Excluding grids in transductions}
\label{sec:3d-grids}

Our last contribution are two example application of the results of
\Cref{sec:transd-prod-str}; showing that
\begin{itemize}
\item the class of all 3D grids, and
\item the class of certain non-local modifications of 2D grids,
\end{itemize}
are not transducible from any class admitting product structure, and in particular
not from the class of planar graphs.
\\
Recall that $\ca P^\circ$ denotes the class of reflexive closures of paths. 

\begin{theorem}\label{thm:3Dgridpl}
  For any $k$ and $n$ positive integers,
  the $\ca P^\circ$-clique-width of any $k$-perturbation of
  an $n \times n \times n\,$-grid is in $\Omega_k(n)$.
\end{theorem}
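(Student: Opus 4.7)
The plan is to derive the bound $\ell := \ca P^\circ\text{-clique-width}(G) \geq \Omega_k(n)$ by combining three independent ingredients: a rank-width lower bound on $G$, a clique-width upper bound on strong products, and a Lipschitz constraint on the path-direction projection. By \Cref{thm:hcwproduct}, $G$ is an induced subgraph of $P \boxtimes M$ for some reflexive path $P^\circ$ and a simple graph $M$ with $\cw(M) \leq \ell$, so it suffices to show $\cw(M) \geq \Omega_k(n)$.

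First, I will show that the rank-width of $G$ is $\Omega(n^2)$ for $n$ sufficiently large in terms of $k$. Any balanced partition $(A, B)$ of $V([n]^3) = V(G)$ with $|A|, |B| \geq n^3/3$ has $\Omega(n^2)$ crossing edges in $[n]^3$ by the standard edge isoperimetric inequality for 3D grids, and because $[n]^3$ has constant maximum degree one can extract from these edges an induced matching of size $\Omega(n^2)$, forcing the bipartite $\mathbb{F}_2$-rank of $[n]^3$ across the cut to be $\Omega(n^2)$. The $k$-perturbation modifies this bipartite matrix by one that is constant on the $k{\times}k$ blocks induced by the partition, hence of $\mathbb{F}_2$-rank at most $k^2$; so for $n$ above a constant depending on $k$, the bipartite rank across the cut in $G$ is still $\Omega(n^2)$. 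Therefore the rank-width of $G$, and a fortiori its clique-width, is at least $\Omega(n^2)$.

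Second, I establish $\cw(P \boxtimes M) \leq |V(P)| \cdot \ell$ by a direct construction: from an $\ell$-color expression for $M$, build $|V(P)|$ parallel copies of $M$ with pairwise disjoint color sets indexed by $V(P) \times [\ell]$, and then add the path-direction edges of the strong product via one add-edges operation per ordered edge of $P$ and per pair of $M$-colors. Third, the projection $\pi: v \mapsto p_v \in V(P)$ is $1$-Lipschitz with respect to $d_G$ and $d_P$, since each edge of $G \subseteq P \boxtimes M$ changes the $P$-coordinate by at most one; hence $\pi(V(G))$ fits in a subpath of $P$ of length at most $\mathrm{diam}(G) + 1$, and we may assume $|V(P)| \leq \mathrm{diam}(G) + 1$. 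For a $k$-perturbation of $[n]^3$, I will argue $\mathrm{diam}(G) \leq O_k(n)$ by showing that every $[n]^3$-edge removed by the perturbation can be $G$-detoured by constantly many steps via a vertex in a suitably chosen class. Combining the three bounds gives $\Omega(n^2) \leq \cw(G) \leq \cw(P \boxtimes M) \leq |V(P)| \cdot \ell \leq O_k(n) \cdot \ell$, hence $\ell \geq \Omega_k(n)$.

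The main technical obstacle will be making the diameter bound for $k$-perturbations fully rigorous; it requires a case analysis on the flip relation $\sim$ on $[k]$ to show that for each removed edge of $[n]^3$ one can always find a short $G$-detour through a vertex whose class sits in the correct kept/flipped status with respect to both endpoints, using the bounded local degree of $[n]^3$ and pigeonhole among the $k$ classes.
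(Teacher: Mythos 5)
Your overall architecture is genuinely different from the paper's and is attractive for its modularity: you black-box the combinatorics into a rank-width lower bound of $\Omega_k(n^2)$ for perturbed 3D grids (balanced cut $\Rightarrow$ isoperimetry $\Rightarrow$ induced matching $\Rightarrow$ identity submatrix, robust to the rank-$\le k^2$ flip matrix), and then play it against the multiplicative upper bound $\cw(P\boxtimes M)\le |V(P)|\cdot O(\ell)$ together with a diameter bound on $|V(P)|$. The paper instead works inside the expression directly (balanced union node, induced matching across it, pigeonhole over the $\le 9m$ parameter vertices and over the $k$ parts). These share their combinatorial core, but your route cleanly separates the three ingredients. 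Two caveats on the upper bound: the construction "build the copies, \emph{then} add the product edges per pair of final $M$-colors" does not work as stated, because the final coloring of an $\ell$-expression for $M$ need not determine adjacency in $M$, and the fiber edges $[p,m][q,m]$ cannot be added at the end at all; you must interleave (add the $P$-fiber edges at vertex-creation time using a temporary color class, and shadow each add-edges operation of $M$'s expression by the corresponding same-row and diagonal additions). This is standard and fixable.

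The genuine gap is your Step 4. The claim that every $k$-perturbation $G$ of the $n\times n\times n$-grid has diameter $O_k(n)$ is \emph{false}: already with $k=5$ one can isolate a corner vertex $v$ (put $v$ alone in $L_1$, each of its three grid-neighbours alone in $L_2,L_3,L_4$, everything else in $L_5$, and flip exactly the pairs $(1,2),(1,3),(1,4)$), so $G$ is disconnected and no case analysis on $\sim$ can rescue the statement for arbitrary partitions. A disconnected or long-diametered $G$ lets the projection $\pi(V(G))$ span an uncontrolled subpath of $P$, which breaks the chain $\Omega_k(n^2)\le |V(P)|\cdot O(\ell)$. The missing idea is the paper's preliminary reduction: iteratively pass to a subgrid $G'$ of size $\Theta_k(n)^3$ on which every part of the partition is either empty or larger than the maximum degree plus one (each iteration kills at least one part, so $O(k)$ rounds suffice). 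Only for such ``large-or-empty'' partitions does the detour argument work --- a flipped grid-edge $uv$ can be rerouted in at most $3$ steps because $u$ still has a non-grid-neighbour in $v$'s part to hop through --- giving diameter $O_k(n)$ for the perturbed subgrid. Since $\ca P^\circ$-clique-width and your rank-width lower bound both survive passing to this induced subgraph, your argument goes through once you insert this reduction; without it, the step you defer as ``the main technical obstacle'' is not merely technical but unprovable.
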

\begin{proof}
  Let $G$ be our $n \times n \times n$-grid.
  Let $H$ be any $k$-perturbation of~$G$.
  Let $\ca L = (L_1, L_2, \ldots, L_k)$ be a partition
  of $V(H) = V(G)$ and $\sim$ a symmetric relation on $[k]$
  witnessing that $H$ is a $k$-perturbation of $G$.
  We say that a pair of parts $(L_i, L_j)$ is \emph{flipped}
  if $i \sim j$, otherwise we say that it is \emph{not flipped}.
  
  Consider a subgraph $G'$ of $G$.
  A part $L_i\in\ca L$ is {\em$V(G')$-empty} if $V(G')\cap L_i=\emptyset$
  and {\em$V(G')$-small} if $|V(G')\cap L_i|\leq12$.
  We use the following simple claim:
  \begin{claim}\label{claim:third-subgrid}
    Let $G_1\subseteq_i G$ be a grid of size $m'\times m'\times m'$, where $m'\geq3m''$.
    If a part $L_i\in\ca L$ is $V(G_1)$-small, then there exists an
    induced subgrid $G_2\subseteq_i G_1$ of size $m''\times m''\times m''$ such that 
    $L_i$ is $V(G_2)$-empty.
  \end{claim}
  \begin{subproof}
    At most 12 of $3^3=27$ pairwise disjoint subgrids of $G_1$ of size $m''\times m''\times m''$ may intersect $L_i$.
  \end{subproof}

  We inductively construct a sequence of graphs $G'_0:=G,$ $G'_1, \ldots, G'_\ell$
  such that, for each $i\geq0$, if some part of $\ca L$ is $V(G'_i)$-small but not $V(G'_i)$-empty,
  then let $G'_{i+1}$ be equal to the graph $G_2$ of \Cref{claim:third-subgrid} called with $G_1=G'_i$.
  Then every part of $\ca L$ is either $V(G'_\ell)$-empty, or not $V(G'_\ell)$-small.

  Moreover, since every step $i$ in the construction of $G'_\ell$ adds a new $V(G'_{i+1})$-empty part, we have $\ell\leq k$,
  and so $G'_\ell$ is an induced subgrid of $G$ of size $m\times m\times m$
  where $m\geq \lfloor n/3^k\rfloor$, that is, $m = \Theta_k(n)$.
  We denote by $G':=G'_\ell$ and by $\ca L'$ the restriction of $\ca L$ to $V(G')$.

  We observe that, there is a symmetric relation $\sim'$,
  such that $\ca L'$ and $\sim'$ witness
  that $H':=H[V(G')]$ is a $k'$-perturbation of $G'$ for some $k' \le k$.
  Note that each nonempty part of $\ca L'$ has size at least $13$.

  Observe that the $m \times m \times m$ grid $G'$ has a maximum degree
  at most 6 and the diameter less than $3m$. We show that $H'$
  has diameter less than $9m$, which follows from:
  \begin{claim}\label{clm:dist3x}
    If $u$ and $v$ are neighbors in $G'$, then the $u$--$v$ distance
    in $H'$ is at most $3$.
  \end{claim}
  \begin{subproof}
    Let $L_i$ be the part containing $u$ and let $L_j$ be the part
    containing $v$. If $(L_i, L_j)$ is not flipped, then
    the claim is trivial.
    Hence, we assume that $(L_i, L_j)$ is flipped.
    Since $u$ has at most $6$ $G'$-neighbors in $L_j$,
    all but at most $6$ vertices of $L_j$ are neighbors of $u$
    in $H'$, that is, $u$ has at least $7$ $H'$-neighbors in $L_j$.
    We analogously obtain that $v$ has an $H'$-neighbor $w$
    in $L_i$, and all but at most $6$ vertices of $L_j$ are neighbors of $w$
    in~$H'$.
    Hence, by pigeon-hole principle, $u$ and $w$ have common
    $H'$-neighbor in $L_j$, or $w$ is an $H'$-neighbor of $u$ if~$i=j$, and
    so the distance between $u$ and $v$ in $H'$ is at most $3$.
  \end{subproof}

  Let $\ell$ be the smallest number such that there is a
  reflexive path $P \in \ca P^\circ$ and a
  $(P, \ell)$-expression $\phi$ valued $H$.
  Since $H'$ is an induced subgraph of $H$,
  there is a $(P', \ell)$-expression $\phi'$ valued $H'$
  for some reflexive subpath $P' \in \ca P^\circ$ of $P$.

  We assume that $P'$ is the shortest possible such path.
  Then, the natural homomorphism from $H'$ to $P'$ is surjective,
  so $P'$ has at most $9m$ vertices because
  the diameter of $H'$ is at most $9m-1$.
  Let $p_1, p_2, \ldots, p_{9m}$ 
  be the vertices of $P'$ in the natural order along the path.

  The following claim is folklore, and we include a proof 
  for the sake of completeness:
  \begin{claim}\label{clm:thirds}
    There is a subexpression $\alpha$
    of $\phi'$ such that $\alpha$ is the disjoint union
    of two smaller subexpressions $\alpha_1$ and $\alpha_2$,
    such that $\alpha_1$ contains between
    $\frac{1}{3}m^3$ and $\frac{2}{3}m^3$ vertices.
  \end{claim}
  \begin{subproof}
    We can see $\phi'$ as a rooted binary tree $T'$ in which
    the nodes with two children correspond to the disjoint union operations,
    leafs correspond to the operations creating a new vertex,
    and nodes with one child correspond to remaining operations.
    We find the node corresponding to a sought subexpression $\alpha$
    as the bottommost union node $a$ of $T'$ such that $\alpha$ contains at
    least $\frac{2}{3}m^3$ vertices.
    Then, since $a$ has two children, in one of the two subexpressions
    there has to be at least $\frac{1}{2}\cdot\frac{2}{3}m^3=\frac{1}{3}m^3$
    vertices (and less than $\frac{2}{3}m^3$).
  \end{subproof}

  Hence, there is a partition $(A, B)$ of $V(H')=V(G')$,
  such that vertices belonging to $A$ are exactly those contained
  within~$\alpha_1$, and $B$ contains the remaining
  vertices. Notably, the size of both 
  $A$ and $B$ is between $\frac{1}{3}m^3$ and $\frac{2}{3}m^3$. 
  From this property, we can already derive the following important claim:

  \begin{claim}\label{claim:linear-matching}
    In the $m \times m \times m\,$-grid $G'$,
    there is an induced matching $M$ between $A$ and $B$
    of size at least $\frac{1}{36\cdot61}m^2$.
  \end{claim}
  \begin{subproof}
    We shortly refer to the parts $A$ and $B$ as to \emph{colors} of $V(G')$;
    that is, $v\in V(G')$ is of color $A$ if $v\in A$, and color $B$ otherwise.
    In the proof we are going to find at least $\frac1{36}m^2$ 2-colored edges in $G'$.
    Then, since the maximum degree in $G'$ is $6$, we can always greedily
    pick a $\frac{1}{61}$-fraction of them to form the sought induced matching~$M$
    -- each picked edge $uw$ ``excludes'' up to $5$ edges incident to $u$,
    up to $5$ edges incident to $w$, and up to $5\cdot(5+5)=50$ more edges incident to the previous edges.

    The grid $G'$ can be clearly vertex-partitioned into $m$ copies of the
    $m \times m\,$-grid, which we further call the \emph{planes} of $G'$,
    and each plane $K$ can be similarly partitioned into copies of the
    $m$-vertex path as into the rows of $K$ or into the columns of $K$, further
    called \emph{lines} of~$G'$.
    A set $X\subseteq V(G')$ is \emph{monochromatic} if $X\subseteq A$ or
    $X\subseteq B$, and $X$ is $\varepsilon$-balanced if at least
    $\varepsilon\cdot|X|$ vertices of it belong to each color $A$ and $B$.
    Observe that every non-monochromatic line of $G'$ contains a 2-colored edge.

    In the first step we show that every $\varepsilon$-balanced,
    $\varepsilon\leq\frac13$, plane $K$ of $G'$ contains $\geq\varepsilon m\>$
    2-colored edges.
    If there are disjoint monochromatic lines in $K$ of opposite colors,
    then all perpendicular lines are non-monochromatic, and hence we already
    have $m$ 2-colored edges.
    Hence, all monochromatic lines, say rows, are of the same color ($A$ or $B$), and
    due to the $\varepsilon$-balanced condition, there are less than
    $(1-\varepsilon)m$ of them.
    So, we get $\geq\varepsilon m\>$ 2-colored edges from the remaining
    non-monochromatic rows.

    In the second step, we finish by examining $\frac16$-balanced planes of~$G'$.
    If there are planes $K_1$ and $K_2$ which are not $\frac16$-balanced, and
    $K_1$ has majority of $A$-color while $K_2$ majority of $B$-color, then
    among the lines perpendicular to them at least $1-\frac26=\frac23$
    fraction of them are non-monochromatic, giving us the needed number
    of 2-colored edges.
    Otherwise, all planes of $G'$ which are not $\frac16$-balanced, have majority in
    the same color, and so, by the property of $V(G')$ being
    $\frac13$-balanced, there are less than $\frac56m$ of them.
    In the remaining at least $\frac16m$ planes which are
    $\frac16$-balanced, we find at least $\frac1{36}m^2$ 2-colored edges by
    the first step argument.
  \end{subproof}

  Let $M$ be the matching of size $|M| \ge \frac{m^2}{36\cdot61}$ from
  \Cref{claim:linear-matching}.
  Recall that there are only $9m$ distinct parameter vertices from~$P'$.
  By the pigeon-hole principle, there is 
  a parameter vertex $p_i$ such that
  at least $\frac{m^2}{36\cdot61}/{9m}=\frac{m}{9\cdot36\cdot61}$
  endpoints of $M$ belonging to $A$ have $p_i$ as their
  parameter vertex. We denote by $A^M_i$
  these vertices, and we denote by $B^M_{i\pm1}$
  the vertices of $B$ adjacent to $A^M_i$ in $M$.

  Observe that the only possible parameter vertices for
  $B^M_{i\pm1}$ are $p_{i-1}$, $p_i$, and $p_{i+1}$.
  Take the expression $\psi$ obtained
  from $\phi'$ by deleting all vertices $v$
  whose parameter vertex $p$ is \emph{not} one of
  $p_{i-1}$, $p_i$, and $p_{i+1}$.
  Let $\alpha'$, $\alpha'_1$, and $\alpha'_2$
  be the subexpressions of $\psi$
  corresponding to $\alpha$, $\alpha_1$, and $\alpha_2$ of \Cref{clm:thirds}.

  Suppose that there are at least $k^2+1$ vertices
  $u_0, u_1, \ldots, u_{k^2} \in A^M_i$
  of the same color in (the value of) $\alpha_1'$.
  Denote by $w_0, w_1, \ldots, w_{k^2} \in B^M_{i\pm1}$ in order their neighbors
  within the matching~$M$.
  By \Cref{def:Hexpression}, vertices of the same color and same parameter
  vertex in $\alpha'_1$ must have the same adjacency to the vertices outside
  $\alpha'_1$; in particular, to the vertices of $B^M_{i\pm1}$.
  Moreover, by pigeon-hole principle, at least two of the vertices
  $u_0, \ldots, u_{k^2}$, say $u_{1}$ and $u_{2}$ up to symmetry,
  belong to the same part of $\ca L'$, and their mates $w_1$ and $w_2$ also
  belong to the same (another) part of $\ca L'$.
  Hence, the vertices $u_{1}$ and $u_{2}$ have the same adjacency to $w_1$
  and $w_2$, which is impossible in a matching or its flip.

  Hence, vertices of $A^M_i$ use at least
  $\frac{1}{9\cdot36\cdot61k^2}\cdot m \in \Omega_k(n)$ distinct colors.
\end{proof}

\begin{corollary}\label{cor:3Dgridsnot}
  Let $\ca C$ be a graph class admitting a classical product structure.
  (Such as, $\ca C$ could be the class of planar graphs, or the class of graphs
  embeddable in a fixed surface.)
  Then, $\ca C$ does not transduce the class of all 3D grids.
\end{corollary}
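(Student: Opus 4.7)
The plan is to argue by contradiction using the two already-established results in sequence. Suppose that $\ca C$ admits (classical) product structure and that, contrary to the claim, there is a first-order transduction $\tau$ such that the class $\ca G$ of all 3D grids satisfies $\ca G \subseteq \tau(\ca C)$. Since every 3D grid is a simple graph, we may assume without loss of generality that $\tau(\ca C)$ is a class of simple graphs (otherwise restrict the output).

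Next, I would invoke \Cref{thm:transd-admit-prod-str-paths} on the class $\ca C$ and the transduction $\tau$. The conclusion is that $\tau(\ca C)$, and hence $\ca G$, is a $k$-perturbation of some class $\ca D$ of bounded $\ca P^\circ$-clique-width; that is, there exist constants $k$ and $\ell$ (depending only on $\tau$ and on the product structure constant of $\ca C$) such that every graph in $\ca G$ is a $k$-perturbation of some graph in $\ca D$, where every graph in $\ca D$ has $\ca P^\circ$-clique-width at most $\ell$. In particular, for every $n$, the $n \times n \times n$-grid is a $k$-perturbation of a graph whose $\ca P^\circ$-clique-width is at most~$\ell$.

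Finally, I would apply \Cref{thm:3Dgridpl} to this same $k$. It states that the $\ca P^\circ$-clique-width of any $k$-perturbation of an $n \times n \times n$-grid is in $\Omega_k(n)$, which tends to infinity as $n \to \infty$. For sufficiently large $n$ this lower bound exceeds the constant $\ell$, contradicting the previous paragraph. Hence no such $\tau$ exists, and $\ca C$ does not transduce the class of all 3D grids. The observation that classes of planar graphs and of graphs embeddable on a fixed surface admit (classical) product structure, recorded in \Cref{thm:admitprod}, then gives the announced special cases.

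There is no real obstacle here: once the two main theorems of the preceding sections are in hand, \Cref{cor:3Dgridsnot} is a direct combination of them, with the only minor bookkeeping being to track that the perturbation parameter $k$ supplied by \Cref{thm:transd-admit-prod-str-paths} is the very same parameter to which \Cref{thm:3Dgridpl} is applied, so that the $\Omega_k(n)$ lower bound is uniform across all sufficiently large $n$ and thus unbounded.
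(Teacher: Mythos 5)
Your proof follows the same route as the paper and is essentially correct; the one step you leave implicit is, however, the only nontrivial observation in the derivation. \Cref{thm:transd-admit-prod-str-paths} gives that the $n\times n\times n$-grid is a $k$-perturbation of some graph $H$ of $\ca P^\circ$-clique-width at most $\ell$. But \Cref{thm:3Dgridpl} lower-bounds the $\ca P^\circ$-clique-width of a graph that is a $k$-perturbation \emph{of} the grid, not of a graph of which the grid is a perturbation. To apply it to $H$ you must first observe that the $k$-perturbation relation is symmetric (involutory): if $G$ is a $k$-perturbation of $H$ via a partition of $V(G)=V(H)$ and a symmetric relation $\sim$, then $H$ is a $k$-perturbation of $G$ via the same partition and $\sim$, because the defining biconditional $uv \in E(G) \iff (uv \in E(H) \iff i\sim j)$ rearranges to $uv \in E(H) \iff (uv \in E(G) \iff i\sim j)$. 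The paper records this explicitly (``Since perturbations are involutory\ldots''); once you add that sentence, your contradiction goes through exactly as you describe.
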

\begin{proof}
  Suppose otherwise. Then, it follows from \Cref{thm:transd-admit-prod-str-paths}
  that there is a graph class $\ca G$ of bounded $\ca P^\circ$-clique-width
  such that a perturbation of $\ca G$ contains the class of all 3D grids.

  Since perturbations are involutory, we get that $\ca G$ is a
  perturbation of the class of all 3D grids, contradicting
  \Cref{thm:3Dgridpl}.
\end{proof}

For our second example, we define a \emph{pinned grid} of order $n$ as the
graph constructed from the $n^2\times n^2\,$-grid on the vertex set
$\{[i,j]|\> 1\leq i,j\leq n^2\}$ by adding a new apex vertex $w$ adjacent
precisely to the grid vertices of the form $[in,jn]$ for $1\leq i,j\leq n$.
Observe that the class of pinned grids is of locally bounded tree-width and
of bounded twin-width.

\begin{theorem}\label{thm:2Dpingridpl}
  For any $k$ and $n$ positive integers,
  the $\ca P^\circ$-clique-width of any $k$-perturbation of
  the graph which is composed of $n$ disjoint copies of the pinned grid of
  order~$n$,  is in $\Omega_k(n)$.
\end{theorem}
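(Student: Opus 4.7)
\medskip
The plan is to adapt the proof of Theorem~\ref{thm:3Dgridpl} to the pinned-grid setting; the $n$ disjoint copies are used specifically to guarantee that at least one apex survives the cleanup step. Let $H$ be a $k$-perturbation of $G=G_1\sqcup\dots\sqcup G_n$ (each $G_i$ a pinned grid of order $n$ with apex $w_i$), with partition $\ca L$ and flip relation $\sim$; let $\phi$ be a $(P^\circ,\ell)$-expression valuing $H$, and our goal is $\ell=\Omega_k(n)$. First, by pigeonhole some part $L_{j^\ast}$ contains at least $m:=\lceil n/k\rceil$ apices, which I relabel as $w_1,\ldots,w_m$. For $n\ge 13k$ (else the claimed bound is trivial), $L_{j^\ast}$ then has at least $13$ vertices in $V(G)$ and may safely be treated as a ``large'' part during the subsequent cleanup.

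\medskip
Following Claim~\ref{claim:third-subgrid} adapted to two dimensions (a $4\times 4$ subdivision replaces the $3\times 3\times 3$ one, since $16>12$), I iteratively restrict the $n^2\times n^2$ subgrid of $G_1$ to a sub-$2D$-grid $G_1'$ of side length $N=\Omega_k(n^2)$, avoiding every part $L_j\ne L_{j^\ast}$ with $|L_j\cap V(G_1)|\le 12$; the apex $w_1\in L_{j^\ast}$ is always kept. Writing $H^\ast:=H[V(G_1')\cup\{w_1\}]$, every nonempty part $\ne L_{j^\ast}$ then intersects $V(H^\ast)$ in at least $13$ vertices. The main obstacle is to bound the diameter of $H^\ast$ by $O(n)$. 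For $G^\ast$-adjacent non-apex pairs Claim~\ref{clm:dist3x} applies directly (non-apex vertices of $G_1$ have degree at most $5$), giving $H^\ast$-distance at most $3d$ for $G^\ast$-distance $d$, and $G^\ast$ has diameter $O(n)$ via the apex. For the apex $w_1$ itself, a short case analysis on $\sim$ shows that either sufficiently many apex--pinned edges of $G_1'$ survive in $H$ (when the relevant parts are not flipped with $L_{j^\ast}$), or sufficiently many apex--non-pinned edges of $G_1'$ appear in $H$ (when they are flipped), so that $w_1$'s $H^\ast$-neighborhood forms an $O(n)$-grid-net in $G_1'$. In the degenerate subcase where $w_1$ becomes (nearly) isolated in $H^\ast$, the flip relation forces the remaining apices $w_2,\ldots,w_m$ to have $m-1$ pairwise distinct $H$-neighborhoods — through the pinned of $G_1$, which become adjacent to all of them by the flip — at pairwise $H$-distance $O(1)$, and this already forces $\ell=\Omega_k(n)$ by pigeonhole on (parameter, color) labels directly.

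\medskip
Granting the $O(n)$-diameter bound, the subpath $P'\subseteq P^\circ$ used by the subexpression of $\phi$ valuing $H^\ast$ has at most $O(n)$ vertices, by the same minimization argument as in the proof of Theorem~\ref{thm:3Dgridpl}. Applying Claim~\ref{clm:thirds} then gives a balanced split $(A,B)$ of $V(H^\ast)$ with $|A|,|B|=\Theta(N^2)$, and the 2D-grid isoperimetric inequality (the 2D analogue of Claim~\ref{claim:linear-matching}, using rows and columns of $G_1'$ in place of planes and lines) yields an induced matching $M\subseteq E(G_1')$ of size $\Omega(N)=\Omega_k(n^2)$ between $A$ and $B$. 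Pigeonhole over the $O(n)$ parameter vertices of $P'$, and then over the $k^2$ possible (part, partner's-part) labels, extracts $\Omega_k(n)$ matching endpoints in $A$ sharing identical such label; these must use pairwise distinct colors by exactly the final argument of the proof of Theorem~\ref{thm:3Dgridpl} (two endpoints with the same color would have identical adjacency to their matching partners, impossible in $M$ or its flip), and so $\ell=\Omega_k(n)$.
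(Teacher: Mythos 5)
Your overall architecture matches the paper's: clean up to a surviving pinned grid whose parts are all large, prove an $O(n)$ diameter bound using the apex as a hub, deduce that the relevant parameter vertices span an $O(n)$ subpath, split at a balanced disjoint-union node, extract an $\Omega_k(n^2)$ induced matching across the cut, and pigeonhole over parameter vertices and parts. The gap is in the diameter step, and it is essential because without the apex hub the grid only has $H$-diameter $O(n^2)$. Your dichotomy for $w_1$'s neighborhood in $H^\ast=H[V(G_1')\cup\{w_1\}]$ is incomplete: it can happen that \emph{every} pinned vertex of $G_1'$ lies in a part that is complemented with $L_{j^\ast}$, that those complemented parts contain no non-pinned vertices of $G_1'$, and that no other part is complemented with $L_{j^\ast}$ --- then $w_1$ has \emph{no} $H$-neighbor in $V(G_1')$ at all (and intermediate cases leave $w_1$ with a neighborhood concentrated in one corner, which is equally useless as a net). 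Your fallback for this ``degenerate subcase'' does not work: the principle that $\Omega_k(n)$ vertices with pairwise distinct neighborhoods at pairwise distance $O(1)$ force $\Omega_k(n)$ colors is false --- induced matchings and half-graphs have all neighborhoods distinct, bounded diameter after adding a dominating vertex, and ($\ca P^\circ$-)clique-width $2$. A color lower bound always has to be routed through a specific subexpression split, and the complete-bipartite-like pattern between $w_2,\ldots,w_m$ and the pinned vertices of $G_1$ is itself of bounded clique-width, so nothing is forced ``directly by pigeonhole on labels''.

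The repair is exactly what the paper does, and you already have the ingredient in hand: you found $m\ge n/k\ge 2$ apices in the same part $L_{j^\ast}$. Do not restrict to $H^\ast$; instead bound distances between vertices of $V(G_1')$ inside the full perturbed graph $H$ (this is all that is needed, since the homomorphism to $P^\circ$ is distance-nonincreasing on all of $H$). Because $w_1$ and $w_2$ lie in the same part and $w_2$ has no $G$-edge into $D_1$, every pinned vertex of $G_1'$ is adjacent in $H$ to $w_1$ \emph{or} to $w_2$, so the pair $\{w_1,w_2\}$ always serves as the hub and the $O(n)$ bound of \Cref{clm:distD1} follows in all cases, with no degenerate branch. (This also fixes a second, smaller instance of the same problem in your proposal: you exempt $L_{j^\ast}$ from the ``at least $13$ vertices in $V(G_1')$'' requirement, but then the analogue of \Cref{clm:dist3x} for a grid edge with an endpoint in $L_{j^\ast}$ must route through apices of other copies, which again live outside your $H^\ast$.) With distances measured in $H$ and the second apex used as in the paper, the rest of your argument goes through.
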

\begin{proof}
We follow similar ideas as in the proof of \Cref{thm:3Dgridpl}.
Let $G=D_1\cup\ldots\cup D_n$ be the disjoint union of copies
$D_1,\ldots,D_n$ of the pinned grid $D$ of order $n$,
and let $H$ be any $k$-perturbation of $G$ witnessed by the partition
$\ca L = (L_1, \ldots, L_k)$ of $V(H) = V(G)$ and relation $\sim$
on~$[k]$.

Let the vertices of $G$ which come as copies of the vertex $w$ of $D$ be
called {apex} vertices of $G$ and the remaining be the grid vertices.
Consider $G'\subseteq G$.
A part $L_i$ of $\ca L$ is \emph{$V(G')$-small} if $L_i$ contains at most $8$
grid vertices of $G'$.
A copy $D_j\subseteq G'$ of $D$ is \emph{$V(G')$-rare} if 
the apex vertex $w_j$ of $D_j$ belongs to a part $w_j\in L_{i'}\in\ca L$
such that $L_{i'}$ contains no other apex vertex of $G'$.
We form $G''\subseteq G$ by iteratively removing from~$G$ the copies $D_j$
of $D$ such that, at the respective iteration $G'\subseteq G$,\, $D_j$
contains a grid vertex of a $V(G')$-small part of $\ca L$ or $D_j$ is $V(G')$-rare.
Observe that in this iterative process, every part of $\ca L$ can become
small at most once, and every removal of a rare copy is counted towards a
different part of $\ca L$.
So, the whole process stops after at most $8k+k$ removal
rounds, removing at most $9k$ copies of $D$ from~$G$.

Consequently, whenever $n>9k$, we have got nonempty $G''\subseteq G$ 
and without loss of generality a pinned grid $D_1\subseteq G''$ of order $n$ such that;
no $V(G'')$-small part of $\ca L$ contains a grid vertex of $D_1$, and $D_1$ is not
$V(G'')$-rare.
Let $H''\subseteq_i H$ denote the corresponding perturbation of $G''$.
Let $w_1$ be the apex vertex of $D_1$.
We get:

\begin{claim}\label{clm:distD1}
Any two vertices of $V(D_1)\setminus\{w_1\}$ have distance in $H''$
at most $18n+4$.
\end{claim}

\begin{subproof}
Let $H^\#$ be $H''$ restricted to the grid vertices of $G''$,
and~$D_1':=D_1-w_1$.
Since the degrees in 2D grids are $\leq4$ and no small part of $\ca L$
intersects $V(D_1')$, we analogously as in
\Cref{clm:dist3x} derive that, for any $u,v\in V(D_1')$
such that $uv\in E(D_1')$, the $u$--$v$ distance in $H^\#$ is at most $3$.

Let $W_1\subseteq V(D_1')$ be the set of the $G''$-neighbors of $w_1$.
Observe that any vertex of $D_1'$ has $D_1'$-distance $\leq2n$ from~$W_1$,
and so $H^\#$-distance $\leq6n$ from~$W_1$ by the previous claim.
Similarly, if we split $W_1$ into two nonempty parts, then the $H^\#$-distance
between the two parts is $\leq6n$, too.
Pick any $u,v\in V(D_1')$.
If none of the edges incident to $w_1$ in $G''$ is flipped in $H''$,
then there is a connection from $u$ to $v$ (via $w_1$) in $H''$ 
of length at most $6n+2+6n=12n+2$.

Further, since $D_1$ is not $V(G'')$-rare, there is an apex vertex
$w_2\in V(G'')$, $w_2\not=w_1$, such that $\{w_1,w_2\}\subseteq L_i\in\ca L$.
Note that $w_2$ is not adjacent to any vertex of $D_1$ in~$G''$.
If all edges incident to $w_1$ in $G''$ are flipped in $H''$,
then $w_2$ is adjacent to all vertices of $W_1$ in $H''$, and we are done as
in the previous paragraph.
Otherwise, every vertex of $W_1$ is in $H''$ adjacent to $w_1$ or $w_2$
(depending on which incident edges are flipped in~$H''$).
Consequently, there is a connection in $H''$ from $u$ via $w_1$ and $w_2$
(up to symmetry) to~$v$ of total length at most $6n+2+6n+2+6n=18n+4$.
\end{subproof}

Continuing with the proof, let $\ell$ be the smallest number such that there is a
reflexive path $P \in \ca P^\circ$ and a $(P, \ell)$-expression $\phi$ valued $H$.
Since $H''$ is an induced subgraph of $H$,
there is a $(P, \ell)$-expression $\phi''$ valued $H''$.
Consider the $(P, \ell)$-expression $\phi_1$ obtained by restricting $\phi''$
to the grid vertex set $X:=V(D_1)\setminus\{w_1\}$; the value of $\phi_1$ being the
subgraph of~$H''$ induced by~$X$.
We have $|X|=n^2\cdot n^2=n^4$.

Since there is a natural homomorphism from $H''$ to $P$, and by
\Cref{clm:distD1}, the parameter vertices of $\phi_1$ span a subpath of $P$
of length at most $18n+4$.

As in \Cref{clm:thirds}, there is a subexpression~$\alpha$
of $\phi_1$ such that~$\alpha$ is the disjoint union
of two smaller subexpressions $\alpha_1$ and $\alpha_2$,
such that $\alpha_1$ contains between
$\frac{1}{3}n^4$ and $\frac{2}{3}n^4$ vertices.
Hence, there is a partition $(A, B)$ of $X$,
such that vertices belonging to $A$ are exactly those contained
within~$\alpha_1$, and $B$ contains the remaining vertices of~$X$.

\begin{claim}\label{claim:linear-matching-B}
  In the $n^2\times n^2\,$-grid $D_1-w_1$ (where $V(D_1-w_1)=X$)
  with the partition $(A, B)$ of $X$,
  there is an induced matching $M$ between $A$ and $B$
  of size at least $\frac{1}{3\cdot25}n^2$.
\end{claim}
\begin{subproof}
  This is the first step of the proof of \Cref{claim:linear-matching}.
\end{subproof}

Recall that there are only $\leq18n+4$ consecutive parameter vertices
of~$P$ used in~$\phi_1$.
By the pigeon-hole principle, there is a vertex $p_i$ of $P$ such that
at least $\frac{n^2}{3\cdot25}/(18n+4)=\Omega(n)$
endpoints of $M'\subseteq M$ belonging to $A$ have $p_i$ as their parameter vertex.
Now, in the same way as in the proof of \Cref{thm:3Dgridpl},
we conclude that the subexpression $\alpha_1$ requires at least
$\Omega_k(n)$ distinct colors to define (within whole $\phi_1$) the
required perturbation of the selected matching $M'$.
This concludes the proof.
\end{proof}

\Cref{thm:2Dpingridpl} straightforwardly implies via
\Cref{thm:transd-admit-prod-str-paths} the following conclusion analogous to
\Cref{cor:3Dgridsnot}:

\begin{corollary}\label{cor:2Dplusnot}
  Let $\ca C$ be a graph class admitting a classical product structure.
  Then, $\ca C$ does not transduce the class of graphs $\ca D=\{G_n: n\in\mathbb N^+\}$
  where $G_n$ is composed of $n$ disjoint copies of the pinned grid of
  order~$n$.
\qed
\end{corollary}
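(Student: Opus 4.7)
The plan is to mimic the short contradiction argument of Corollary~\ref{cor:3Dgridsnot}, substituting Theorem~\ref{thm:2Dpingridpl} for Theorem~\ref{thm:3Dgridpl}. Suppose, for contradiction, that there were a first-order transduction $\tau$ with $\ca D\subseteq\tau(\ca C)$. Since $\ca C$ admits classical product structure and $\ca D$ consists of simple graphs, I would apply Theorem~\ref{thm:transd-admit-prod-str-paths} to conclude that $\tau(\ca C)$, and in particular $\ca D$, is a $k$-perturbation of some class $\ca G$ of bounded $\ca P^\circ$-clique-width, where the constant $k$ depends only on $\tau$ and not on $n$.

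Next I would use that the $k$-perturbation relation is involutory, exactly as invoked in Corollary~\ref{cor:3Dgridsnot}: given a partition of $V(G_n)=V(H)$ and a symmetric relation $\sim$ on $[k]$ witnessing that $G_n$ is a $k$-perturbation of some $H\in\ca G$, the same partition together with the complemented relation witnesses that $H$ is a $k$-perturbation of $G_n$. Consequently, for each $n$ there is a graph $H_n\in\ca G$ which is itself a $k$-perturbation of $G_n$, and the $\ca P^\circ$-clique-width of $H_n$ is bounded by a constant depending only on $\ca G$.

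Finally, I would invoke Theorem~\ref{thm:2Dpingridpl} on $H_n$: since $H_n$ is a $k$-perturbation of $G_n$, its $\ca P^\circ$-clique-width is in $\Omega_k(n)$, which tends to infinity with $n$. This contradicts the constant upper bound from the previous step, finishing the proof.

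Since the corollary is flagged as a ``straightforward'' consequence and follows the exact template of Corollary~\ref{cor:3Dgridsnot}, there is no genuine obstacle; the only point worth double-checking is that the perturbation constant $k$ coming out of Theorem~\ref{thm:transd-admit-prod-str-paths} is uniform in the input graph (it is, depending only on $\tau$), so that Theorem~\ref{thm:2Dpingridpl} can be applied with a fixed $k$ and made to contradict the uniform bound on $\ca P^\circ$-clique-width of $\ca G$.
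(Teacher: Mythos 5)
Your proof is correct and follows exactly the same route the paper takes: invoke Theorem~\ref{thm:transd-admit-prod-str-paths}, use that perturbations are involutory to conclude some class of bounded $\ca P^\circ$-clique-width is a perturbation of $\ca D$, and then contradict Theorem~\ref{thm:2Dpingridpl}. The paper omits the details and simply writes \qed, pointing to the analogy with Corollary~\ref{cor:3Dgridsnot}, and your expansion matches that template precisely.
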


  \begin{toappendix}
In case connectivity of the example as in \Cref{thm:2Dpingridpl} is
desirable, we add a simple modification of it.
Let a \emph{pinned multigrid} of order $n$ be any connected graph
$G^+_n$ such that $G^+_n$ induces $n$ disjoint copies of the pinned
grid of order~$n$;
such as, $G^+_n$ can be simply constructed from the $(n^3+n-1)\times n^2\,$-grid 
by suitably adding $n$ apex vertices.
\Cref{thm:2Dpingridpl} then straightforwardly implies via
\Cref{thm:transd-admit-prod-str-paths} the following conclusion analogous to
\Cref{cor:3Dgridsnot}:

\begin{corollary}\label{cor:2DplusnotII}
  Let $\ca D$ be the class of graphs $G_n$ for $n\in\mathbb N^+$
  where $G_n$ is composed of $n$ disjoint copies of the pinned grid of
  order~$n$,
  and let $\ca D^+$ be a class containing a pinned multigrid of order
  $n$ for each $n\in\mathbb N^+$.
  If $\ca C$ is a graph class admitting a classical product structure,
  then neither of $\ca D$ and $\ca D^+$ is transducible from~$\ca C$.
\qed
\end{corollary}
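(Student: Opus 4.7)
The plan is to reduce both assertions of the corollary to the already established \Cref{cor:2Dplusnot}. The first assertion, concerning $\ca D$, is literally \Cref{cor:2Dplusnot}, obtained by combining the lower bound on $\ca P^\circ$-clique-width from \Cref{thm:2Dpingridpl} with \Cref{thm:transd-admit-prod-str-paths}; so nothing new is needed for that half, and I would simply cite it.

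For $\ca D^+$, the strategy is to exhibit a first-order transduction $\tau_h$ that extracts the embedded copy of $G_n$ from a pinned multigrid $G_n^+$. By the very definition of a pinned multigrid, each $G_n^+ \in \ca D^+$ contains $G_n \in \ca D$ as an induced subgraph on the distinguished vertex subset consisting of the $n$ pinned-grid copies. Passing to an induced subgraph on a marked vertex set is straightforwardly realized as a transduction: at the coloring stage one picks a fresh unary color $C$ marking the vertices to be retained, and at the interpretation stage one sets the domain formula $\omega(x) := C(x)$ while keeping the edge relation unchanged. Choosing the coloring that marks precisely the vertices of the induced $G_n$ inside $G_n^+$ then yields $G_n \in \tau_h(G_n^+)$ for every $n$.

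Assuming for contradiction that $\ca D^+ \subseteq \tau(\ca C)$ for some first-order transduction $\tau$, I would invoke the (standard) fact that first-order transductions are closed under composition, so $\tau_h \circ \tau$ is a transduction, and the previous paragraph then forces $\ca D \subseteq (\tau_h \circ \tau)(\ca C)$, contradicting the first assertion. I do not anticipate any real obstacle beyond the bookkeeping that the coloring used by $\tau_h$ is applied independently of the coloring used by $\tau$, which is exactly what the composition of transductions permits; in particular, no assumption on connectivity of the intermediate graph $G_n^+$ is used, which is why the same argument covers both $\ca D$ and $\ca D^+$ simultaneously.
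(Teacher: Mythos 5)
Your proof is correct and follows the same underlying insight the paper intends (the \qed in the statement signals that the authors regard the proof as immediate): the pinned multigrid $G_n^+$ contains the $n$ disjoint pinned-grid copies $G_n$ as an induced subgraph, so the $\ca D^+$ case reduces to the $\ca D$ case, which is \Cref{cor:2Dplusnot}. You route the reduction through composition of first-order transductions (marking and restricting to the induced $G_n$), which is a perfectly valid and clean way to formalize what the paper would otherwise obtain via heredity of $\ca P^\circ$-clique-width (\Cref{thm:hcwproduct}) and of $k$-perturbations applied inside the proof of \Cref{thm:2Dpingridpl}.
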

  \end{toappendix}

As noted in the introduction, Gajarsk{\'{y}}, Pilipczuk and Pokr{\'{y}}vka
\cite{DBLP:journals/corr/abs-2501-07558} give an alternative proof of 
\Cref{cor:3Dgridsnot} using the concept of so-called \emph{slice decomposition}.
It can be shown that every class of bounded $\ca P^\circ$-clique-width
admits a slice decomposition, but the converse is not true in general.
For example, the class $\ca D$ of \Cref{cor:2Dplusnot} admits a slice decomposition.

\section{Concluding Remarks}
\label{sec:conclu}

In view of \Cref{thm:transd-admit-prod-str-paths} and
\Cref{thm:stablecw-back}, one comes with the following natural question.
Assume $\ca C$ is a monadically stable class admitting a product structure, such as the
class of planar graphs, and $\tau$ is a first-order transduction.
Then $\tau(\ca C)$ is stable by the assumption, and up to bounded
perturbations, $\tau(\ca C)$ is of bounded $\ca P^\circ$-clique-width $\leq\ell$.
(Recall that $\ca P^\circ$ denotes the class of reflexive paths.)
Does this mean that the perturbed graphs of $\tau(\ca C)$ always have stable 
$(P^\circ, \ell)$-expressions?
This may not be an easy question since, in an analogous instance, 
it is known that there are classes of bounded-degree planar graphs which do
not admit a product structure with bounded maximum
degrees of the factors~\cite{DBLP:journals/combinatorics/DujmovicJMMW24}.

While we are still investigating this question, we propose:
\begin{conjecture}\label{conj:expression-stable}
  There is a function $g: \mathbb{N} \to \mathbb{N}$
  such that each $(\ca Q_r^\circ, \ell)$-expressions constructed in
  the proof of \Cref{lem:transductions} is $g(\ell)$-stable.
\end{conjecture}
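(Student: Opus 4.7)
The plan is to combine a pigeon-hole argument on vertex color histories with the type-based semantics of the colors $\hat c_{v,t}$ from \Cref{def:exprcolors} to show that a large bi-induced half-graph in the value of the deparameterized expression forces a combinatorial contradiction.

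Suppose for contradiction that the value of the deparameterization of our constructed $(Q_r^\circ,\ell)$-expression contains a bi-induced half-graph $A=\{a_1,\ldots,a_N\}$ and $B=\{b_1,\ldots,b_N\}$ with $a_ib_j$ an edge iff $i\le j$, for $N=g(\ell)$ to be chosen later. Since in the deparameterization every edge-addition operation between colors $c_1,c_2$ produces edges between \emph{all} pairs with those colors (the loop on $K_1^\circ$ removes any parameter-vertex restriction), each edge $a_ib_j$ of the half-graph must have been generated by some edge-addition at a definite stage $t(i,j)$ with a definite pair of colors. Pigeon-holing on this label should yield a large sub-half-graph $A'\subseteq A$ and $B'\subseteq B$ whose internal edges are all generated by a single operation at a common stage $t^\star$ with fixed colors $c_1^\star, c_2^\star$. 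Any two vertices of $A'$ that are both alive at stage $t^\star$ and carry color $c_1^\star$ must then share identical adjacencies to every vertex of $B'$ alive at stage $t^\star$ with color $c_2^\star$, forcing the bipartite pattern on a large sub-piece of $A'$ versus $B'$ to be uniform and contradicting the strict half-graph structure. Stages where certain vertices are not yet present or temporarily carry different colors are to be handled by further layers of pigeon-hole on the creation stage of each vertex and on its color at each ancestor subexpression of $t^\star$.

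The principal obstacle is to bound the total pigeon-hole blow-up by a function of $\ell$ alone. The number of possible color timelines a vertex can experience is at most $\ell^D$, where $D$ is the number of ancestor subexpressions containing it, and $D$ is not a priori bounded by $\ell$ (it is governed by the tree decomposition of $M$). To surmount this, I would exploit that the colors in our construction are type-based: each $\hat c_{v,t}$ captures a bounded-quantifier-rank type of $v$ together with its relevant neighbors, so the number of genuinely distinct attainable timelines should be governed by the number of distinct such types, which is in turn bounded by a function of $\ell$. Making this formal will likely require a Ramsey-style refinement or an appeal to general tools about bounded-clique-width constructions equipped with type-coherent colorings. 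This is the substantial technical hurdle and may well be the reason the authors leave the statement as a conjecture rather than a theorem.
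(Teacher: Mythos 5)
This statement is \Cref{conj:expression-stable} and the paper does \emph{not} prove it; the authors explicitly label it a conjecture, state they are ``still investigating this question,'' and leave it open. There is therefore no paper proof to compare your attempt against, and a complete argument from you would constitute new mathematics, not a reconstruction.

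Your sketch has, to your credit, an honestly acknowledged gap, and that gap is precisely the crux of why the statement is open. Two remarks on the specifics. First, the opening pigeon-hole move is not sound as stated: you cannot pigeon-hole the edges $a_ib_j$ on the \emph{stage} $t(i,j)$ at which they are created, because the number of stages in the expression is $\Theta(|V(G)|)$ and hence unbounded in $\ell$. Second, and more fundamentally, any argument that proceeds purely from the combinatorics of ``vertices with equal color at a union node have equal adjacency to later-added vertices'' cannot close, because bi-induced half-graphs of arbitrary order are themselves values of $2$-color clique-width expressions. So bounded clique-width alone gives no stability at all, and the conclusion must genuinely exploit the type-based content of the running colors $\hat c_{v,t}$ from \Cref{def:exprcolors}. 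You gesture at this in the final paragraph, but the intended mechanism is not specified; the difficulty is that the colors are $g^{\max}$-types with respect to weakly-reachable separators, and one would need to show that an $a$--$b$ half-graph pattern in $\xi(G)$ forces unboundedly many distinct such types even when the underlying $G$ is (monadically) stable, which is a substantive combinatorial-logical claim. In short, what you have is an outline of the natural first attack, together with a correct diagnosis of where it fails; that diagnosis aligns with why the authors leave the statement as a conjecture rather than a theorem.
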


Stepping further, we define that a class $\ca C$ is of 
\emph{bounded~ex\-pression-stable} $\ca H$-clique-width if there is $k$ such
that every graph of $\ca C$ is the value of a $k$-stable $(H,k)$-expression
for~$H\in\ca H$.
A positive answer to \Cref{conj:expression-stable} would, together with
Theorems \ref{thm:transd-admit-prod-str-paths} and
\ref{thm:stablecw-back}, imply that classes first-order transducible from
classes admitting a product structure are exactly perturba\-tions of
those of bounded expression-stable \mbox{$\ca P^\circ$-clique-width}.

Previous would also imply validity of the next conjecture:
\begin{conjecture}\label{conj:preserving}
  The property of a graph class -- to be a perturbation of a class of
  bounded expression-stable $\ca P^\circ$-clique-width,
  is preserved under taking first-order transductions.
\end{conjecture}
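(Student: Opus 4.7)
The plan is to derive \Cref{conj:preserving} as a relatively direct consequence of \Cref{conj:expression-stable} by sandwiching an arbitrary transduction between \Cref{thm:stablecw-back} and \Cref{thm:transd-admit-prod-str-paths}. Fix a class $\ca C$ which is a $k$-perturbation of a class of bounded expression-stable $\ca P^\circ$-clique-width, and an arbitrary first-order transduction $\tau$. The goal is to show that $\tau(\ca C)$ is again a perturbation of a class of bounded expression-stable $\ca P^\circ$-clique-width.

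First, I would invoke \Cref{thm:stablecw-back} with $\ca Q$ taken as the class of paths (so $\ca Q^\circ=\ca P^\circ$), whose hypothesis is satisfied by our assumption on $\ca C$. It yields a class $\ca D$ admitting the classical (i.e., $\ca P$-)product structure and a first-order transduction $\sigma$ such that $\ca C\subseteq\sigma(\ca D)$. Consequently $\tau(\ca C)\subseteq(\tau\circ\sigma)(\ca D)$, with $\tau\circ\sigma$ again a first-order transduction. Next I would apply \Cref{thm:transd-admit-prod-str-paths} to $\ca D$ with the transduction $\tau\circ\sigma$: this gives that $(\tau\circ\sigma)(\ca D)$ is a perturbation of some class $\ca X$ of bounded $\ca P^\circ$-clique-width, where the witnessing $\ca P^\circ$-expressions are precisely those produced by the construction inside \Cref{lem:transductions} (modulo the reduction from $P_r^\circ$ to $P^\circ$ performed inside the proof of \Cref{thm:transd-admit-prod-str-paths}). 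Granting \Cref{conj:expression-stable}, every such expression is $g(\ell)$-stable, and so $\ca X$ has bounded expression-stable $\ca P^\circ$-clique-width. Since $\tau(\ca C)\subseteq(\tau\circ\sigma)(\ca D)$, each graph of $\tau(\ca C)$ is a perturbation of some graph of a subclass of $\ca X$ of the same bounded expression-stable $\ca P^\circ$-clique-width, which closes the argument.

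The main obstacle is of course \Cref{conj:expression-stable}, which must be established as a prerequisite. Roughly, one would argue by contradiction: a bi-induced half-graph of unbounded order in the deparameterization of some $\phi_t$ built in \Cref{lem:transductions} would, via the $g^{\max{}}$-types encoded in the colors $\hat c_{v,t}$ (see \Cref{def:exprcolors}), correspond to an ordered configuration of vertex pairs in the $2(d^{3r}+1)$-colored graph $G^s$, which then must be traced back to a forbidden stable configuration inside the underlying factor $M$; since $M$ has bounded tree-width and thus is monadically stable, the contradiction should follow. Making this transfer from stability of the expression to stability of the underlying product fully rigorous is the missing technical step. A minor subsidiary issue is to check that the reduction from $(P_r^\circ,\ell)$-expressions to $(P^\circ,3r\ell)$-expressions in the proof of \Cref{thm:transd-admit-prod-str-paths} preserves stability of the deparameterization; this should be relatively straightforward because that reduction only relabels parameter vertices and refines colors on the same underlying vertex set, so a half-graph in the new deparameterization would project down to one of comparable order in the old.
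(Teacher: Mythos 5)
The statement you are trying to prove is labelled as a \emph{conjecture} in the paper, and the paper offers no proof of it. What the paper does say, in the sentence immediately preceding \Cref{conj:preserving}, is precisely that a positive resolution of \Cref{conj:expression-stable}, combined with \Cref{thm:transd-admit-prod-str-paths} and \Cref{thm:stablecw-back}, would imply it. Your proposal reproduces exactly that implication: \Cref{thm:stablecw-back} takes the class $\ca C$ back to a class $\ca D$ admitting product structure, \Cref{thm:transd-admit-prod-str-paths} pushes $\tau\circ\sigma$ forward to a perturbation of a class of bounded $\ca P^\circ$-clique-width, and \Cref{conj:expression-stable} would upgrade ``bounded'' to ``bounded expression-stable''. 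So at the structural level your reasoning matches what the authors themselves indicate, and the application of \Cref{thm:stablecw-back} to the hypothesis is checked correctly.

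However, this is not a proof of \Cref{conj:preserving}; it is a conditional derivation that inherits the open status of \Cref{conj:expression-stable}. The sketch you offer for \Cref{conj:expression-stable} --- tracing a hypothetical bi-induced half-graph in the deparameterization of $\phi_t$ back through the type-encoding colors $\hat c_{v,t}$ to a stable configuration in $M$ --- is exactly the missing technical step that the paper declares it is ``still investigating'', and as written it does not close the gap: nothing in your sketch explains why the half-graph arising from the running colors, which encode $g^{\max}$-types of arbitrary separator tuples in $G^s$, must witness a half-graph inside $M$ rather than inside the (possibly unstable) colored product. A second, smaller, issue is the claim that the $P_r^\circ$-to-$P^\circ$ rewriting inside the proof of \Cref{thm:transd-admit-prod-str-paths} preserves $k$-stability ``because it only relabels parameter vertices and refines colors''. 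That rewriting also replaces each edge-adding operation by several operations indexed by residues modulo $3r$, so the deparameterization of the new expression is a proper subgraph of (a re-coloring of) the old one, and a half-graph in the new deparameterization does not automatically project to one in the old --- the suppressed edges could return and destroy the bi-induced property. The correct way to see it is to pigeon-hole the two sides of the putative half-graph onto a fixed pair of residue classes, where the two deparameterizations agree or disagree entirely; your stated justification skips this and should be replaced by such an argument.
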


\Cref{conj:preserving} actually formulates a strengthening of
\Cref{thm:transd-admit-prod-str-paths} in two ways;
not only the conclusion adds expression-stable $\ca P^\circ$-clique-width,
but also the assumption on the class $\ca C$ is weakened from admitting a product
structure to having bounded expression-stable $\ca P^\circ$-clique-width.
An analogous strengthening -- to assume a class $\ca C$ of bounded
expression-stable $\ca Q^\circ$-clique-width, would also be interesting in
\Cref{thm:transd-admit-prod-str-bound-degree} for arbitrary bounded-degree
class~$\ca Q$.

Regarding further possibilities to strengthen
\Cref{thm:transd-admit-prod-str-bound-degree}, we repeat that the assumption
of having a class $\ca Q$ of bounded maximum degree indeed is unavoidable;
\cite{DBLP:conf/mfcs/HlinenyJ24}
the class of strong products of two stars is monadically independent.

\smallskip
We finish with two questions and a remark (\Cref{prop:sparse-admit-prod-str-paths})
reaching slightly beyond the scope of the results here.

Firstly, we say that a class $\ca C$ is of
\emph{bounded expression-sparse} $\ca H$-clique-width if there is $k$ such
that every graph of $\ca C$ is the value of an $(H,k)$-expression $\phi$
for~$H\in\ca H$ such that the value of the deparameterization $\phi_1$ of $\phi$ is $K_{k,k}$-free.

\begin{question}\label{que:expression-sparse}
Is it true that if a graph class $\ca C$ is of bounded $\ca
P^\circ$-clique-width and $\ca C$ excludes bi-induced $K_{t,t}$
for some~$t$, then $\ca C$ is also
of bounded expression-sparse $\ca P^\circ$-clique-width?
\end{question}
Note that the value of the deparameterization $\phi_1$ being $K_{k,k}$-free
is equivalent to this value being of bounded tree-width.
Hence, in \Cref{que:expression-sparse},
for a class $\ca C$ being of bounded expression-sparse 
$\ca P^\circ$-clique-width is equivalent (cf.~\Cref{thm:hcwproduct}) to 
$\ca C$ admitting product structure.
Being the answer to \Cref{que:expression-sparse} yes,
we would get kind of a `sparse fragment' of \Cref{thm:transd-admit-prod-str-paths}
with the conclusion featuring a product structure as well.

Similarly as with the previous question about expression-stable 
$\ca P^\circ$-clique-width, an answer to \Cref{que:expression-sparse} is not
immediate (cf.~\cite{DBLP:journals/combinatorics/DujmovicJMMW24}), 
and we are not even sure whether this answer should be yes or no.
Nevertheless, a very new result of 
Gajarsk\'y, G{\l}adkowski, Jedelsk\'y, Pilipczuk and
Toru\'nczyk~\cite{DBLP:journals/corr/abs-2505.15655}
allows us to derive a `sparse fragment' of \Cref{thm:transd-admit-prod-str-paths}
in a different way:

\begin{proposition}\label{prop:sparse-admit-prod-str-paths}
  Let $\ca C$ be a class of graphs admitting product
  structure. Let $\tau$ be a first-order transduction
  such that $\tau(\ca C)$ is a class of simple $K_{t,t}$-free
  graphs for some fixed integer~$t$.
  Then, the class $\tau(\ca C)$ ``nearly admits'' a product structure,
  meaning that there is an integer $k$ such that every graph of $\tau(\ca C)$,
  after deleting at most $k$ of its vertices,
  is a subgraph of the strong product of a path and a graph of
  tree-width at most~$k$.
\end{proposition}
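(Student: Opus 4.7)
The plan is to combine \Cref{thm:transd-admit-prod-str-paths} of this paper with the recent sparsification theorem of Gajarsk\'y, G{\l}adkowski, Jedelsk\'y, Pilipczuk and Toru\'nczyk~\cite{DBLP:journals/corr/abs-2505.15655}, together with \Cref{thm:hcwproduct} as a translation between the clique-width and the product-structure languages.

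First, I would apply \Cref{thm:transd-admit-prod-str-paths} to the transduction $\tau$ and the class $\ca C$, obtaining that $\tau(\ca C)$ is a $k_0$-perturbation of a class $\ca G$ of bounded $\ca P^\circ$-clique-width. Unpacking via \Cref{thm:hcwproduct}, each graph $G'\in\ca G$ is an induced subgraph of a strong product $P\boxtimes M$ where $P$ is a path and $M$ has clique-width at most some fixed constant $\ell$. Thus, up to a bounded perturbation, $\tau(\ca C)$ already admits a ``dense version'' of the desired product structure, the only discrepancy being that the right-hand factor $M$ has bounded clique-width rather than bounded tree-width.

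Second, I would plug in the $K_{t,t}$-freeness hypothesis. The result of~\cite{DBLP:journals/corr/abs-2505.15655} is tailored exactly for turning such dense structural descriptions into sparse ones under a $K_{t,t}$-subgraph-free constraint: informally, it allows the $k_0$-perturbation to be ``undone'' and the bounded clique-width of $M$ to be replaced by bounded tree-width, at the uniform cost of deleting a bounded apex set from each graph of the class. Applying it to $\tau(\ca C)$, using the product-structured upper envelopes from the first step as the input description, yields an integer $k$ and, for every $G\in\tau(\ca C)$, a vertex set $S\subseteq V(G)$ of size at most $k$ so that $G-S$ is a subgraph of $P\boxtimes M^*$ with $M^*$ of tree-width at most $k$, which is exactly the conclusion sought.

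The main obstacle is to formulate the invocation of~\cite{DBLP:journals/corr/abs-2505.15655} in a form precisely fitting our setting: one must verify that the external sparsification applies to bounded perturbations of induced subgraphs of $P\boxtimes M$ (and not just to some canonical ambient structure), and that the resulting apex-set size and tree-width bound depend only on $\ca C$, $\tau$, and $t$, i.e., are uniform over $\tau(\ca C)$. Once this technical matching is carried out, no further combinatorial argument is required and the proposition follows directly.
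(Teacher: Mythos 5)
Your proposal has a genuine gap, and in fact takes an approach that does not match how the cited result of Gajarsk\'y, G{\l}adkowski, Jedelsk\'y, Pilipczuk and Toru\'nczyk actually applies. The paper's proof does \emph{not} go through \Cref{thm:transd-admit-prod-str-paths} or \Cref{thm:hcwproduct} at all. Instead, it uses the GGJPT sparsification theorem directly on the bounded-expansion class $\ca C$: since $\ca C$ admits product structure it has bounded expansion, and since $\tau(\ca C)$ is $K_{t,t}$-free, GGJPT gives that $\tau(\ca C)$ is contained in the class of \emph{congestion-$c$ depth-$c$ minors} of $\ca C^\bullet$, where $\ca C^\bullet$ is $\ca C$ with a universal vertex added to each graph. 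The universal vertex is then handled by deleting the at most $c$ branch sets that contain it (which gives the apex set $S$), leaving a congestion-$c$ depth-$c$ minor of $G\subseteq P\boxtimes M$, hence a depth-$c$ minor of $P\boxtimes M\boxtimes K_c$. The final, essential step is the result of Hickingbotham and Wood, which says that depth-$c$ minors of $P\boxtimes M\boxtimes K_c$ are subgraphs of $P\boxtimes M'$ with $M'$ of bounded tree-width.

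Your description of what GGJPT delivers --- that it ``allows the $k_0$-perturbation to be undone and the bounded clique-width of $M$ to be replaced by bounded tree-width, at the uniform cost of deleting a bounded apex set'' --- is not what the theorem says; it is a statement about shallow minors of sparse classes, not about repairing dense product-structure descriptions. Moreover, the $K_{t,t}$-freeness is a property of $\tau(G)$, not of the factor $M$ you obtain from \Cref{thm:hcwproduct}, so there is no mechanism to convert bounded clique-width of $M$ into bounded tree-width along the lines you sketch: the factor $M$ in the induced-subgraph representation can be completely unrelated to the sparsity of $\tau(G)$. Finally, your route has no analogue of the Hickingbotham--Wood step, which is what actually converts shallow minors of strong products back into strong products with a bounded-tree-width factor. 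Without that ingredient, even a correct invocation of GGJPT would leave you with a shallow-minor description rather than the claimed product structure.
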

\begin{proof}
Consider a graph $G$ and a collection $\ca A\subseteq2^{V(G)}$ of vertex
subsets such that, for each $A\in\ca A$, the induced subgraph $G[A]$ is connected.
Then $\ca A$ is called a {congestion-$c$ depth-$d$ minor model} of a
graph $H$ on $V(H)=\ca A$ if:
(i) every induced subgraph $G[A]$, $A\in\ca A$ is of radius~$\leq d$,
(ii) for every vertex $v\in V(G)$, at most $c$ sets of $\ca A$ contain~$v$,
and (iii) for every edge $AB\in E(H)$ (where $A,B\in\ca A$), the sets $A$
and $B$ \emph{touch} in $G$ -- meaning that $A\cap B\not=\emptyset$ or some
two vertices $a\in A$ and $b\in B$ are adjacent in~$G$.
A graph $G_1$ is a \emph{congestion-$c$ depth-$d$ minor} of $G$ if
there exists a congestion-$c$ depth-$d$ minor model in $G$ isomorphic to~$G_1$.

By our assumptions,
the class $\ca C$ (as admitting product structure) is of bounded expansion
and the class $\tau(\ca C)$ is $K_{t,t}$-free.
Therefore, by Gajarsk\'y, G{\l}adkowski, Jedelsk\'y, Pilipczuk and
Toru\'nczyk~\cite{DBLP:journals/corr/abs-2505.15655}, the following holds;
there exists $c\in\mathbb N$ such that~$\tau(\ca C)$ is contained in the
class of congestion-$c$ depth-$c$ minors of the class $\ca C^\bullet$,
where $\ca C^\bullet$ is the class obtained by adding a universal vertex to
every graph of~$\ca C$.

Let $H\in\tau(\ca C)$ which is a congestion-$c$ depth-$c$ minor 
of $G^\bullet\in\ca C^\bullet$.
Let $G\subseteq G^\bullet$ be the graph of $\ca C$ from which $G^\bullet$ 
was constructed by adding a universal vertex~$u$.
Let $X\subseteq V(H)$ where $|X|\leq c$ be the set of vertices of $H$
whose sets in the minor model contain~$u$.
We are going to show that $H_1:=H-X$ is a subgraph of the strong product of a path 
and a graph of tree-width at most~$k$ ($k$ independent of~$H$).

First, $H_1$ is a congestion-$c$ depth-$c$ minor of~$G\in\ca C$,
and $G\subseteq P\boxtimes M$ where $P$ is a path and $M$ is of tree-width~$\leq m$.
Then, clearly, $H_1$ is a depth-$c$ minor 
of~$G\boxtimes K_c\subseteq P\boxtimes M\boxtimes K_c$.

Second, Hickingbotham and Wood~\cite{DBLP:journals/siamdm/HickingbothamW24}
proved (their result was actually more general)
that depth-$c$ minors of graphs of the form $P\boxtimes M\boxtimes K_c$
are subgraphs of $P\boxtimes M'$ where $M'$ is of tree-width bounded 
only in terms of our $m$ and $c$.
This concludes the proof.
\end{proof}

Lastly, inspired by \cite{DBLP:journals/tocl/GajarskyHOLR20},
we conclude with an algorithmic question:
\begin{question}
Let $\ca C$ be the class of planar graphs (or a class admitting a product
structure), and $\tau$ be a first-order transduction.
Can we find a first-order transduction $\tau'$ (of $\ca C$) subsuming $\tau$
such that, for every graph $G\in\tau(\ca C)$ there is 
an efficiently computable preimage $G'\in\ca C$ such that $G \in \tau'(G')$?
\end{question}

\bibliography{Hcw}

\end{document}